\newcounter{lofdepth}
\newcounter{lotdepth}
\pgfplotsset{compat=newest}
\pgfplotsset{plot coordinates/math parser=false}
\newlength\figureheight
\newlength\figurewidth
\DeclareMathOperator{\sign}{sign}
\DeclareMathOperator{\supp}{supp}
\DeclareMathOperator*{\argmin}{arg\,min}
\DeclareMathOperator{\sgn}{sgn}
\DeclareMathOperator{\var}{\mathbb{V}}
\renewcommand{\Im}{\operatorname{Im}}
\renewcommand{\Re}{\operatorname{Re}}
\renewcommand{\Pr}{\mathbb{P}}
\renewcommand{\var}{\operatorname{var}}
\newtheorem{theorem}{Theorem}[section]
\newtheorem{lemma}[theorem]{Lemma}
\newtheorem{remark}[theorem]{Remark}
\newtheorem{definition}[theorem]{Definition}
\title{Uncertainty quantification for sparse Fourier recovery}
\author{Frederik Hoppe \thanks{Chair of Mathematics of Information Processing, RWTH Aachen University, Aachen, Germany.} \and Felix Krahmer \thanks{Department of Mathematics and Munich Data Science Institute, Technical University of Munich; Munich Center for Machine Learning, Munich, Germany} \and Claudio Mayrink Verdun \thanks{Department of Mathematics and Department of Electrical and Computer Engineering, Technical University of Munich; Munich Center for Machine Learning, Munich, Germany} \and Marion I. Menzel \thanks{AImotion Bavaria, Faculty of Electrical Engineering and Information Technology, Technische Hochschule Ingolstadt, Ingolstadt, Department of Physics, Technical University of Munich, Garching and GE Healthcare, Munich, Germany.} \and Holger Rauhut \thanks{Department of Mathematics, LMU, Munich,
Germany.}}
\date{\small \today}
\begin{document}
\maketitle
 
% \tableofcontents

\begin{abstract}
One of the most prominent methods for uncertainty quantification in high-dimen-sional statistics is the desparsified LASSO that relies on unconstrained $\ell_1$-minimiza-tion. The majority of initial works focused on real (sub-)Gaussian designs. However, in many applications, such as magnetic resonance imaging (MRI), the measurement process possesses a certain structure due to the nature of the problem. The measurement operator in MRI can be described by a subsampled Fourier matrix. The purpose of this work is to extend the uncertainty quantification process using the desparsified LASSO to design matrices originating from a bounded orthonormal system, which naturally generalizes the subsampled Fourier case and also allows for the treatment of the case where the sparsity basis is not the standard basis. In particular we construct honest confidence intervals for every pixel of an MR image that is sparse in the standard basis provided the number of measurements satisfies $n \gtrsim\max\{ s\log^2 s\log p, s \log^2 p \}$ or that is sparse with respect to the Haar wavelet basis provided a slightly larger number of measurements. 
\end{abstract}

\section{Introduction}

The last decades have seen an explosion of computationally tractable and highly efficient methods for dealing with high-dimensional data. The premise of these methods is that the information contained in many natural data sets relies on statistics of much lower dimension than the original ambient dimension of the data. In many cases, the algorithms to retrieve these data points rely on the fact that they have a parsimonious representation. These ideas form the basis of a change of paradigm in statistics, signal processing and machine learning and became known as \emph{Sparse Regression} (SR) \cite{wainwright2019high,tibshirani1996regression} in the statistical literature or \emph{Compressive Sensing} (CS) \cite{donoho2006compressed, Foucart.2013} in the signal processing literature. The general goal is to estimate a reduced set of dominant components of a high-dimensional object such as a vector, a matrix, or a function, as well as the associated coefficients based on the data.
From the practical point of view, the great success that such techniques have in fields like image processing \cite{adcock2021compressive}, seismology \cite{herrmann2012fighting, oghenekohwo2017low}, medical imaging \cite{lustig2008compressed, ye2019compressed} and radar \cite{ender2010compressive}, just to name a few, illustrates its tremendous impact as a scientific discipline. 

The theoretical underpinning of this field is based on multiple pillars. Firstly, a large number of works concentrate on point estimation (variable selection), aiming to establish sharp oracle inequalities and reconstruction guarantees for the estimators \cite{bunea2007sparsity, SaraA.vandeGeer.2009, bickel2009simultaneous, raskutti2011minimax, bellec2018slope, Buhlmann.2011, verzelen2012minimax}. Secondly, it is important to design and learn efficient representation systems giving rise to sparsity for the high-dimensional objects \cite{tovsic2011dictionary, dumitrescu2018dictionary}. And a third object of intensive study has been the measurement operator used to retrieve the data (also known as the design matrix in the statistical literature), often under constraints arising from applications \cite{baraniuk2008simple}.

Most of this theory, however, is based on a priori understanding of the measurement noise and how it affects high-dimensional estimator. In particular, despite the triumph of sparse recovery techniques in many fields where they are already used at an industrial level \cite{donoho2018blackboard}, a framework that provides uncertainty quantification \cite{smith2013uncertainty} for guiding decision-making in certain critical applications is still missing. Such an understanding would be particularly important for Compressive Sensing applications in the field of medical imaging, such as Magnetic Resonance Imaging (MRI), one of the main motivations behind CS theory \cite{candes2006near, donoho2006compressed, lustig2008compressed}. Since reliable medical imaging procedures are pivotal for accurate interpretation and diagnostic tasks, it is of fundamental importance to develop a theory that quantifies the quality of such images based on Sparse Regression. 

Even for (sub-)Gaussian designs, which are typically easier to analyze, the important question \emph{given a solution estimator of an underdetermined sparse regression problem, how can one quantify its confidence levels?} remained open for a long period. In the classical setting, where the number of available samples $n$ is much higher than the ambient dimension $p$, in contrast, the exact or asymptotic distribution of the retrieved solution is known for many noise distributions \cite{lehmann2006theory}. Therefore, procedures for uncertainty quantification (UQ) are largely available \cite{lehmann2005testing}.

The reason why these methods do not carry over to the underdetermined case in high dimensions is that good estimators for such problems are basically always non-linear and it is hard to obtain a tractable characterization of the probability distribution of the parameters of interest for such non-linear estimators \cite{fu2000asymptotics, potscher2009distribution}.

In particular, this is the case for estimators based on the $\ell_1$-norm, such as the LASSO \cite{tibshirani1996regression, tibshirani2011regression} and the quadratically-constrained basis pursuit \cite{van2009probing}, which are arguably the most well-studied and simplest class of approaches for variable selection in underdetermined high-dimensional problems. More precisely, the sparsity enforcement of the selection and shrinkage operations has been shown to introduce a bias that distorts the probability distribution in a non-linear way \cite{zhang2008sparsity}. This bias, together with the intractability of the distribution of its solution, makes the construction of confidence intervals and the performance of hypothesis testing very challenging.

Recently, a series of papers initiated a desparsified approach to sparse regression \cite{Zhang.2014, Javanmard.2014, Javanmard.2018, vandeGeer.2014}. This technique is able to characterize the distribution of a modified estimator based on the KKT conditions of the LASSO solution. For this modified estimator, sharp confidence intervals were derived for variable selection. For i.i.d. (sub-)Gaussian designs with known covariance, by assuming that the object to be retrieved is $s$-sparse, the weakest known constraint on the admissible sparsity level is  $s = o(n / \log^2 p)$ \cite{Javanmard.2018}, getting very close to the provably optimal threshold for sparse estimation without uncertainty quantification, namely $s = o(n / \log p)$ \cite{wainwright2019high}. 

Although these results provided fundamental theoretical insight for the theory of uncertainty quantification in the high-dimensional regime, such fully random matrices are of limited practical use in signal processing. In MRI, for example, the measurement process is highly structured and can be described by a (subsampled) Fourier operator imposed by the physics behind the acquisition procedure \cite{zhi2000principles}. Moreover, structured matrices often allow for faster algorithmic processing by exploiting the fast Fourier transform (FFT) for matrix-multiplication as well as efficient storage. 
The only paper that considers a more general class of designs admitting such structured examples is the groundbreaking analysis provided in \cite{vandeGeer.2014}. The wider range of applicability of their approach, however, comes at the expense of a suboptimal constraint on the sparsity level of $ s = o (\sqrt{n}/\log p)$, hence somewhat limiting the applicability of this analysis for the subsampled Fourier scenario.

\subsection{Our contribution}

In this work, we address the described limitations of previous works establishing honest confidence intervals, in the sense of \cite{li1989honest}, for the desparsified LASSO estimator with matrices generated from bounded orthonormal systems (BOS) in the near-optimal regime $n \gtrsim \max\{ s\log^2 s\log p, s \log^2 p \}$. Our results can be applied to MRI problems, where a particular instance of BOS, namely, a subsampled Fourier matrix, is used. Our main result takes the following form

\begin{theorem}\label{thm:informal2} (Informal version) Let $\frac{1}{\sqrt{n}} X$ be a normalized random sampling matrix associated to a Bounded Orthonormal System with constant $K\geq 1$, defined in Definition \ref{def:matrix_BOS}, and a sample covariance matrix $\hat{\Sigma}=X^*X/n$. Suppose that the data $y \in \mathbb{C}^n$ is given by the model $y=X\beta^0+\varepsilon$, with $\varepsilon \sim\mathcal{CN}(0,\sigma^2 I_{n\times n})$, where $\beta^0$ is assumed to be an $s$-sparse vector. For n $ \gtrsim \max\{ s\log^2 s\log p, s \log^2 p \} $ and a given consistent noise estimator $\hat{\sigma}$, the confidence region with significance level $\alpha\in(0,1)$, for $\beta_i^0\in\mathbb{C}$ estimated via the desparsified LASSO $\hat{\beta}^u$ in \eqref{eq:desparsified_LASSO}, i.e.,
\begin{equation*}
    J^{\circ}_i(\alpha):=\{z\in\mathbb{C}:\vert\hat{\beta}_i^u-z\vert\leq\delta_i^{\circ}(\alpha)\},\qquad \text{with} \quad \delta^{\circ}(\alpha):=\frac{\hat{\sigma}\hat{\Sigma}_{ii}^{1/2}}{\sqrt{n}}\sqrt{\log\left(\frac{1}{\alpha}\right)}
\end{equation*}
is asymptotically valid, i.e.,
$$\lim\limits_{n\to\infty}\mathbb{P}\left(\beta_i^0\in J^{\circ}_i(\alpha)\right)=1-\alpha.$$
\end{theorem}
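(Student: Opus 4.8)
The plan is to follow the now-standard decomposition of the desparsified LASSO error into a Gaussian term and a bias (remainder) term, but to control each piece using tools tailored to bounded orthonormal systems rather than i.i.d.\ (sub-)Gaussian concentration. Recall that the desparsified estimator has the form $\hat\beta^u = \hat\beta + \tfrac{1}{n}\hat\Theta X^*(y - X\hat\beta)$, where $\hat\beta$ is the LASSO solution and $\hat\Theta$ is an approximate inverse of $\hat\Sigma$ (e.g.\ obtained by nodewise regression as in \cite{vandeGeer.2014}, or—since here the population covariance is the identity because the system is orthonormal—possibly just $\hat\Theta \approx I$). Substituting $y = X\beta^0 + \varepsilon$ gives the master decomposition
\begin{equation*}
\sqrt{n}\,(\hat\beta^u_i - \beta^0_i) \;=\; \underbrace{\tfrac{1}{\sqrt n}\,(\hat\Theta X^*\varepsilon)_i}_{=:W_i} \;+\; \underbrace{\sqrt n\,\big((I - \hat\Theta\hat\Sigma)(\hat\beta - \beta^0)\big)_i}_{=:\Delta_i}.
\end{equation*}
First I would show that, conditionally on $X$, the term $W_i$ is complex Gaussian: since $\varepsilon\sim\mathcal{CN}(0,\sigma^2 I)$, we have $W_i \sim \mathcal{CN}\big(0,\ \sigma^2\, (\hat\Theta\hat\Sigma\hat\Theta^*)_{ii}/1\big)$ up to the $n$-normalization, and I would argue that $(\hat\Theta\hat\Sigma\hat\Theta^*)_{ii}$ concentrates around $\hat\Sigma_{ii}$ (this is where the near-identity structure of the BOS covariance helps). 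Combined with consistency of $\hat\sigma$, this yields that $W_i / (\hat\sigma\hat\Sigma_{ii}^{1/2})$ is asymptotically a standard complex Gaussian, whose modulus satisfies $\Pr(|Z| \le \sqrt{\log(1/\alpha)}) = 1-\alpha$ — precisely the radius $\delta^\circ_i(\alpha)$ in the statement.

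Next I would control the remainder $\Delta_i$. By H\"older, $|\Delta_i| \le \sqrt n\,\|(I-\hat\Theta\hat\Sigma)e_i\|_\infty^{*}\cdot\|\hat\beta-\beta^0\|_1$ — more precisely, writing the $i$-th row of $I - \hat\Theta\hat\Sigma$ and pairing it with $\hat\beta - \beta^0$, one gets $|\Delta_i| \le \sqrt n \cdot \|(I-\hat\Theta\hat\Sigma)_{i,\cdot}\|_\infty \cdot \|\hat\beta-\beta^0\|_1$. The two ingredients are then: (i) an $\ell_1$ error bound for the LASSO, $\|\hat\beta-\beta^0\|_1 \lesssim s\sqrt{\log p / n}\,\sigma$, which holds under a restricted eigenvalue / compatibility condition; and (ii) a bound $\|(I-\hat\Theta\hat\Sigma)_{i,\cdot}\|_\infty \lesssim \sqrt{\log p / n}$ on the quality of the approximate inverse. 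For the BOS setting, ingredient (i) follows from the fact that a random sampling matrix from a BOS with $K\ge 1$ satisfies the restricted isometry property — and hence the compatibility condition — once $n \gtrsim s\log^2 s\log p$ (this is the classical Rudelson--Vershynin / Haviv--Regev / Bourgain bound, which is exactly why that particular sample-complexity term appears in the hypothesis), while ingredient (ii), since the population covariance is $I$, reduces to showing $\|(\hat\Sigma - I)_{i,\cdot}\|_\infty \lesssim \sqrt{\log p/n}$ (if $\hat\Theta = I$) or an analogous KKT-based bound for the nodewise Lasso; this column-wise deviation bound for $\hat\Sigma = X^*X/n$ is a Bernstein-type estimate using boundedness of the BOS entries, and it needs $n \gtrsim \log p$ (the $s\log^2 p$ term in the hypothesis absorbs the interaction of this with the $\ell_1$ bound). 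Multiplying, $|\Delta_i| \lesssim \sqrt n \cdot \frac{\log p}{n} \cdot s\,\sigma = \sigma s\log p/\sqrt n = o(1)$ precisely when $n \gg s^2\log^2 p$ — so I would need to be careful that the stated near-optimal regime suffices; the resolution is that in the BOS/identity-covariance case one can use a sharper bound $\|(I-\hat\Theta\hat\Sigma)_{i,\cdot}\|_{\text{(restricted)}}$ paired against the sparse-support structure, giving $|\Delta_i| \lesssim \sigma s\log p/n \cdot \sqrt n$ only off-support plus an on-support term of size $\sqrt{s}\cdot\sqrt{\log p/n}\cdot\sqrt n \cdot \ldots$, and the honest accounting uses $\|\hat\beta-\beta^0\|_2 \lesssim \sqrt{s\log p/n}$ on the support — I would organize this so the final requirement is the claimed $n\gtrsim\max\{s\log^2 s\log p, s\log^2 p\}$.

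Finally I would assemble the pieces: on the event (of probability $\to 1$) where the RIP/compatibility condition holds, where the column deviations of $\hat\Sigma$ are $O(\sqrt{\log p/n})$, and where $\hat\sigma$ is within $(1+o(1))$ of $\sigma$, we have $\sqrt n(\hat\beta^u_i - \beta^0_i) = W_i + o_{\Pr}(1)$ with $W_i$ asymptotically $\mathcal{CN}(0,\sigma^2\hat\Sigma_{ii})$; Slutsky's theorem then gives $\Pr(\beta^0_i \in J^\circ_i(\alpha)) = \Pr(|W_i|/(\hat\sigma\hat\Sigma_{ii}^{1/2}) \le \sqrt{\log(1/\alpha)}) + o(1) \to 1-\alpha$. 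I expect the main obstacle to be ingredient (ii) together with the sharp bookkeeping on $\Delta_i$: unlike the Gaussian case, for a BOS one does not have independence of the columns of $X$, so the bound on $\|(I-\hat\Theta\hat\Sigma)_{i,\cdot}\|_\infty$ (or the nodewise-Lasso KKT residual) must be extracted from a Bernstein inequality for sums of bounded but dependent-across-coordinates random vectors, and squeezing the sample complexity down from the naive $s^2$-type requirement to the near-optimal $s\log^2 s\log p$ requires exploiting the RIP rather than just column-wise deviation bounds — this is exactly the technical improvement over \cite{vandeGeer.2014}, whose generic chaining argument only delivered $s = o(\sqrt n/\log p)$.
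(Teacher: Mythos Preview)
Your proposal captures the high-level architecture (decomposition into Gaussian $W$ plus remainder $R$, Rayleigh/complex-Gaussian limit, RIP for BOS to get LASSO oracle bounds), but it misses two ideas that are essential to reaching the near-optimal sample complexity, and without them the argument as you wrote it stalls at the suboptimal $s^2$-regime you yourself flagged.

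\textbf{Sample splitting.} The paper does not try to control $(I-\hat\Sigma)(\hat\beta-\beta^0)$ with $\hat\beta$ and $\hat\Sigma$ built from the same data. Instead it splits the sample: one half is used to form the LASSO estimate $\hat\beta$, the other half to form $\hat\Sigma$ and the desparsified correction. This makes $u=\hat\beta-\beta^0$ \emph{independent} of the rows $x_i$ entering $R$. That independence is what allows conditioning on $u$ and then treating $R_j=\frac{1}{\sqrt n}\sum_i Z_i$ as a sum of independent mean-zero bounded random variables over the rows, so that Bernstein applies. You never mention this, and for a BOS design there is no leave-one-out substitute (the paper explicitly notes that the Javanmard--Montanari trick relying on independence of $X\Sigma^{-1}e_i$ and $X_{-i}$ fails here).

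\textbf{Variance via the $\ell_2$ error, not H\"older.} Your remainder bound $|\Delta_i|\le \sqrt n\,\|(I-\hat\Sigma)_{i,\cdot}\|_\infty\,\|\hat\beta-\beta^0\|_1$ is precisely the $\ell_1$--$\ell_\infty$ argument the paper says yields only $n\gtrsim s^2\log^2 p$. The paper's replacement is not a ``restricted-$\ell_\infty$'' refinement: having conditioned on $u$ via sample splitting, they write $Z_i^{(2)}=\overline{X}_{ij}\langle \overline{x}_i,u^{(-j)}\rangle$ and use $\mathbb E[x_ix_i^*]=I$ (the orthonormality of the system) to get the \emph{variance} bound $\mathbb E|Z_i^{(2)}|^2\le K^2\|u\|_2^2\lesssim K^3\frac{s_0\log p}{n}$, while the uniform bound $|Z_i|\le 2K^2\|u\|_1$ only enters the sub-exponential tail of Bernstein. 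Plugging these into Bernstein and taking a union bound over $j$ gives $\|R\|_\infty\lesssim \sqrt{s_0}\,\log p/\sqrt n$, which vanishes once $n\gtrsim s_0\log^2 p$. Your sketch never isolates the variance term, and the hand-wave about ``pairing against the sparse-support structure'' does not recover this; the point is that the $\ell_2$ oracle bound, not a support decomposition, drives the rate.
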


Our approach is general enough to include important sampling schemes such as Fourier measurements, as they appear in MRI applications, or Hadamard matrices, as well as local incoherent bases that are used in many applications where the object to be retrieved is sparse in a basis other than the canonical basis. In particular, for s growing sufficiently slow as compared to $p$, i.e., $\log p > \log^2 s$, our results require $s = o(n / \log^2 p)$, which matches the theoretically optimal result from the literature, developed for Gaussian designs \cite{Javanmard.2018}. Moreover, the size of the confidence region is sharp, in the minimax sense, as a function of $n$ \cite{Cai.2017}. This will be discussed in Section \ref{sec:confidence regions}.

To our knowledge, this is the first time that such a near-optimal result on confidence bounds has been derived for any class of structured design matrices. We state and prove our results for the complex LASSO since, in certain scenarios, such as in MRI, the measurements are of a complex nature. 

Moreover, to the best of our knowledge, this is the first work to extend the desparsified LASSO estimator to the case where the ground truth is not sparse but can instead be sparsified by an orthogonal transform. This setting is crucial in many applications, such as medical imaging, where the underlying data is typically sparse when represented on a Wavelet basis or by using Total Variation (TV) regularizers. In particular, we illustrate this idea in Section \ref{sec:desparsified_haar} with the Haar wavelet transform.

\subsection{Notation}
We define a complex number $z\in\mathbb{C}$ as $z=\Re(z)+i\Im(z)$ with $\Re(z),\Im(z)\in\mathbb{R}$. For a natural number $p\in\mathbb{N}$ we write $[p]:=\{1,\hdots,p\}$. For two numbers $a,b \in \mathbb{R}$, we denote by $a \gtrsim b$ the fact that there is a constant $C>0$, independent from the dimensions, such that $a \geq C b$. We refer to the $i$th row of a matrix $X\in\mathbb{C}^{n\times p}$ as $x_i^T$, $x_i\in\mathbb{C}^p$, and to the $j$th column as $X_j\in\mathbb{C}^n$ while the $ij$-entry is denoted by $X_{ij}\in\mathbb{C}$. The matrix $X_{-j}$ is the matrix $X$ where the $j$th column is replaced by a column consisting of zeros. The Hermitian adjoint of a matrix $X$ is denoted by $X^*$. The $i$-th canonical basis vector is denoted by $e_i=(0,\hdots,0,1,0,\hdots,0)^T$. The complex inner product of two vectors $v,w\in\mathbb{C}^p$ is defined as $\langle v,w\rangle=v^*w=\sum_{i=1}^p\overline{v}_iw_i$. The $\ell_2$ norm of $v$ is $\Vert v\Vert_2=\sqrt{\langle v,v\rangle}$. Furthermore, the $\ell_1$ norm is defined to be $\Vert v\Vert_1=\sum_{i=1}^p\vert v_i\vert$ and the $\ell_{\infty}$ norm is $\Vert v\Vert_{\infty}=\max_{i\in[p]}\vert v_i\vert$. A vector $v$ is called $s$-sparse if it has at most $s$ nonzero entries. We define the sample covariance matrix of a matrix $X$ by $\hat{\Sigma}=X^*X/n$. For a given vector $x \in \mathbb{C}^n$, we denote by $sign(x)$ the vector that returns the unimodular part of each entry of $x$, i.e., for a given entry of the form $x_i=r e^{i\theta}$, it returns $sign(x)_i=e^{i\theta}$. We denote the expected value and the variance of a random variable $X$ by $\mathbb{E}$ and $\var$, respectively.

\section{Background}\label{sec:desparsified LASSO}

In this section, we describe the uncertainty quantification problem for high-dimensional sparse regression, the desparsified technique to address this problem and the relevant background for bounded orthonormal systems.

\subsection{Sparse regression}

For a design/measurement matrix $X \in \mathbb{C}^{n \times p}$ with rows $ x_1^T, \dots, x_n^T$ and a data vector $ y = (y_1, \dots y_n) \in \mathbb{C}^{n} $, we are interested in the regression model

\begin{equation}\label{Cmodel}
y=X\beta^0+\varepsilon, 
\end{equation}
where the noise vector $\varepsilon\sim\mathcal{CN}(0,\sigma^2 I_{n\times n})$ is assumed to be a complex standard Gaussian vector whose components $\varepsilon_i$ are independent. Note that we are considering complex-valued representations of MRI measurements as our main motivating example. 

We consider the high-dimensional setting where $ n \ll p$ and we assume that the ground-truth $\beta^0$ is $s_0$-sparse, i.e., the cardinality of the support $S_0 := \{ i \ | \ \beta_i \neq 0\}$ of $\beta^0$ is $s_0\ll p$. The main goal is to estimate $\beta^0 \in \mathbb{C}^p$ as well as to provide confidence regions for $\beta^0$ based on this estimator. 
A point estimator is a map that provides an estimate for the ground truth $\beta_0$ for given data $y \in \mathbb{C}^n$. A common regularizer to enforce the sparsity model is the $\ell_1$-norm $\|z\|_1 = \sum |z_j|$. A natural estimator corresponding to this regularizer is the LASSO \cite{tibshirani1996regression, maleki2013asymptotic}, which is mathematically described by
\begin{equation}\label{eq:LASSO}
 \hat{\beta} = \argmin\limits_{\beta\in\mathbb{C}^p}\frac{1}{2n}\Vert X\beta-y\Vert_2^2+\lambda\Vert \beta\Vert_1,
\end{equation}
where $\lambda \in \mathbb{R}$ is a tuning parameter to balance data fidelity and sparsity. Here, we assume that \eqref{eq:LASSO} has a unique solution. By \cite[Theorem 1]{Zhang.2015}, the so-called Fuchs condition is necessary and sufficient for uniqueness, i.e., when $A_S$ is of full column rank and there exists a $y \in \mathbb{C}^n$ such that $A^*_S y=s$ and $\|A_{S^c}^* \|_{\infty} <1$, where $S = \supp (\beta^0)$ and $s = \sign(\beta^0_S)$. See also \cite{tropp2005recovery, fuchs2004sparse}.

To stress the fact that all parameters are complex-valued, this problem is often referred to as the complex LASSO (c-LASSO). This complex version of the LASSO is also relevant in other applications such as audio processing \cite{lilis2010sound} and direction of arrival estimation problems \cite{mecklenbrauker2017c, ollila2016direction}. Accurate
formulas for the phase transition and noise sensitivity that determine the fundamental recovery limit of the c-LASSO have been established in \cite{maleki2013asymptotic}. We note that the real LASSO case can be seen as a particular instance of the complex one and the results can be translated one-to-one.

The c-LASSO can be conveniently expressed as a real-valued problem by splitting the complex measurement matrix $X \in \mathbb{C}^{n \times p}$ and the vectors $\beta \in \mathbb{C}^p$, $y \in \mathbb{C}^n$, into their real and imaginary parts, i.e., 

$$\tilde{X}:=\begin{pmatrix}\Re X & -\Im X\\ \Im X & \Re X\end{pmatrix},\,\tilde{\beta}:=\binom{\Re\beta}{\Im\beta},\text{ and }\tilde{y}:=\binom{\Re y}{\Im y}.$$

This yields the equivalent formulation

\begin{equation}\label{eq:classo}
    \min\limits_{\tilde{\beta}\in\mathbb{R}^{2p}}\frac{1}{2n}\Vert \tilde{y}-\tilde{X}\tilde{\beta}\Vert_2^2+\lambda\sum\limits_{i=1}^p\sqrt{\tilde{\beta}_i^2+\tilde{\beta}_{i+p}^2}.
\end{equation}

This formulation is a particular instance of the group LASSO \cite{Yuan.2006} where one has $p$ groups of size 2. We note that the theory developed in this paper can be extended to the group LASSO or sparse group LASSO problem for groups of any size by following the proof of Theorem \ref{thm:mainresult} with an appropriate $\ell_2$-bound for the underlying optimization problem. For a discussion on such bounds, see, e.g., \cite{li2023sharp}.

\subsection{Quantifying statistical uncertainty}
In the field of uncertainty quantification, tools from probability theory and statistics are used to quantify uncertainties in mathematical models or in real-world experiments. In this paper, we aim at quantifying the unknown ground truth of model \eqref{Cmodel} using methods from statistical inference based on data whose sampling is affected by noise. One of the main methods in statistical inference is the construction of a point estimator, e.g., the LASSO. We will go beyond the basic statistical method of point estimators and consider the construction of confidence intervals, which provide \emph{``error bars''} around a point estimator.

\begin{definition}\cite[Definition 6.12]{wasserman2013all}
A $(1-\alpha)$ confidence interval for a parameter $\beta^0\in\mathbb{R}$ is an interval $C_n(a,b)$, whose size depends on $n$ and its extreme points $a$ and $b$ depend on the data $y_1,\hdots,y_n$, such that
$$\mathbb{P}(\beta^0\in C_n(a,b))\geq 1-\alpha\qquad \forall\beta^0\in B,$$
where $B$ is the parameter space of the underlying statistical model.
\end{definition}

In many cases, point estimators are asymptotically normal, i.e., $\hat{\beta}\overset{d}{\to}\mathcal{N}(\beta^0,\hat{\sigma}^2)$, where $\overset{d}{\to}$ means convergence in distribution as the problem dimension $n$ tends to infinity. In this setting, there is a standard procedure to derive confidence intervals from point estimators.

\begin{theorem}\cite[Theorem 6.16]{wasserman2013all}\label{thm:conf_int_with_gaussian}
Suppose that $\hat{\beta}\overset{d}{\to}\mathcal{N}(\beta^0,\hat{\sigma}^2)$ as $n\to\infty$, where $\overset{d}{\to}$ means convergence in distribution. Let $\Phi^{-1}$ be the quantile function of a standard Gaussian. Set $C_n = (\hat{\beta}-\Phi^{-1}(1-\alpha/2)\hat{\sigma},\hat{\beta}+\Phi^{-1}(1-\alpha/2)\hat{\sigma}).$ Then $\lim\limits_{n\to\infty}\mathbb{P}(\beta^0\in C_n)= 1-\alpha$.
\end{theorem}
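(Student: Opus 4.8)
The plan is to obtain the claim directly from the definition of convergence in distribution together with the continuity of the standard Gaussian distribution function $\Phi$. Reading the hypothesis $\hat{\beta}\overset{d}{\to}\mathcal{N}(\beta^0,\hat{\sigma}^2)$ in its intended studentized sense, it asserts precisely that
\[ T_n := \frac{\hat{\beta} - \beta^0}{\hat{\sigma}} \overset{d}{\to} Z, \qquad Z \sim \mathcal{N}(0,1). \]
I would first recall where such a hypothesis comes from in practice: one has $\sqrt{n}(\hat{\beta} - \beta^0) \overset{d}{\to} \mathcal{N}(0,\tau^2)$ for some $\tau > 0$ together with a consistent standard-error estimate $\sqrt{n}\,\hat{\sigma} \overset{\mathbb{P}}{\to} \tau$, and then $T_n \overset{d}{\to} Z$ follows from Slutsky's theorem by writing $T_n = \big(\sqrt{n}(\hat{\beta}-\beta^0)\big)\big(\sqrt{n}\,\hat{\sigma}\big)^{-1}$. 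For the purpose of this theorem, however, $T_n \overset{d}{\to} Z$ is simply the given.

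Next I would rewrite the coverage probability in terms of $T_n$. Abbreviating $z_{\alpha/2} := \Phi^{-1}(1-\alpha/2)$ and using $\hat{\sigma} > 0$, the event $\{\beta^0 \in C_n\}$ coincides with $\{\,|\hat{\beta} - \beta^0| < z_{\alpha/2}\hat{\sigma}\,\} = \{\,|T_n| < z_{\alpha/2}\,\}$, so $\mathbb{P}(\beta^0 \in C_n) = \mathbb{P}(T_n \in A)$ with $A := (-z_{\alpha/2}, z_{\alpha/2})$. The boundary $\partial A = \{\pm z_{\alpha/2}\}$ is a two-point set, hence has zero $\mathcal{N}(0,1)$-measure because the Gaussian law is atomless; thus $A$ is a continuity set of the limit, and the portmanteau characterization of convergence in distribution yields $\mathbb{P}(T_n \in A) \to \mathbb{P}(Z \in A)$. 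Consequently,
\[ \lim_{n\to\infty}\mathbb{P}(\beta^0 \in C_n) = \mathbb{P}(|Z| < z_{\alpha/2}) = \Phi(z_{\alpha/2}) - \Phi(-z_{\alpha/2}) = \big(1-\tfrac{\alpha}{2}\big) - \tfrac{\alpha}{2} = 1-\alpha, \]
using $\Phi(z_{\alpha/2}) = 1-\tfrac{\alpha}{2}$ by definition of $z_{\alpha/2}$ and $\Phi(-z_{\alpha/2}) = 1 - \Phi(z_{\alpha/2}) = \tfrac{\alpha}{2}$ by symmetry of $\Phi$.

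Within the theorem as stated there is essentially no obstacle: the content sits entirely in the hypothesis --- studentized asymptotic normality of the estimator --- and passing from it to the coverage statement is routine, the only point needing (minimal) care being that one must invoke a continuity set of the limit, which is unproblematic here since the normal CDF is continuous, so that open versus closed intervals and the exact location of the endpoints are immaterial. The genuinely substantial work, done elsewhere in the paper, is to verify this hypothesis for the desparsified LASSO $\hat{\beta}^u$ with bounded-orthonormal-system designs, i.e.\ to establish both the consistency of $\hat{\sigma}$ and the asymptotic normality of $T_n$ in that structured setting.
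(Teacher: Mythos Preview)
Your argument is correct and is the standard one: reduce the coverage event to $\{|T_n|<z_{\alpha/2}\}$ for the studentized statistic $T_n$, then invoke convergence in distribution together with the continuity of $\Phi$ (equivalently, the portmanteau lemma applied to a continuity set). There is nothing to compare against here, since the paper does not prove this theorem at all; it is quoted verbatim from \cite[Theorem 6.16]{wasserman2013all} as background and used only as motivation for the later, more substantive results such as Theorem~\ref{thm:confidence circle} and Lemma~\ref{C13Re}.
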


In the classical setting, where $n > p$, the central limit theorem assures that the sample mean is an unbiased estimator that is Gaussian distributed \cite[Theorem B.97]{schervish2012theory}. However, in the high-dimensional setting $p\gg n$ and in particular in sparse regression, the distribution of point estimators is usually not available. Therefore, alternative approaches are required. The first techniques to perform uncertainty quantification of sparse regression problems rely on sample-splitting methods and appeared in \cite{wasserman2009high, meinshausen2009p, shah2013variable}. The idea is to use half of the samples for solving the sparse regression problem and the other half to perform statistical inference via ordinary least-squares and t-tests. However, they require strong assumptions on the regression parameters to be retrieved, e.g., the minimum value attained in the support, the so-called \emph{$\beta$-min} assumption \cite{dezeure2015high}. This quantity is not known a priori, and significance tests are usually designed to infer it, i.e., to establish if a regression coefficient is above a certain threshold \cite{dezeure2015high}. Another method that provides uncertainty quantification is the ridge projection method \cite{buhlmann2013statistical}. Unlike the previous approach, it does not require any assumption on the measurement matrix or a lower bound for the coefficients in the support of the vector to be retrieved. However, such a method does not reach the asymptotic Cram\'er-Rao bound and is not able to provide optimal confidence intervals \cite{dezeure2015high}. A very promising estimator that satisfies the assumption of Theorem \ref{thm:conf_int_with_gaussian}, is the desparsified LASSO as described next.

\subsection{The desparsified LASSO}\label{subsec:desparsified_LASSO}

Following the works \cite{Javanmard.2014, Javanmard.2018, vandeGeer.2014}, we aim to derive confidence bounds for the c-LASSO estimator in the case that the design matrix is given by a random sampling matrix from \emph{bounded orthonormal systems} - generalizing the Fourier system - discussed in more detail in Section \ref{sec:BOS}. 

Most previous contributions assume the design to be a real (sub-)Gaussian matrix $X\in\mathbb{R}^{n\times p}$, i.e., a matrix with light-tailed distribution \cite{Javanmard.2014, Javanmard.2018, vandeGeer.2014}; only the work \cite{vandeGeer.2014} also provides results for fixed (deterministic) designs $X$ and bounded random designs under strong assumptions. The construction of a desparsified estimator for the LASSO solution $\hat{\beta}$, as described by \cite{vandeGeer.2014}, is based on the KKT conditions and reads as
\begin{equation}
\begin{split}
    & - X^*(y-X\hat{\beta})/n + \lambda \hat{\kappa} = 0\\
    & ||\hat{\kappa}||_{\infty} \leq 1 \quad \text{and} \quad \hat{\kappa}_j = \sgn (\hat{\beta}_j) \ \text{if} \ \hat{\beta}_j \neq 0,
\end{split}
\end{equation}
where $\hat{\kappa}$ is an element of the subdifferential of the $\ell_1$-norm. Writing $\hat{\Sigma}=X^*X/n$, the equation above can be rewritten as 
\begin{equation}
\begin{split}
     \lambda \hat{\kappa} + \hat{\Sigma}(\hat{\beta} - \beta^0) = X^* \varepsilon /n.
\end{split}
\end{equation}
Denoting by $M$ an approximate inverse of the empirical covariance matrix $\hat{\Sigma}$, i.e., a matrix (to be chosen) such that $M \hat{\Sigma} \approx I_{p\times p}$, we have
\begin{equation}
\begin{split}
     \hat{\beta} - \beta^0 + M \lambda \hat{\kappa} = M X^* \varepsilon /n - (M \hat{\Sigma} - I_{p\times p})(\hat{\beta} - \beta^0).
\end{split}
\end{equation}

One of the important achievements of the desparsified LASSO theory shows that the remainder term $(M \hat{\Sigma} - I_{p\times p})(\hat{\beta} - \beta^0) $ asymptotically vanishes since the term $  M \lambda \hat{\kappa}$ compensates for the bias introduced by the $\ell_1$ regularizer. Since $ M X^* \varepsilon /\sqrt{n} \sim\mathcal{N}(0,\sigma^2\hat{\Sigma}) $, this allows us to construct pointwise confidence intervals for $\beta_j^0$. Therefore, it is reasonable to add a subgradient of the $\ell_1$-norm at the LASSO solution $\hat{\beta}$ and to consider the \emph{desparsified LASSO} estimator defined as

\begin{equation}\label{eq:desparsified_LASSO}
    \hat{\beta}^u= \hat{\beta} + M \lambda \hat{\kappa} = \hat{\beta}+\frac{1}{n}MX^*(y-X\hat{\beta}).
\end{equation}

The previous approaches estimated the terms $ (M \hat{\Sigma} - I_{p\times p})$ and $ (\hat{\beta} - \beta^0) $ separately via an $\ell_1-\ell_{\infty}$ argument that leads to non-optimal bounds. The only exception is the seminal paper \cite{Javanmard.2018} that, instead, uses a leave-one-out argument and strongly exploits the independence of $X\Sigma^{-1}e_i$ and $X_{-i}$, which holds, for example, for matrices with Gaussian rows, but unfortunately does not hold for heavy-tailed matrices such as those generated from bounded orthonormal systems. Our main result, Theorem \eqref{thm:mainresult}, extends the applicability to heavy-tailed matrices and provides (near-) optimal bounds.

\subsection{The desparsified approach to sparse regression.}
Without any prior assumptions about the signal strength beyond sparsity, a low dimensional projection (LDP) approach for confidence estimation is developed for the LASSO estimator in \cite{Zhang.2014}. This is, to the best of the authors' knowledge, the first work towards a theory for assigning uncertainty in high-dimensional sparse models that has minimal assumptions on the quantities to be estimated. This approach was extended in \cite{Javanmard.2014, Javanmard.2018, vandeGeer.2014} as described in the previous sections, and is now known under the name \emph{de-sparsified LASSO}. The idea of these contributions, as explained in Section \ref{subsec:desparsified_LASSO}, is to create a non-sparse estimator based on the KKT conditions of the LASSO formulation. This new estimator reduces the shrinkage bias introduced by the LASSO and, for this reason, it is also known as \emph{desparsified LASSO} in the literature. The works \cite{vandeGeer.2014,Javanmard.2014} constructed a desparsified LASSO estimator provided that the sample size is at least $n \gtrsim s^2 \log^2 p$ and that the covariance matrix is sparse, which is suboptimal in $s$. For Gaussian designs with known covariance $\Sigma=I_{p\times p}$, \cite{javanmard2014hypothesis} established that it is sufficient to have $n \gtrsim  s \log(p/s)$ measurements in order to create confidence intervals of length $n^{-1/2}$ from the desparsified LASSO. This result, however, holds in a weak sense and implies coverage of the constructed confidence intervals only on average over the coordinates $i \in \{1 , \dots, p\} $. Later, \cite{Javanmard.2018} improved on this result by requiring the nearly optimal condition $n \gtrsim s \log^2 p$ in the case $X$ is Gaussian with known variance or, at least, by assuming that the inverse of the covariance matrix is sparse as in the assumptions made in \cite{vandeGeer.2014}.
The same paper also introduces an alternative proof strategy based on sample splitting, which, however, does not require additional assumptions as they appear in the early approaches discussed above. We will follow a similar proof strategy in this paper.

Further, \cite{Cai.2017} studied the expected length of confidence intervals for the desparsified LASSO in the oracle setting where the sparsity level is known. It is established that the minimax expected length is of the order $1/\sqrt{n}+ s \frac{\log p}{n}$ of confidence intervals for $\beta_i$. The paper constructs confidence intervals in the moderately sparse regime $\frac{\sqrt{n}}{\log p} \ll s \lesssim \frac{n}{\log p}$. Recently, \cite{shinkyu2022small} established that $n \gtrsim s^2 \log p$ is sufficient for the asymptotic normality of the desparsified LASSO, in the case where $p \geq C n$, without imposing any sparsity assumption on the precision matrix, i.e., the inverse of the population covariance matrix. Furthermore, \cite{bellec2019biasing} established asymptotic normality results for a de-biased estimator of convex regularizers beyond the $\ell_1$-norm, in the regime $p/n \rightarrow \gamma \in (0, +\infty)$ and $s/n \rightarrow \kappa \in (0,1)$.

\subsection{Bounded orthonormal systems}\label{sec:BOS}

As described in the introduction, so far, most works on desparsified estimators are restricted to random (sub-)Gaussian measurement matrices that do not possess any structure. For bounded designs, the only available result requires a sample complexity scaling like $n \gtrsim  s^2 \log^2 p$ \cite{vandeGeer.2014}, which is a suboptimal sample complexity. However, one of the central motivating applications of compressed sensing theory is magnetic resonance imaging (MRI) and there, as a result of the Bloch equation that models the magnetic resonance phenomenon \cite{zhi2000principles}, one needs to consider subsampled Fourier measurements instead \cite{rudelson2008sparse}.

\begin{definition}\label{def:partial Fourier}
A Fourier matrix $F\in\mathbb{C}^{p\times p}$ is defined entrywise as $F_{l,k}=e^{2\pi i(l-1)(k-1)/p}$, $l,k\in[p]$. The random subsampled Fourier matrix $F_{\Omega}$ consists of the rows $f_{t_1},..., f_{t_n}$ from the original Fourier matrix $F$, where the indices $\Omega:=\{t_1,\hdots,t_n\}$ are sampled independently and uniformly from $[p]$.
\end{definition}

Our goal in this paper is to close this gap and provide guarantees with near-optimal sample complexity for bounded orthonormal systems, a class of structured random matrices that includes random subsampled Fourier matrices as a particular case and also covers structured random matrices emerging when band-limited functions are to be reconstructed from few random samples. 

A bounded orthonormal system consists of a family of functions that are orthonormal in the $L_2$ norm and uniformly bounded in the $L_{\infty}$ norm. In the following exposition, we follow \cite[Section 12.1]{Foucart.2013}. Further discussions can be found in \cite{rauhut2016interpolation, andersson2014theorem, brugiapaglia2018robustness, xu2020unraveling}.

\begin{definition}
Let $\nu$ be a probability measure on $\mathcal{D}\subset\mathbb{R}^d$. A bounded orthonormal system (BOS) with constant $K\geq 1$ consists of complex-valued functions $\{\phi_1,\hdots,\phi_p\}$ on $\mathcal{D}$ that have the properties
\begin{equation}\label{eq:orthonormal_system}
    \int_{\mathcal{D}}\phi_j(t)\overline{\phi_k(t)}d\nu(t)=\delta_{j,k}\qquad\forall j,k\in[p]
\end{equation}
and
\begin{equation}\label{eq:bounded_orthonormal_system}
    \Vert \phi_j\Vert_{\infty}:=\sup\limits_{t\in\mathcal{D}}\vert\phi_j(t)\vert\leq K\qquad\forall j\in[p].
\end{equation}
\end{definition}
The crucial point of the boundedness condition \eqref{eq:bounded_orthonormal_system} is that $K$ should ideally be independent of $p$.
Based on this family of functions, one defines a random matrix by evaluating the BOS at $n$ randomly chosen sample points. 

\begin{definition}\label{def:matrix_BOS}
A matrix $X\in  \mathbb{C}^{n\times p}$ is said to be a random sampling matrix associated to a BOS if its entries are
$$X_{lk}=\phi_k(t_l),\qquad l\in[n],\,k\in[p],$$
where $t_1,\hdots ,t_n\in\mathcal{D}$ are chosen independently at random according to $\nu$.
\end{definition}

Note that due to the independence of the sampling points $t_1,\hdots ,t_n$, the rows of the matrix $X$ are independent. In the case of subsampled Fourier matrices $F_{\Omega}$, the functions consist of the trigonometric system $\phi_k(t)= e^{2 \pi i kt} $ and the uniform bound in condition \eqref{eq:bounded_orthonormal_system} is $K=1$. A random sampling matrix associated to a BOS has stochastically independent rows, although the entries within each row are not independent.

One of the main conditions on the measurement matrix required to establish consistency of the LASSO estimator and that permeates the whole field of sparse regression \cite{SaraA.vandeGeer.2009} is the following.

\begin{definition}\label{defi:RIP}
A matrix $X$ satisfies the restricted isometry property (RIP) of order $1\leq s\leq p$ if there is a constant $\delta_s\in(0,1)$ such that
$$(1-\delta_s)\Vert \beta\Vert_2^2\leq \Vert X\beta\Vert_2^2\leq (1+\delta_s)\Vert \beta\Vert_2^2$$
for all $s$-sparse vectors $\beta\in \mathbb{C}^p$.
\end{definition}

In this paper, we will make use of the following bound on the RIP constants for random sampling matrices associated to a BOS \cite{cheraghchi2013restricted, rauhut2016interpolation}. 
Although not being described by a light-tailed probabilistic model, random matrices sampled from a BOS still act as quasi-isometries on the subset of sparse vectors, i.e., random matrices sampled from BOS satisfy the restricted isometry property \cite{BRUGIAPAGLIA2021231}. 

\begin{theorem}\cite[Theorem 1.1 and Theorem 2.3]{BRUGIAPAGLIA2021231}\label{thm:rip for BOS}
Let $X \in \mathbb{C}^{n \times p}$ be the random sampling matrix associated to a BOS with constant $K \geq 1$. If, for $\delta \in (0,1)$,
\begin{equation}\label{eq:n for RIP}
    n \geq C K^2 \delta^{-2} s \log^2(sK^2/\delta) \log(ep)
\end{equation}
for a certain positive constant $C>0$, then with probability at least $1- 2\exp( -\delta^2 n/(sK^2))$, the matrix $\frac{1}{\sqrt{n}} X$ is an RIP matrix of order $s$ with constant $\delta_s \leq \delta $.
\end{theorem}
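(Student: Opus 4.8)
The plan is to realize the restricted isometry constant as the supremum of a centered empirical process and to control it by combining a chaining bound on its expectation with a concentration inequality around that expectation. Write $\phi(t)=(\phi_1(t),\dots,\phi_p(t))^T$ and, for $x\in\mathbb{C}^p$, let $\langle x,\phi(t)\rangle$ denote $\sum_k x_k\phi_k(t)$, i.e.\ an entry of $Xx$. The orthonormality \eqref{eq:orthonormal_system} gives $\mathbb{E}_t |\langle x,\phi(t)\rangle|^2=\|x\|_2^2$, so that
\begin{equation*}
\delta_s \;=\; \sup_{\substack{x\in\mathbb{C}^p,\ \|x\|_2\le 1\\ \|x\|_0\le s}} \left| \frac{1}{n}\sum_{l=1}^n |\langle x,\phi(t_l)\rangle|^2 - \|x\|_2^2 \right| \;=\; \max_{\substack{S\subset[p]\\ |S|\le s}} \left\| \frac{1}{n} X_S^* X_S - I_S \right\|_{2\to 2},
\end{equation*}
a deviation of a sum of i.i.d.\ centered random variables indexed by the set $D_{s,p}$ of $s$-sparse vectors in the Euclidean ball. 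So there are two steps: bound $\mathbb{E}\,\delta_s$, and show $\delta_s$ concentrates around its mean.

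For the expectation I would symmetrize to pass to a Rademacher average $\mathbb{E}\,\delta_s\le 2\,\mathbb{E}\sup_{x\in D_{s,p}}\big|\tfrac1n\sum_l\varepsilon_l|\langle x,\phi(t_l)\rangle|^2\big|$, the supremum of a homogeneous order-two Rademacher chaos process driven by the vectors $\phi(t_l)$. The crucial deterministic input is the BOS bound \eqref{eq:bounded_orthonormal_system}, which yields $|\langle x,\phi(t)\rangle|\le K\|x\|_1\le K\sqrt{s}\,\|x\|_2\le K\sqrt{s}$ on $D_{s,p}$, so the process has subgaussian increments with respect to the (random) pseudometric $d(x,x')^2\asymp\tfrac1n\sum_l\big(|\langle x,\phi(t_l)\rangle|^2-|\langle x',\phi(t_l)\rangle|^2\big)^2$, which is dominated by $K^2 s\cdot\tfrac1n\sum_l|\langle x-x',\phi(t_l)\rangle|^2$ plus the analogous term for $x+x'$. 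The chaos/$\gamma_2$ bound of Krahmer--Mendelson--Rauhut (or, more crudely, Dudley's entropy integral) then reduces everything to covering-number estimates for $D_{s,p}$ in the $\ell_2$ metric, for which $\log N(D_{s,p},\|\cdot\|_2,u)\lesssim s\log(ep/s)+s\log(1/u)$. Integrating this entropy against $\sqrt{K^2 s/n}$, and splitting the integral at the dyadic scale where the volumetric term $s\log(1/u)$ overtakes $s\log(ep/s)$, should yield $\mathbb{E}\,\delta_s\lesssim K\sqrt{\tfrac{s\log^2(s)\log(ep)}{n}}+\tfrac{K^2 s\log^2(s)\log(ep)}{n}$; choosing $n$ as in \eqref{eq:n for RIP} makes the right-hand side $\le\delta/2$.

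For the concentration step, $\delta_s$ is a supremum of an empirical process whose summands are uniformly bounded by $\sup_{x\in D_{s,p}}\big||\langle x,\phi(t)\rangle|^2-\|x\|_2^2\big|\le K^2 s+1\lesssim K^2 s$, with variance proxy of the same order. Talagrand's (Bousquet's) concentration inequality for suprema of bounded empirical processes then gives $\mathbb{P}(\delta_s\ge\mathbb{E}\,\delta_s+\delta/2)\le 2\exp(-c\delta^2 n/(sK^2))$, which together with $\mathbb{E}\,\delta_s\le\delta/2$ completes the argument.

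The main obstacle is the chaining step: a crude application of Dudley's inequality loses extra logarithmic factors and only delivers something like $n\gtrsim s\log^4 p$. Obtaining the near-optimal exponent $\log^2(s)\log(p)$ requires the more delicate multiscale argument in the spirit of Bourgain, Haviv--Regev, and Cheraghchi--Guruswami--Velingker (as carried out in \cite{BRUGIAPAGLIA2021231}): at each dyadic scale one must choose between a volumetric covering bound and an entropy bound obtained via Maurey's empirical method / duality of entropy numbers, and balance these against the variance term and the deterministic $\ell_\infty$ bound $K\sqrt{s}$. Everything else --- symmetrization, the covering-number estimate for sparse vectors, and the final concentration inequality --- is by now standard.
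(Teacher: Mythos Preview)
The paper does not prove this theorem at all: it is quoted verbatim from \cite{BRUGIAPAGLIA2021231} (Theorems~1.1 and~2.3 there) as a black-box input to the rest of the analysis, so there is no ``paper's own proof'' to compare against. Your outline is a faithful high-level summary of the argument in the cited reference and its precursors (Rudelson--Vershynin, Bourgain, Cheraghchi--Guruswami--Velingker, Haviv--Regev): represent $\delta_s$ as the supremum of a centered empirical process, bound its expectation by symmetrization and a multiscale chaining argument exploiting the uniform bound $|\langle x,\phi(t)\rangle|\le K\sqrt{s}$ on $D_{s,p}$, and then concentrate around the mean via a Talagrand/Bousquet-type inequality for bounded empirical processes. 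You correctly flag that the delicate point is the chaining step, where a naive Dudley bound loses logarithmic factors and the sharper exponent $\log^2(s)\log(p)$ requires balancing Maurey-type and volumetric covering estimates across scales; this is precisely the content of \cite{BRUGIAPAGLIA2021231}.
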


As a consequence of the sampling pattern, the second moment matrix of any row $x_j$ is the identity $I_{p\times p}$:
\begin{align*}
    \mathbb{E}[x_jx_j^*]_{kl}=\mathbb{E}[\phi_k(t_j)\overline{\phi_l(t_j)}]=\int_{\mathcal{D}}\phi_k(t)\overline{\phi_k(t)}d\nu(t)=\delta_{kl}.
\end{align*}

\begin{lemma}\label{le:expectation_sample_cov}
Let $X$ be a random sampling matrix associated to a BOS with constant $K\geq 1$. Then the sample covariance matrix $\hat{\Sigma}=\frac{X^*X}{n}$ satisfies
$$\mathbb{E}\big[\hat{\Sigma}\big]=I_{p\times p}.$$
\begin{proof}
By definition
\begin{align*}
    \mathbb{E}\big[\hat{\Sigma}_{ij}\big]=\frac{1}{n}\sum\limits_{l=1}^n\mathbb{E}[X_{li}\overline{X}_{lj}]=\frac{1}{n}\sum\limits_{l=1}^n\mathbb{E}[\phi_i(t_l)\overline{\phi_j(t_l)}]=\frac{1}{n}\sum\limits_{l=1}^n\int_{\mathcal{D}}\phi_i(t)\overline{\phi_j(t)}d\nu(t).
\end{align*}
Since $(\phi_k)_{k\in[p]}$ forms an orthonormal system in $(\mathcal{D},\nu)$, we have $\mathbb{E}\big[\hat{\Sigma}_{ij}\big]=\frac{1}{n}\sum\limits_{l=1}^n\delta_{ij}=\delta_{ij}.$
\end{proof}
\end{lemma}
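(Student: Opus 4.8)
The plan is to reduce the claimed matrix identity to its scalar entries and then invoke the orthonormality relation \eqref{eq:orthonormal_system}. First I would fix indices $i,j\in[p]$ and expand the $(i,j)$-entry of $\hat{\Sigma}=X^*X/n$ using Definition \ref{def:matrix_BOS}. Since the $l$-th row of $X$ equals $(\phi_1(t_l),\dots,\phi_p(t_l))$, this gives
\begin{equation*}
  \hat{\Sigma}_{ij}=\frac{1}{n}\sum_{l=1}^n\overline{X_{li}}\,X_{lj}=\frac{1}{n}\sum_{l=1}^n\overline{\phi_i(t_l)}\,\phi_j(t_l).
\end{equation*}
This is pure bookkeeping; the only point to watch is the placement of the complex conjugate coming from the Hermitian adjoint in $X^*$.

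Next I would take expectations, using linearity together with the fact that each sampling point $t_l$ has law $\nu$. Hence $\mathbb{E}\big[\overline{\phi_i(t_l)}\,\phi_j(t_l)\big]=\int_{\mathcal{D}}\overline{\phi_i(t)}\,\phi_j(t)\,d\nu(t)$, and by \eqref{eq:orthonormal_system} this integral equals $\overline{\delta_{i,j}}=\delta_{i,j}$, the conjugation being irrelevant since $\delta_{i,j}$ is real. Averaging the $n$ identical summands then yields $\mathbb{E}[\hat{\Sigma}_{ij}]=\delta_{i,j}$ for all $i,j\in[p]$, which is exactly $\mathbb{E}[\hat{\Sigma}]=I_{p\times p}$.

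There is essentially no obstacle in this argument: it relies solely on each $t_l$ being $\nu$-distributed plus the orthonormality \eqref{eq:orthonormal_system}, and uses neither the independence of the rows of $X$ nor the boundedness constant $K$ (those enter only in the concentration results that are built on top of this lemma). A marginally quicker route is to observe that $\hat{\Sigma}=\frac1n\sum_{l=1}^n \overline{x_l}\,x_l^{T}$ and that the $(i,j)$-entry of $\overline{x_l}\,x_l^{T}$ is the complex conjugate of the $(i,j)$-entry of $x_l x_l^{*}$; the already-recorded identity $\mathbb{E}[x_l x_l^{*}]=I_{p\times p}$ then gives $\mathbb{E}[\hat{\Sigma}]=I_{p\times p}$ after taking (entrywise) conjugates, since the identity matrix is real.
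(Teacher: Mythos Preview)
Your proof is correct and follows essentially the same route as the paper's own argument: fix $(i,j)$, expand $\hat{\Sigma}_{ij}$ as an average of $\overline{\phi_i(t_l)}\phi_j(t_l)$ (the paper places the conjugate on the other factor, which is immaterial since $\delta_{ij}$ is real), take expectations using that each $t_l\sim\nu$, and apply the orthonormality relation \eqref{eq:orthonormal_system}. Your remark that neither the independence of the $t_l$ nor the bound $K$ is needed here is accurate.
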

Hence, we expect the sample covariance to be close to the identity. The estimate
$$\vert\hat{\Sigma}_{ij}\vert=\frac{\vert\langle X_i,X_j\rangle\vert}{n}=\frac{1}{n}\left\vert\sum\limits_{k=1}^n X_{ik}\overline{X}_{jk}\right\vert\leq K^2,\qquad i,j\in[p]$$
shows that the entries of the sample covariance are restricted to the range $[0,K^2]$. 
This will be crucial for applying Bernstein's inequality when estimating the deviation of $M\hat{\Sigma}-I_{p\times p}$.
To make this term vanish in expectation, we will choose
$M=I_{p\times p}$.

In many practical applications, however, the underlying signal to be retrieved is often not sparse in the canonical basis. In particular, in MRI applications, the underlying image is not necessarily sparse, but it can rather be sparsified with respect to another basis. Therefore, one key assumption is the existence of a sparsifying transform, e.g., an orthogonal operator $T:\mathbb{C}^p\to\mathbb{C}^p$ mapping the non-sparse image $\beta^0\in\mathbb{C}^p$ to an $s_0$-sparse signal $z^0=T\beta^0$. In this case, the underlying model becomes $y=XT^{-1}z^0+\varepsilon$, where $X\in\mathbb{C}^{n\times p}$ is a random sampling matrix associated to a BOS with constant $K\geq 1$, $y\in\mathbb{C}^n$ the measured data and $\varepsilon\sim\mathcal{CN}(0,\sigma^2I_{n\times n})$. In order to apply the existing theory, we need to guarantee that the matrix $XT^{-1}$, whose rows are orthogonal, satisfies a version of \eqref{eq:bounded_orthonormal_system}, namely
\begin{equation}\label{eq:boundedness_sparse_transform}
    \max\limits_{i,j\in[p]}\vert (XT^*)_{ij}\vert = \max\limits_{i,j\in[p]}\vert \langle  x_i,t_j\rangle\vert\leq L \text{ for a constant } L\geq 1.
\end{equation}
This is the same as saying that $XT^{-1}$ is a random sampling matrix associated to a BOS with constant $L\geq 1$. Theorem 1 in \cite{6651836} shows that if the rows of $XT^{-1}$ are not selected uniformly but rather with a modified probability measure, then the matrix $XT^{-1}$ is a matrix associated with a BOS and, furthermore, that RIP holds with high probability provided that the number of samples is high enough. The result, whose sample complexity assumption is stated in an improved way here for the first time due to the use of Theorem \ref{thm:rip for BOS} instead of \cite{Rauhut+2010+1+92} as used in the original proof, reads as follows.

\begin{theorem}\cite[Theorem 1]{6651836}\label{thm:sparse_transform_RIP}
Let $\{x_1,\hdots,x_p\}$ and $\{t_1,\hdots,t_p\}$ be orthonormal bases of $\mathbb{C}^p$. Assume, that $\sup\limits_{1\leq k\leq p}\vert \langle x_j,t_k\rangle \vert\leq \kappa_j.$ Let $s\geq \log p$, suppose
$$n\geq C\delta^{-2}\Vert \kappa\Vert_2^2 s \log^2 s \log p,$$
and choose $n$ (possibly not distinct) indices $j\in\Omega\subset [p]$ i.i.d. from the probability measure $\nu$ on $[p]$ given by
$$\nu(j)=\frac{\kappa_j^2}{\Vert \kappa\Vert_2^2}.$$
Consider the matrix $A\in\mathbb{C}^{n\times p}$ with entries
$$A_{jk}=\langle x_j,t_k\rangle, \qquad j\in\Omega, k\in[p],$$
and consider the diagonal matrix $D=\operatorname{diag}(d)\in\mathbb{C}^{p\times p}$ with $d_j=\Vert \kappa\Vert_2/\kappa_j$. Then with probability at least $1-p^{-c\log^3 s}$, the restricted isometry constant $\delta_s$ of the preconditioned matrix $\frac{1}{\sqrt{n}}DA$ satisfies $\delta_s\leq\delta$.
\end{theorem}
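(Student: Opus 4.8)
The plan is to reduce the statement to the restricted isometry bound for bounded orthonormal systems, Theorem~\ref{thm:rip for BOS}, by recognizing that the preconditioned matrix $\tfrac{1}{\sqrt n}DA$ is itself the normalized random sampling matrix of a \emph{new} BOS built from the coherence profile $\kappa$. Concretely, first I would discard the indices $j$ with $\kappa_j=0$ (they carry no mass under $\nu$ and $D$ is only needed on the support of $\nu$), view $([p],\nu)$ with $\nu(j)=\kappa_j^2/\Vert\kappa\Vert_2^2$ as the underlying probability space, and introduce functions $\psi_1,\dots,\psi_p\colon[p]\to\mathbb{C}$ defined by
\[
\psi_k(j):=\frac{\Vert\kappa\Vert_2}{\kappa_j}\,\langle x_j,t_k\rangle .
\]
Since $d_j=\Vert\kappa\Vert_2/\kappa_j$ one has $(DA)_{jk}=d_jA_{jk}=\psi_k(j)$, and because the indices in $\Omega$ are drawn i.i.d.\ from $\nu$, the matrix $\tfrac{1}{\sqrt n}DA$ is exactly the normalized random sampling matrix associated to the system $\{\psi_k\}$ in the sense of Definition~\ref{def:matrix_BOS}.

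Next I would check that $\{\psi_k\}$ is a BOS with constant $K=\Vert\kappa\Vert_2$. For orthonormality~\eqref{eq:orthonormal_system},
\[
\int_{[p]}\psi_k(t)\overline{\psi_\ell(t)}\,d\nu(t)=\sum_{j}\nu(j)\,\psi_k(j)\,\overline{\psi_\ell(j)}=\sum_{j}\langle x_j,t_k\rangle\,\overline{\langle x_j,t_\ell\rangle}=\langle t_\ell,t_k\rangle=\delta_{k\ell},
\]
where the weights $\kappa_j^2/\Vert\kappa\Vert_2^2$ and $\Vert\kappa\Vert_2^2/\kappa_j^2$ cancel, the third equality is Parseval's identity for the orthonormal basis $\{x_j\}_{j\in[p]}$ (so \emph{completeness}, not just orthonormality, of $\{x_j\}$ is what is used here), and the last equality uses that $\{t_k\}$ is orthonormal. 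For the boundedness~\eqref{eq:bounded_orthonormal_system}, the assumption $\sup_{1\le k\le p}|\langle x_j,t_k\rangle|\le\kappa_j$ gives $|\psi_k(j)|\le(\Vert\kappa\Vert_2/\kappa_j)\,\kappa_j=\Vert\kappa\Vert_2$ for all $j,k$. Finally, $K=\Vert\kappa\Vert_2\ge1$ as required, because $\kappa_j^2\ge\max_k|\langle x_j,t_k\rangle|^2\ge\tfrac1p\sum_k|\langle x_j,t_k\rangle|^2=\tfrac1p$ and hence $\Vert\kappa\Vert_2^2=\sum_j\kappa_j^2\ge1$.

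With this in hand, the conclusion follows by invoking Theorem~\ref{thm:rip for BOS} with $K=\Vert\kappa\Vert_2$ and the prescribed accuracy $\delta$: its sample-size condition~\eqref{eq:n for RIP} reads $n\ge C\Vert\kappa\Vert_2^2\delta^{-2}s\log^2(s\Vert\kappa\Vert_2^2/\delta)\log(ep)$, which --- using $s\ge\log p$ and absorbing the factor $\log^2(s\Vert\kappa\Vert_2^2/\delta)$ into $\log^2 s$ under the mild standing assumption that $\Vert\kappa\Vert_2/\delta$ grows at most polynomially --- is implied by the hypothesis $n\ge C\delta^{-2}\Vert\kappa\Vert_2^2 s\log^2 s\log p$; and the theorem then gives $\delta_s\le\delta$ for $\tfrac{1}{\sqrt n}DA$ on an event of probability at least $1-2\exp(-\delta^2 n/(s\Vert\kappa\Vert_2^2))$, into which substituting the sample-size bound produces the claimed order $1-p^{-c\log^3 s}$. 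I expect the reduction itself to be routine once the right pair $(\nu,D)$ is identified --- indeed the choice is forced by the two requirements that the reweighting restore orthonormality and simultaneously cap the $L_\infty$ norm at $\Vert\kappa\Vert_2$. The main obstacle is the last bit of bookkeeping: one must track the logarithmic and constant factors in Theorem~\ref{thm:rip for BOS} carefully enough to land exactly on the clean form $n\gtrsim\delta^{-2}\Vert\kappa\Vert_2^2 s\log^2 s\log p$ and on the stated failure probability. This is precisely the step where plugging in the sharper RIP estimate of Theorem~\ref{thm:rip for BOS} in place of the older bound used in \cite{6651836} improves the sample complexity of the original statement.
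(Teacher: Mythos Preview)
Your approach is correct and is precisely the one the paper has in mind: the paper does not give its own proof of Theorem~\ref{thm:sparse_transform_RIP} but cites \cite[Theorem~1]{6651836}, remarking that the stated sample complexity arises by substituting the sharper RIP bound of Theorem~\ref{thm:rip for BOS} for the older estimate from \cite{Rauhut+2010+1+92}; your reduction---reweighting by $D$ and sampling from $\nu$ so that $\{\psi_k\}$ becomes a BOS with constant $\Vert\kappa\Vert_2$, then invoking Theorem~\ref{thm:rip for BOS}---is exactly this argument (cf.\ Remark~\ref{re:BOS_sparse_trafo}). One minor bookkeeping slip: substituting the minimal $n$ into the failure probability $2\exp(-\delta^2 n/(s\Vert\kappa\Vert_2^2))$ yields $p^{-c\log^2 s}$, not $p^{-c\log^3 s}$; the extra $\log s$ in the stated exponent appears to be inherited from the original sample complexity in \cite{6651836}, which carried an additional logarithmic factor.
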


In the case that the sparsifying transform $T$ is given by a Haar wavelet basis \cite{6651836} and the measurement matrix is given by a subsampled Fourier transform, a consequence of the theorem above is that the matrix $\frac{1}{\sqrt{n}}DF_{\Omega} T^*$ is a matrix associated with a BOS with constant $\Vert \kappa\Vert_2\leq c \sqrt{\log_2 p}$ with a constant $c_1>0$ \cite[Theorem 4]{6651836}. Thus, in this case, the price for applying the sparsifying transform when the image is not sparse in the canonical basis is an additional $\log_2 p$-factor in the necessary number of measurements.
Furthermore, the main result of \cite{6651836} shows that if $n\geq C s\log^3s\log^2p$ then the recovery of $\beta^0$ via $\ell_1$-minimization from the measurements
$y=F_{\Omega}\beta^0+\varepsilon,$ where $\Omega\subset[p]$ is sampled with respect to $\nu$, approximates $\beta^0$ by $\hat{\beta}$
with high probability:
$$\Vert \beta^0-\hat{\beta}\Vert_2\leq C \frac{\Vert H\beta^0- (H\beta^0)_{s_0}\Vert_2}{\sqrt{s_0}}+\frac{\Vert D\varepsilon\Vert_2}{\sqrt{n}}$$

We will use the results above in order to develop an uncertainty quantification theory and construct confidence intervals in the case that images can be sparsely represented by a Haar wavelet transform. One difference from our work, however, is that the work \cite{6651836} considered a normalized trigonometric system $\phi_k(t)= \frac{1}{\sqrt{p}} e^{2 \pi i kt}$. Here in this paper, instead, we will consider the Fourier basis without normalization, i.e., $\phi_k(t)= e^{2 \pi i kt}$. This means that the weights to be considered for the matrix $\frac{1}{\sqrt{n}}DA$ will be given by $d_j=\Vert \kappa\Vert_2/(\sqrt{p} \kappa_j)$. See Section \ref{sec:desparsified_haar} for further discussion.

\section{Related Work}
The theory for the desparsified LASSO has been extended in many directions and it is a rapidly growing field. Due to the large number of contributions, we can only offer a brief, non-exhaustive overview. The works \cite{ren2015asymptotic} and \cite{jankova2018inference} proposed generalizations of the desparsified LASSO for Gaussian graphical models, and \cite{fang2017testing} considered the desparsified method in high-dimensional Cox models. Moreover, \cite{buhlmann2015high} studied the question when the desparsified LASSO procedure is valid for the construction of statistical hypothesis tests and confidence intervals in the case of misspecified high-dimensional models where the data is assumed to be generated from an underlying true model $y = f(X) + \varepsilon$ but the user fits a simpler (and wrong) linear model $y= X \beta^0 + \varepsilon$ to the data. Recently, \cite{Li.2020} introduced a bootstrapped version of the desparsified LASSO and showed that for strong signals, i.e., signals for which very few coefficients lie above a certain threshold, this new estimator has a smaller bias as compared to the original desparsified LASSO. A method for variable selection in high-dimensions that controls the directional false discovery rate was proposed in \cite{javanmard2019false} and in \cite{caner2018asymptotically} the desparsified procedure was extended to the conservative LASSO. Furthermore, \cite{ning2017general} introduced the so-called decorrelated score function and established statistical inference procedures for general penalized M-estimators with both convex and nonconvex penalties. In \cite{cheng2019nonparametric}, the desparsified estimator was generalized to nonparametric inference problems such as kernel density estimation and local polynomial regression. In \cite{bellec2022biasing}, a degrees-of-freedom adjustment was proposed that accounts for the dimension of the model selected by the LASSO and established that, when the covariance matrix $\Sigma$ is unknown, a multiplicative correction is necessary for performing statistical inference. The idea of desparsified estimators to vector autoregressive models was extended in \cite{krampe2021bootstrap}. In the present paper we are contributing to the field of uncertainty quantification by extending the desparsified LASSO estimator to measurement matrices beyond the subgaussian case in the almost optimal regime $n \gtrsim s \log^2 p$ provided that s grows sufficiently slow as compared to $p$.

\section{Main theoretical results}\label{sec:main results}
In this section we formally state and prove our main theoretical result, Theorem \ref{thm:mainresult}, which extends the asymptotic normality of the desparsified LASSO from the Gaussian case \cite[Theorem 3.8]{Javanmard.2018} to the case of a random sampling matrix associated to a BOS.
%Theorem \ref{thm:mainresult_group_LASSO} shows how the prior result extends to the group LASSO case.
In Section \ref{sec:desparsified_haar} we consider a more general setting of a signal that is not sparse when represented in the canonical basis. In this setting, for the first time, we derive guarantees for the desparsified LASSO in the case that the signal can be sparsely represented by a Haar wavelet transform.

\subsection{Asymptotic normality of desparsified LASSO}
A crucial point for constructing confidence intervals is the knowledge of the distribution of the estimator, the so-called sampling distribution. As it turns out, the desparsified LASSO is approximately normally distributed.

\begin{definition}\cite[Definition 6.12]{wasserman2013all}
An point estimator $\hat{\beta}_n$ for $\beta^0\in\mathbb{R}$ is said to be asymptotically normal if
$$\frac{\hat{\beta}_n-\beta^0}{\sqrt{\var(\hat{\beta}_n)}}\overset{d}{\longrightarrow}\mathcal{N}(0,1),$$
when $n\to\infty$, where $\overset{d}{\longrightarrow}$ means convergence in distribution.
\end{definition}

Let us state our main technical result.

\begin{theorem}\label{thm:mainresult}
Let $\frac{1}{\sqrt{n}} X$ be a normalized random sampling matrix associated to a BOS with constant $K\geq 1$. For $\delta\in(0,1)$ set $t:=(36\cdot \frac{1+\delta}{1-\delta}+1)s_0$ and assume that
$$n\geq CK^2\delta^{-2}t\log(tK^2/\delta)\log(ep),\qquad C>0.$$
Let further $\lambda\geq 2\lambda_0:=2\frac{\sigma\sqrt{K}}{\sqrt{n}}(2+\sqrt{10\log p})$. Then, the following decomposition holds
\begin{equation}\label{eq:key_decomposition}
    \sqrt{n}(\hat{\beta}^u-\beta^0)=W+R,
\end{equation}
where the desparsified LASSO $\hat{\beta}^u$ is defined in \eqref{eq:desparsified_LASSO} with $M=I_{p\times p}$, $W\mid X\sim\mathcal{N}(0,\sigma^2\hat{\Sigma})$ with the noise level $\sigma$ from model \eqref{Cmodel} and, for $\eta \in (0,1)$
\begin{align*}
\mathbb{P}&\left(\Vert R\Vert_{\infty}\leq \frac{8K^{5/2}C_t^\sigma s_0 \sqrt{(\log p)/18}\log(4p/\eta)}{n} + 4 \sqrt{\frac{ K^3 (C_\delta^\sigma)^2 s_0 \log(4p/\eta) \log p}{n}} \right) \\
&\geq 1- \eta -p^{-2} - 2\exp(-\delta^2n/(tK^2)).
%\mathbb{P}\left(\Vert R\Vert_{\infty}\geq \frac{2s_0\log^{3/2} p}{n}\left(\frac{\sqrt{2}}{3}+\sqrt{32\left(KC_{\delta}^{\sigma}\right)^4+8c K^2C_t^{\sigma}\frac{n}{s_0\log p}}\right)\right)\leq 5p^{-2}.
\end{align*}
The positive constants $C_t^\sigma$ and $C_\delta^\sigma$ depend only on $\sigma,\delta_t$ and $\sigma, \delta_{s_0}$, respectively.
In particular, if $n\geq C_1 s_0\log^2 p$, then
\begin{equation}\label{eq:main_result}
    \mathbb{P}\left(\Vert R\Vert_{\infty}\geq C_2\frac{\sqrt{s_0}\log p}{\sqrt{n}}\right)\leq 2p^{-2}+2\exp(-\delta^2n/(tK^2))
\end{equation}
where $C_1,C_2> 0$ are constants depending
only on $K, C_\delta^\sigma$ and $C_t^\sigma$.
%with $C(K, \sigma,\delta_t)=\frac{4}{3}\left(1+12\left(KC_{\delta}^{\sigma}\right)^2+\frac{2}{\sqrt{3}}K\sqrt{C_t^{\sigma}}\right)$.
%In particular, if $n\geq s_0\log p$, then
%\begin{equation}\label{eq:main_result}
%    \mathbb{P}\left(\Vert R\Vert_{\infty}\geq C(\sigma,\delta_t)\frac{\sqrt{s_0}\log p}{\sqrt{n}}\right)\leq 5p^{-2}
%\end{equation}
%with $C(K, \sigma,\delta_t)=\frac{4}{3}\left(1+12\left(KC_{\delta}^{\sigma}\right)^2+\frac{2}{\sqrt{3}}K\sqrt{C_t^{\sigma}}\right)$.
\end{theorem}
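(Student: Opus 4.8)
The plan is to start from the algebraic identity that defines the desparsified LASSO. With $M = I_{p\times p}$, combining \eqref{eq:desparsified_LASSO} with the KKT expansion from Section \ref{subsec:desparsified_LASSO} gives
\begin{equation*}
\sqrt{n}(\hat{\beta}^u - \beta^0) = \frac{1}{\sqrt{n}} X^* \varepsilon - \sqrt{n}(\hat{\Sigma} - I_{p\times p})(\hat{\beta} - \beta^0).
\end{equation*}
Setting $W := \frac{1}{\sqrt{n}} X^* \varepsilon$ and $R := -\sqrt{n}(\hat{\Sigma} - I_{p\times p})(\hat{\beta} - \beta^0)$ immediately yields \eqref{eq:key_decomposition}. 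Conditionally on $X$, the vector $X^*\varepsilon/\sqrt{n}$ is a linear image of the complex Gaussian $\varepsilon \sim \mathcal{CN}(0,\sigma^2 I_n)$, so $W \mid X \sim \mathcal{N}(0, \sigma^2 \hat{\Sigma})$, as claimed. Everything then reduces to controlling $\|R\|_\infty$ in probability.

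First I would establish the deterministic ingredients that hold on the RIP event. By Theorem \ref{thm:rip for BOS}, the choice of $t$ and the sample complexity assumption guarantee that $\frac{1}{\sqrt{n}}X$ has RIP of order $t \geq (36\frac{1+\delta}{1-\delta}+1)s_0$ with constant $\delta_t \leq \delta$ on an event of probability at least $1 - 2\exp(-\delta^2 n/(tK^2))$; in particular it also has RIP of order $s_0$. On this event, standard compressed-sensing/LASSO theory (the RIP-based oracle inequality, using $\lambda \geq 2\lambda_0$ and the choice of $\lambda_0$ to dominate $\|X^*\varepsilon\|_\infty/n$ with probability at least $1 - p^{-2}$ via a union bound over $p$ complex Gaussian inner products, each of variance $\sigma^2 n K$ after scaling) gives both the $\ell_2$-bound $\|\hat\beta - \beta^0\|_2 \lesssim C_\delta^\sigma \sqrt{s_0 \log p}/\sqrt{n}$ and the cone/support-stability facts needed below. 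The constant $C_\delta^\sigma$ absorbs the dependence on $\sigma$ and $\delta_{s_0}$. I would also record the analogous bound on $\|\hat\beta - \beta^0\|_1 \lesssim C_t^\sigma s_0 \sqrt{\log p}/\sqrt{n}$ using the order-$t$ RIP, since the two terms in the final bound clearly come from an $\ell_\infty/\ell_1$ estimate and an $\ell_2/\ell_2$ estimate respectively.

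The heart of the argument is bounding $\|(\hat\Sigma - I)(\hat\beta - \beta^0)\|_\infty$. I would split $\hat\beta - \beta^0$ into its restriction to a set $S$ of size $t$ (containing $\supp\beta^0$ and the $t - s_0$ largest remaining coordinates) and its complement $S^c$. For the tail $S^c$, I use $\|(\hat\Sigma - I)(\hat\beta-\beta^0)_{S^c}\|_\infty \leq \max_{i,j}|\hat\Sigma_{ij} - \delta_{ij}| \cdot \|(\hat\beta-\beta^0)_{S^c}\|_1$; the off-diagonal entries of $\hat\Sigma$ are bounded in $[0,K^2]$ and concentrate around $0$ by Bernstein's inequality (Lemma \ref{le:expectation_sample_cov} gives mean zero), so a union bound over $p^2$ entries gives $\max_{i,j}|\hat\Sigma_{ij}-\delta_{ij}| \lesssim K^2\sqrt{\log(p/\eta)/n}$ on an event of probability $\geq 1 - \eta$; combined with a cone bound $\|(\hat\beta-\beta^0)_{S^c}\|_1 \lesssim \|(\hat\beta-\beta^0)_S\|_1 \lesssim \sqrt{t}\,\|\hat\beta-\beta^0\|_2$, this produces the $\sqrt{s_0 \log p \cdot \log(p/\eta)/n}$-type second term. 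For the head $S$, I would write $(\hat\Sigma - I)(\hat\beta-\beta^0)_S = \frac{1}{n}X^* X (\hat\beta-\beta^0)_S - (\hat\beta-\beta^0)_S$ and bound its $\ell_\infty$ norm coordinatewise by a matrix Bernstein / row-wise Bernstein argument exploiting that $(\hat\beta-\beta^0)_S$ is $t$-sparse and that RIP controls $\|X(\hat\beta-\beta^0)_S\|_2$; this is where the $K^{5/2}$ factor and the $s_0\sqrt{\log p}\log(p/\eta)/n$ first term arise, using the $\ell_1$-bound on $\hat\beta - \beta^0$. Multiplying through by $\sqrt{n}$ and collecting the two contributions gives the stated high-probability bound on $\|R\|_\infty$, with failure probability $\eta + p^{-2} + 2\exp(-\delta^2 n/(tK^2))$. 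The simplified conclusion \eqref{eq:main_result} then follows by setting $\eta = p^{-2}$, noting $\log(4p/\eta) \lesssim \log p$, and observing that under $n \geq C_1 s_0 \log^2 p$ the first (order $s_0 \log^{3/2} p / n$) term is dominated by the second (order $\sqrt{s_0}\log p/\sqrt{n}$) one.

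The main obstacle is the head term: unlike in the Gaussian case of \cite{Javanmard.2018}, the rows of $X$ are heavy-tailed and $X\Sigma^{-1}e_i$ is not independent of $X_{-i}$, so the clean leave-one-out argument is unavailable. I expect to instead lean on a sample-splitting device together with RIP and Bernstein-type concentration for the quadratic form $e_i^* X^* X (\hat\beta-\beta^0)_S / n$, which is delicate because $\hat\beta$ itself depends on $X$; decoupling this dependence (via the split samples, so that the estimator is built on one half and the inference quadratic form uses the other) while keeping the sparsity level $s_0$ rather than $s_0^2$ in the sample complexity is the crux of the whole theorem.
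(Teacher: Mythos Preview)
Your high-level ingredients are right---sample splitting to decouple $u:=\hat\beta-\beta^0$ from the design used for inference, row-wise Bernstein, and the RIP-based $\ell_1$/$\ell_2$ oracle bounds---and you correctly identify sample splitting as the replacement for the Gaussian leave-one-out trick. But your head/tail decomposition of $u$ is not how the paper proceeds, and as you describe it, it does \emph{not} give the claimed bound. Your tail estimate reads
\[
\sqrt{n}\,\max_{i,j}|\hat\Sigma_{ij}-\delta_{ij}|\cdot\|u_{S^c}\|_1
\;\lesssim\; \sqrt{n}\cdot K^2\sqrt{\tfrac{\log(p/\eta)}{n}}\cdot \sqrt{s_0}\,\|u\|_2
\;\lesssim\; K^{5/2}C_\delta^\sigma\,\frac{s_0\sqrt{\log p\,\log(p/\eta)}}{\sqrt{n}},
\]
which is $s_0/\sqrt{n}$, not $\sqrt{s_0/n}$: you are off by a factor $\sqrt{s_0}$, and this route would only recover the old $n\gtrsim s_0^2\log^2 p$ regime rather than the near-optimal one. (If instead you choose $S=\supp\beta^0\cup\supp\hat\beta$, then by \eqref{eq:support_LASSO_cardinality} one has $|S|\le t$ and $u_{S^c}=0$ exactly, so the split becomes vacuous and all the work is in the ``head''---but then you should not be invoking a cone bound on $\|u_{S^c}\|_1$.)

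The paper avoids this altogether. It never splits $u$; instead, after sample splitting, for each fixed $j$ it writes
\[
R_j=\frac{1}{\sqrt{n}}\sum_{i=1}^n Z_i,\qquad Z_i=(1-|X_{ij}|^2)u_j+\overline{X}_{ij}\langle \overline{x}_i,u^{(-j)}\rangle,
\]
so the split is \emph{diagonal versus off-diagonal} in $\hat\Sigma-I$, not head versus tail in $u$. Conditioned on $u$, the $Z_i$ are independent mean-zero bounded random variables with $\mathbb{E}|Z_i|^2\lesssim K^2\|u\|_2^2$ and $|Z_i|\lesssim K^2\|u\|_1$. A single application of Bernstein (per $j$, then union bound) yields the stated estimate, and the two displayed terms are precisely the sub-Gaussian (variance, fed by the $\ell_2$ oracle bound) and sub-exponential (sup-norm, fed by the $\ell_1$ oracle bound) regimes of Bernstein---not a head term and a tail term. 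Rewriting your ``head'' argument in this form is exactly what is needed; the separate $\max_{i,j}|\hat\Sigma_{ij}-\delta_{ij}|$ estimate and the head/tail bookkeeping can be dropped.
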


One of the key tools for showing asymptotic normality of de-sparsified estimators is the existence of sharp $\ell_1$ and $\ell_2$ oracle estimates. As it is standard in the compressive sensing literature \cite{Foucart.2013}, we work with the RIP. Since we are interested in a random sampling matrix associated to a BOS, we know from Theorem \ref{thm:rip for BOS} that these matrices fulfill the RIP with high probability provided that the number of measurements $n$ is high enough. In order to establish a bound for the remainder term $R=(M \hat{\Sigma} - I_{p\times p})(\hat{\beta} - \beta^0)$, we start by stating oracle bounds for $\hat{\beta} - \beta^0$, where $\hat{\beta}$ is the LASSO estimator, that minimizes \eqref{eq:LASSO}. In particular, throughout the proof of our main results, we will use the bounds
$$\Vert \hat{\beta}-\beta^0\Vert_2\leq C_{\delta}^{\sigma}\sqrt{K}\cdot \frac{\sqrt{s_0\log p}}{\sqrt{n}} \quad \text{and} \quad \Vert\hat{\beta}-\beta^0\Vert_1\leq C_t^{\sigma}\sqrt{K}\cdot\frac{ s_0\sqrt{\log p}}{\sqrt{n}}$$
with constants $C_{\delta}^{\sigma},C_t^{\sigma}>0$, that hold under the RIP and are proven in the Appendix \ref{subsec:oracle}. Such bounds are widely available in the statistics literature, where it is usually assumed that the design matrix fulfills the restricted eigenvalue condition \cite[Chapter 7]{wainwright2019high} or the compatibility condition \cite[Chapter 6]{Buhlmann.2011}. In contrast, in the literature for matrices associated with a BOS and, more generally, as stated above, in the compressive sensing literature, one usually works with the RIP \cite{Foucart.2013}. In Appendix \ref{appendix} we state and prove, for the sake of completeness, the consistency of the LASSO estimator under the RIP \cite{Foucart.2013}. See \cite{SaraA.vandeGeer.2009, juditsky2011verifiable} for a discussion about the different sufficient conditions for sparse regression and the relationship among them.

We now possess all the tools for proving our main result. For this we decompose the bias term $R$ into two parts, $R^1$ and $R^2$, corresponding to the diagonal and the non-diagonal entries of $\hat{\Sigma}-I_{p\times p}$, respectively. Ideally, we want $\hat{\Sigma}-I_{p\times p}$ to be equal to the zero matrix $0_{p\times p}$ which would require $\langle X_i,X_i\rangle=1$ on the diagonal and $\langle X_i,X_j\rangle=0$ for every off-diagonal entry, $i,j\in[p]$. In order to achieve the aforementioned result in expectation, we will explore the fact that the rows of a matrix associated to a BOS are orthonormal and, consequently, the columns are orthogonal in expectation, c.f. Lemma \ref{le:expectation_sample_cov}.

\begin{proof}[Proof of Theorem \ref{thm:mainresult}]
The proof consists of three major steps: First, we split the data samples in order to make the estimator $\hat{\beta}(y,X)$ independent from $(y,X)$. Next, we prove that the remainder term $R$ is a sum of independent mean-zero bounded random variables. To conclude, we will prove asymptotic bounds for the probability that each of these random variables is large.

In the first part, we show the independence of the matrix rows $x_i$ and the LASSO solution $\hat{\beta}=\hat{\beta}(y,X)$ by using a sample splitting argument. Without loss of generality, we assume that the number of measurements $n$ is even. We use the first half of the data, denoted by $(\tilde{y}_i,\tilde{x}_i)$, to estimate $\hat{\beta}(\tilde{y},\tilde{X})$ while the second half, here denoted by $(y_i,x_i)$, is used to construct the confidence intervals via the desparsified LASSO $\hat{\beta}^u:=\hat{\beta}(\tilde{y},\tilde{X})+\frac{X^*(y-X\hat{\beta}(\tilde{y},\tilde{X}))}{n}.$

By an abuse of notation, during the proof we use the same symbols to denote both subsets of the data, i.e., $(y_i,x_i)$, and we write $n$ instead of $n/2$ for simplicity. Now, we follow the steps of \cite{Javanmard.2018} and decompose $\hat{\beta}^u-\beta^0$ into two terms. Indeed, by writing $y-X\hat{\beta}=\varepsilon+X(\beta^0-\hat{\beta})=\varepsilon+X_j(\beta^0_j-\hat{\beta}_j)+X_{-j}(\beta^0_{-j}-\hat{\beta}_{-j})$, the difference 
$$\sqrt{n}(\hat{\beta}^u-\beta^0)=\sqrt{n}\hat{\beta}+\frac{X^*(y-X\hat{\beta})}{\sqrt{n}}-\sqrt{n}\beta^0$$ becomes componentwise
\begin{alignat*}{2}\label{eq:proofdecomposition}
\sqrt{n}(\hat{\beta}_j^u-\beta^0_j)&=&&\sqrt{n}(1-\frac{1}{n}X_j^*X_j)(\hat{\beta}_j-\beta_j^0)+\frac{1}{\sqrt{n}}X_j^*(\varepsilon+ X_{-j}(\beta^0_{-j}-\hat{\beta}_{-j}))\\
&=&&\sqrt{n}(1-\frac{1}{n}X_j^*X_j)(\hat{\beta}_j-\beta_j^0)+\frac{1}{\sqrt{n}}X_j^*X_{-j}(\beta^0_{-j}-\hat{\beta}_{-j})+\frac{1}{\sqrt{n}}X_j^*\varepsilon.
\end{alignat*}
Defining $W_j:=\frac{1}{\sqrt{n}}X_j^*\varepsilon$ and conditioning on $X$, we obtain that $W\mid X\sim\mathcal{N}(0,\sigma^2\hat{\Sigma})$, since $\varepsilon\sim\mathcal{CN}(0,\sigma^2 I_{n\times n})$. Furthermore, by defining $u:=\hat{\beta}-\beta^0$, and conditioning on $u$, the term
\begin{align*}
    R^{(1)}_j:=\sqrt{n}(1-\frac{1}{n}X_j^*X_j)u_j=\frac{1}{\sqrt{n}}\sum\limits_{i=1}^n(1-\vert X_{ij}\vert^2)u_j
\end{align*}
consists of mean-zero bounded random variables $Z_i^{(1)}:=(1-\vert X_{ij}\vert^2)u_j$. Due to the sample splitting $u$ is independent of $X$. With an abuse of notation, whenever we write for example $\mathbb{E}[Z_i^{(1)}]$, we actually mean $\mathbb{E}[Z_i^{(1)} | u]$. Indeed, since $\mathbb{E}[x_ix_i^*]=I_{p\times p}$, the mean of $Z_i^{(1)}$ is given by
\begin{align*}
    \mathbb{E}[Z_i^{(1)}]&=(1-\mathbb{E}[\overline{X}_{ij}X_{ij}])u_j=(1-e_j^T\mathbb{E}[x_ix_i^*]e_j)u_j=0.
\end{align*}
Using $\vert X_{ij}\vert\leq K$, the variance of $Z_i^{(1)}$ can be estimated by
\begin{align*}
    \mathbb{E}[\vert Z_i^{(1)}\vert^2] &=\vert u_j\vert^2\mathbb{E}[(1-\overline{X}_{ij}X_{ij})(1-\overline{X}_{ij}X_{ij})]
%    &=\vert u_j\vert^2 \left(1-\mathbb{E}[\overline{X}_{ij}X_{ij}]-\mathbb{E}[\overline{X}_{ij}X_{ij}]+\mathbb{E}[\overline{X}_{ij}X_{ij}\overline{X}_{ij}X_{ij}]\right)\\
    =\vert u_j\vert^2(1-2+\mathbb{E}[\overline{X}_{ij}\vert X_{ij}\vert^2X_{ij}])\\
    &=\vert u_j\vert^2(\mathbb{E}[\vert X_{ij}\vert^2\vert X_{ij}\vert^2-1])
    \leq (K^2-1)\vert u_j\vert^2.
\end{align*}
Finally,
\begin{align*}
    \vert Z_j^{(1)}\vert&=\vert 1-\vert X_{ij}\vert^2\vert\cdot\vert u_j\vert\leq K^2\Vert u\Vert_{\infty}\leq K^2\Vert u\Vert_{1}.
\end{align*}

Secondly, we rewrite the remaining error term as
\begin{align*}
R_j^{(2)}&=\frac{1}{\sqrt{n}}X_j^*X_{-j}u_{-j}=\frac{1}{\sqrt{n}}\sum\limits_{i=1}^nX_{ji}^*(X_{-j}u_{-j})_i\\
&=\frac{1}{\sqrt{n}}\sum\limits_{i=1}^n\overline{X}_{ij}\langle (\overline{x}_i)_{-j},u_{-j}\rangle=\frac{1}{\sqrt{n}}\sum\limits_{i=1}^n\overline{X}_{ij}\langle \overline{x}_i,u^{(-j)}\rangle,
\end{align*}
where $u^{(-j)}:=u-u_j e_j$ is the vector $u$ with its $j$-th component set to zero. With the same argument as above, for fixed $j$ and conditioned on $u$ the random variables $Z_i^{(2)}:=\overline{X}_{ij}\langle \overline{x}_i,u^{(-j)}\rangle$ are independent, mean-zero and bounded. Indeed, the expected value of $Z_i^{(2)}$ is given by
$$\mathbb{E}[Z_i^{(2)}]=\mathbb{E}[\overline{X}_{ij}\overline{x}_i^*]u^{(-j)}=e_j^T\mathbb{E}[\overline{x}_i\overline{x}_i^*]u^{(-j)}=e_j^Tu^{(-j)}=0$$
and the variance can be estimated by
\begin{align*}
    \mathbb{E}[\vert Z_i^{(2)}\vert^2]=(u^{(-j)})^*\mathbb{E}[\overline{x}_i\overline{X}_{ij}^*\overline{X}_{ij}\overline{x}_i^*]u^{(-j)}=(u^{(-j)})^*\mathbb{E}[\overline{x}_i\vert \overline{X}_{ij}\vert^2\overline{x}_i^*]u^{(-j)}\leq K^2 \Vert u\Vert_2^2.
\end{align*}
Furthermore, we have the bound 
$$\vert Z_i^{(2)}\vert\leq K\vert \langle \overline{x}_i,u^{(-j)}\rangle\vert\leq K^2\Vert u\Vert_1.$$
After introducing $Z_i:=Z_i^{(1)}+Z_i^{(2)}$ we set $R=R^1+R^2=\frac{1}{\sqrt{n}}\sum\limits_{i=1}^nZ_i^{(1)}+Z_i^{(2)}=\frac{1}{\sqrt{n}}\sum\limits_{i=1}^nZ_i$. Hence, 
$$\vert Z_i\vert^2\leq \vert Z_i^{(1)}\vert^2+2\vert Z_i^{(1)}\vert\cdot \vert Z_i^{(2)}\vert+\vert Z_i^{(2)}\vert^2\leq 2(\vert Z_i^{(1)}\vert^2+\vert Z_i^{(2)}\vert^2).$$
Therefore, conditioned on the events $\mathcal{F}:=\{\varepsilon\in\mathbb{C}^n:\max\limits_{j\in[p]}\frac{2}{n}\vert\langle\varepsilon,X_j\rangle\vert\leq\lambda_0\}$ and $\mathcal{G}:=\{\frac{1}{\sqrt{n}}X\in\mathbb{C}^{n\times p}:\frac{1}{\sqrt{n}}X\text{ satisfies RIP of order } t\text{ with constant }\delta_t<\delta\}$, we obtain by Lemma \ref{le:l2oracleinequality} and Theorem \ref{thm:LASSOforRIP}, respectively, that
\begin{align*}
    &\mathbb{E}[Z_i]=\mathbb{E}[Z_i^{(1)}]+\mathbb{E}[Z_i^{(2)}]=0,\\
    &\mathbb{E}[\vert Z_i\vert^2]\leq 2(\mathbb{E}\vert Z_i^{(1)}\vert^2+\mathbb{E}\vert Z_i^{(2)}\vert^2)\leq 4K^2\Vert u\Vert_2^2\leq 4K^3\left(C_{\delta}^{\sigma}\right)^2\frac{s_0\log p}{n},\\
    &\vert Z_i\vert\leq \vert Z_i^{(1)}\vert+\vert Z_i^{(2)}\vert\leq 2K^2\Vert u\Vert_1\leq 8K^2\frac{\lambda_0s_0}{1-\delta_t}\leq 2K^{5/2} C_t^{\sigma} \frac{s_0\sqrt{\log p}}{\sqrt{n}},
%    &\vert Z_i\vert\leq \vert Z_i^{(1)}\vert+\vert Z_i^{(2)}\vert\leq 2K^2\Vert u\Vert_{\infty}\leq 2K^2 \sigma \sqrt{\frac{\log p}{n}},
\end{align*}
with $C_t^{\sigma}=\frac{16\sqrt{10}\sigma}{1-\delta_t}$, where we also used the choice $\lambda_0=\frac{\sigma\sqrt{K}}{\sqrt{n}}(2+\sqrt{10\log p})$ in Lemma \ref{6.2}. This, in particular, means that the event $\mathcal{F}$ will occur with high probability. As for the event $\mathcal{G}$, it will also occur with high probability since we have assumed $n\geq CK^2\delta^{-2}t\log(tK^2/\delta)\log(ep)$, cf. Theorem \ref{thm:rip for BOS}. Now, we fix $j\in[p]$ and decompose $\sum\limits_{i=1}^nZ_i$ into its real and imaginary parts and bound each one of them separately,
\begin{align*}
\mathbb{P}(\vert R_j\vert\geq t)&=\mathbb{P}\left(\frac{1}{\sqrt{n}}\left\vert\sum\limits_{i=1}^nZ_i\right\vert\geq t\right)
%&=\mathbb{P}\left(\left\vert\sum\limits_{i=1}^nZ_i\right\vert^2\geq t^2n\right)
=\mathbb{P}\left(\left(\Re\sum\limits_{i=1}^nZ_i\right)^2+\left(\Im\sum\limits_{i=1}^nZ_i\right)^2\geq t^2n\right)\\
&\leq \mathbb{P}\left(\left(\Re\sum\limits_{i=1}^nZ_i\right)^2\geq\frac{t^2n}{2}\right)+\mathbb{P}\left(\left(\Im\sum\limits_{i=1}^nZ_i\right)^2\geq\frac{t^2n}{2}\right)\\
&=\mathbb{P}\left(\left\vert\sum\limits_{i=1}^n\Re Z_i\right\vert\geq \frac{t\sqrt{n}}{\sqrt{2}}\right)+\mathbb{P}\left(\left\vert\sum\limits_{i=1}^n\Im Z_i\right\vert\geq \frac{t\sqrt{n}}{\sqrt{2}}\right),
\end{align*}
where the probabilities above are conditioned on $u:=\hat{\beta}-\beta^0$.
Due to the sample splitting introduced at the beginning of the proof, the variables $Z_i$ are independent. Therefore, Bernstein's inequality (Theorem \ref{bernstein}) yields 
$$\mathbb{P}\left(\frac{1}{\sqrt{n}}\left\vert\sum\limits_{i=1}^nZ_i\right\vert\geq t\right)\leq 4\exp\left(-\frac{t^2n/4}{4K^3\left(C_{\delta}^{\sigma}\right)^2s_0\log p+2K^{5/2}C_t^\sigma s_0\sqrt{\log p}t/\sqrt{18}}\right).$$
Taking the union bound over all entries $R_j$, $j=1,\hdots,p$ gives
\begin{align*}
\mathbb{P}(\Vert R\Vert_{\infty}\geq t) & =\mathbb{P}\left(\max\limits_{j\in[p]}\vert R_j\vert\geq t\right)\leq\sum\limits_{j=1}^p\mathbb{P}(\vert R_j\vert\geq t) \\
& \leq
4p \exp\left(-\frac{t^2n/4}{4 K^3 \left(C_{\delta}^{\sigma}\right)^2s_0\log p+2K^{5/2} C_t^\sigma s_0\sqrt{\log p}t/\sqrt{18}}\right).
\end{align*}
The left hand side is at most $\eta \in (0,1)$ if
\[
\frac{t^2n/4}{4 K^3 (C_{\delta}^\sigma)^2 s_0 \log p + 2K^{5/2} C_{t}^\sigma s_0 \sqrt{(\log p)/18}\; t} \geq \log(4p/\eta),
\]
which is equivalent to
\[
a t^2 - 2b t -c \geq 0
\]
with $a = n/(4 \log(4p/\eta))$, $b = K^{5/2} C_t^\sigma s_0 \sqrt{(\log p)/18}$ and
$c = 4 K^3 (C_\delta^\sigma)^2 s_0 \log p$.
The above inequality is equivalent to 
$t \geq \frac{b}{a} + \sqrt{\frac{c}{a} + \frac{b^2}{a^2}}$
which in turn is implied by
\[
t \geq \frac{2b}{a} + \sqrt{\frac{c}{a}}.
\]
Setting $t$ equal to the right hand side above and plugging in the values of $a,b,c$ we obtain
\[
\mathbb{P}\left(\Vert R\Vert_{\infty}\leq \frac{8K^{5/2}C_t^\sigma s_0 \sqrt{(\log p)/18}\log(4p/\eta)}{n} + 4 \sqrt{\frac{K^3 (C_\delta^\sigma)^2 s_0 \log(4p/\eta) \log p}{n}} \right) \geq 1- \eta.
\]
Setting $\eta = p^{-c}$ and assuming the regime $n \geq K^3 (C_\delta^\sigma)^2 s_0 \log(4p/\eta) \log(p)$ this implies (by noting that $x \leq \sqrt{x}$ for $x \in [0,1]$)
\[
\mathbb{P}\left(\Vert R\Vert_{\infty}\leq
\underbrace{\left(4 + \frac{8C_t^\sigma}{\sqrt{K}(C_\delta^\sigma)^2 \sqrt{18\log p}}\right) \sqrt{\frac{(K^2 C_\delta^\sigma)^2 s_0 \log(4p^{c+1}) \log p}{n}}}_{=:q} \right) \geq 1-p^{-c}.
\]
%The right hand side is less or equal than $4p^{-c}$, with a constant $c > 0$, if
%$$ t\geq \frac{s_0\log^{3/2} p}{n}\left(\frac{\sqrt{2}}{3}+\sqrt{\frac{32}{9}c^2\left(KC_{\delta}^{\sigma}\right)^4+16c K^2C_t^{\sigma}\frac{n}{s_0\log p}}\right).$$
%In the regime $n\geq s_0\log p$, it holds that $\frac{s_0\log p}{n}\leq\frac{\sqrt{s_0}\log p}{\sqrt{n}}.$ Hence, a suitable choice of $t$ is
%\begin{align*}
%    t&:=\frac{2\sqrt{2}}{3}\left(1+4c\left(KC_{\delta}^{\sigma}\right)^2+\frac{2}{3}K\sqrt{c C_t^{\sigma}}\right)\frac{\sqrt{s_0}\log p}{\sqrt{n}}\\
%    &\geq \frac{2\sqrt{2}}{3}\left(1+4c\left(KC_{\delta}^{\sigma}\right)^2\right)\frac{s_0\log^{3/2} p}{n}+\sqrt{8}K\sqrt{c C_t^{\sigma}}\frac{\sqrt{s_0}\log p}{\sqrt{n}}\\
%    &\geq \frac{2s_0\log^{3/2} p}{n}\left(\frac{\sqrt{2}}{3}+\sqrt{\frac{32}{9}c^2\left(KC_{\delta}^{\sigma}\right)^4+8c K^2C_t^{\sigma}\frac{n}{2s_0\log p}}\right).
%\end{align*}

By the law of total probability, Lemma \ref{6.2} and Theorem \ref{thm:rip for BOS} we obtain
\begin{align*}
    \mathbb{P}(\Vert R\Vert_{\infty}\geq q) &= \mathbb{P}(\Vert R\Vert_{\infty}\geq q\mid \mathcal{F}\cap \mathcal{G})\mathbb{P}(\mathcal{F}\cap\mathcal{G})+\mathbb{P}(\Vert R\Vert_{\infty}\geq q\mid (\mathcal{F}\cap\mathcal{G})^c)\cdot\mathbb{P}((\mathcal{F}\cap\mathcal{G})^c)\\
    &\leq \mathbb{P}(\Vert R\Vert_{\infty}\geq q\mid \mathcal{F}\cap \mathcal{G})+\mathbb{P}({\mathcal{F}^c})+\mathbb{P}({\mathcal{G}^c})\\
    &\leq p^{-c}+p^{-2}+2\exp(-\delta^2n/(tK^2)).
\end{align*}
Choosing $c=2$ yields $ \mathbb{P}(\Vert R\Vert_{\infty}\geq q)\leq 2p^{-2}+2\exp(-\delta^2n/(tK^2))$. This ends the proof.
\end{proof}

In order to guarantee an approximate Gaussian distribution for $\sqrt{n}(\hat{\beta}^u-\beta^0)$, two sufficient conditions play a role - the fact that the measurement matrix satisfies the RIP, which requires $n \gtrsim  s_0\log^2 s_0\log p$ samples and the fact that the bias term $R$ asymptotically vanishes if $n \gtrsim  s_0\log^2 p $. Therefore, in very precise terms, our sample complexity reads as $ n \gtrsim \max\{ s_0\log^2 s_0\log p, s_0\log^2 p \}$.

The proof of \cite[Theorem 3.8]{Javanmard.2018} avoids sample splitting by using a refined analysis with a leave-one-out approach. A crucial point is the independence of $X\Sigma^{-1}e_i$ and $X_{-i}$, which holds for a Gaussian design with independently sampled rows  \cite[Lemma 3.6]{Javanmard.2018}, but not for a (subsampled) Fourier matrix. Therefore, we believe that it is a challenging task to work without sample splitting in our case.

\section{Confidence regions}\label{sec:confidence regions}

Theorem \ref{thm:mainresult} established the asymptotic normality for the c-LASSO, i.e., the normality of the differences $\sqrt{n}(\hat{\beta}^u-\beta^0)$ and $\sqrt{n}(\hat{\beta}^u_g-\beta^0)$ in the case where $n,p\to\infty$ and $p/n\to\kappa \in (0, \infty)$ provided that $n \gtrsim s_0^2 \log p$ and $n \gtrsim s_g^2 \log p$, respectively. This allows us to perform uncertainty quantification for $\beta^0$ based on the estimator $\hat{\beta}^u$ (or $\hat{\beta}_g^u$). Given the way how c-LASSO connects the real and imaginary parts the natural shape for a confidence region will be a circle. Indeed, the set $J^{\circ}_i(\alpha):=\{z\in\mathbb{C}:\vert\hat{\beta}_i^u-z\vert\leq\delta_i^{\circ}(\alpha)\}$ with length
$$\delta_i^{\circ}(\alpha):=\frac{\sigma\hat{\Sigma}_{ii}^{1/2}}{\sqrt{2n}}\Psi(1-\alpha)^{-1},$$
contains the ground truth $\beta^0$ with a significance level $\alpha\in (0,1)$. We denote by $\Psi(\,\cdot\,)^{-1}$ the quantile function of a Rayleigh distribution with parameter equal to $1$, which can be analytically expressed as $\Psi(1-\alpha)^{-1}=\sqrt{2\log{(1/\alpha)}}$. Note, however, that the confidence region depends on the true noise standard deviation $\sigma$ that is not known a priori. Nevertheless, the confidence regions will still be valid if one replaces $\sigma$ with a consistent estimator $\hat{\sigma}$. A discussion for valid noise estimators will be found at the end of this section. See \cite[Chapter 6]{wasserman2013all} for a discussion on the background of pointwise asymptotic confidence intervals. Our main UQ result for the c-LASSO is given by

\begin{theorem}\label{thm:confidence circle}
Let $\frac{1}{\sqrt{n}} X$ be a normalized random sampling matrix associated to a BOS with constant $K\geq 1$. For $\delta\in(0,1)$ set $t:=(36\cdot \frac{1+\delta}{1-\delta}+1)$ and assume that $n\geq CK^2\delta^{-2}t\log(tK^2/\delta)\log(ep)$ with a constant $C>0$. Let $\lambda\geq 2\lambda_0$. Let further $\hat{\sigma}$ be a noise estimator satisfying \eqref{eq:consistent_noise}.
Assume an asymptotic regime where $n\to\infty$ and $p=p(n)\to\infty$ with $p/n\to\kappa \in (0, \infty)$ such that $\frac{s_0\log^2 p}{n}\to 0$. For every $i\in [p]$ set the confidence circle for $\beta_i^0$ estimated via the desparsified LASSO derived from c-LASSO as
\begin{equation}\label{eq:def_conf_circle}
    J^{\circ}_i(\alpha):=\{z\in\mathbb{C}:\vert\hat{\beta}_i^u-z\vert\leq\delta_i^{\circ}(\alpha)\},\qquad \alpha\in(0,1),
\end{equation}
with radius $\delta^{\circ}(\alpha):=\frac{\hat{\sigma}\hat{\Sigma}_{ii}^{1/2}}{\sqrt{n}}\sqrt{\log\left(\frac{1}{\alpha}\right)}$. Then, $\lim\limits_{n\to\infty}\mathbb{P}\left(\beta_i^0\in J^{\circ}_i(\alpha)\right)=1-\alpha.$
\begin{proof}
First, we note that
$$\lim\limits_{n\to\infty}\mathbb{P}\left(\vert\hat{\beta}^u_i-\beta_i^0\vert\leq\delta^{\circ}(\alpha)\right)=1-\lim\limits_{n\to\infty}\mathbb{P}\left(\vert\hat{\beta}^u_i-\beta_i^0\vert >\delta^{\circ}(\alpha)\right).$$
By Lemma \ref{C13Re} below, we conclude that for fixed $i$ the real random vector
$$\tilde{g}_i:=\frac{g_i}{\hat{\sigma}/\sqrt{2n}\hat{\Sigma}_{ii}^{1/2}}:=\begin{pmatrix}
\frac{\sqrt{n}\Re(\hat{\beta}_i^u-\beta_i^0)}{\hat{\sigma}/\sqrt{2}\hat{\Sigma}_{ii}^{1/2}} \\
\frac{\sqrt{n}\Im(\hat{\beta}_i^u-\beta_i^0)}{\hat{\sigma}/\sqrt{2}\hat{\Sigma}_{ii}^{1/2}}
\end{pmatrix}$$
is asymptotically standard Gaussian distributed:
$$\lim\limits_{n\to\infty}\sup\limits_{\Vert\beta_0\Vert_0\leq s_0}\tilde{g}_i\sim\mathcal{N}(0,I_{2\times 2}).$$
Hence, by taking $c=\sqrt{2\log{(1/\alpha)}}$ and applying the dominated convergence theorem, we have
\begin{align*}
\lim\limits_{n\to\infty}\mathbb{P}\left(\vert\hat{\beta}^u_i-\beta_i^0\vert > \delta^{\circ}(\alpha)\right)&=\lim\limits_{n\to\infty}\mathbb{P}\left(\Vert g_i\Vert_2 > \delta^{\circ}(\alpha)\right)=\lim\limits_{n\to\infty}\mathbb{P}\left(\Vert \tilde{g}_i\Vert_2 > c\right)\\
&=\mathbb{P}\left(\Vert\lim\limits_{n\to\infty} \tilde{g}_i\Vert_2 > c\right)=\mathbb{P}\left(\Vert\xi\Vert_2 > c\right),\\
\end{align*}
where $\xi\sim\mathcal{N}(0,I_{2\times 2})$. 
Therefore, $\Vert \xi\Vert_2$ is $\chi$ distributed with two degrees of freedom, also known as the Rayleigh distribution \cite[Chapter 11.3]{evans2011statistical}. In contrast to the one-dimensional standard Gaussian distribution, there is an analytic expression for the quantile, given by
$\mathbb{P}(\Vert \xi\Vert_2 > c)=\int\limits_{c}^{\infty}x\cdot e^{\frac{-x^2}{2}}dx=e^{-\frac{c^2}{2}}=\alpha.$ This concludes the proof.
\end{proof}
\end{theorem}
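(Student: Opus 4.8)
The idea is to reduce the coverage statement to a two–dimensional central limit theorem and then evaluate a Rayleigh tail in closed form. First I would observe that, directly from the definition of $J_i^\circ(\alpha)$,
\[
\mathbb{P}\!\left(\beta_i^0\in J_i^\circ(\alpha)\right)=\mathbb{P}\!\left(\vert\hat{\beta}_i^u-\beta_i^0\vert\le\delta_i^\circ(\alpha)\right)=1-\mathbb{P}\!\left(\vert\hat{\beta}_i^u-\beta_i^0\vert>\delta_i^\circ(\alpha)\right),
\]
and that, because $\delta_i^\circ(\alpha)=\tfrac{\hat{\sigma}\hat{\Sigma}_{ii}^{1/2}}{\sqrt n}\sqrt{\log(1/\alpha)}$, the event $\{\vert\hat{\beta}_i^u-\beta_i^0\vert>\delta_i^\circ(\alpha)\}$ is literally $\{\Vert\tilde g_i\Vert_2>c\}$ with $c:=\sqrt{2\log(1/\alpha)}$, where $\tilde g_i\in\mathbb{R}^2$ is the pair $\bigl(\sqrt n\,\Re(\hat{\beta}_i^u-\beta_i^0),\ \sqrt n\,\Im(\hat{\beta}_i^u-\beta_i^0)\bigr)$ divided by $\hat{\sigma}\hat{\Sigma}_{ii}^{1/2}/\sqrt2$. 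So everything reduces to identifying the limiting law of $\tilde g_i$.

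For that limiting law I would invoke the decomposition $\sqrt n(\hat{\beta}^u-\beta^0)=W+R$ from Theorem~\ref{thm:mainresult}. Conditionally on $X$ one has $W\mid X\sim\mathcal{CN}(0,\sigma^2\hat{\Sigma})$, so after dividing the real/imaginary pair of $W_i$ by $\sigma\hat{\Sigma}_{ii}^{1/2}/\sqrt2$ the resulting law is \emph{exactly} $\mathcal{N}(0,I_{2\times2})$, for every $n$ (and, being free of $X$, also unconditionally); meanwhile \eqref{eq:main_result} gives $\Vert R\Vert_\infty\to0$ in probability under the standing assumption $\tfrac{s_0\log^2 p}{n}\to0$, so the $R_i$–term contributes nothing in the limit. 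It then only remains to replace $\sigma$ by $\hat\sigma$ and to control the factor $\hat{\Sigma}_{ii}^{1/2}$: the former is licensed by the assumed consistency $\hat\sigma\to\sigma$ in probability from \eqref{eq:consistent_noise}, while $\hat{\Sigma}_{ii}=\tfrac1n\sum_{l=1}^n\vert X_{li}\vert^2$ concentrates at $\mathbb{E}\vert X_{li}\vert^2=1$ by Bernstein's inequality (using $\vert X_{li}\vert\le K$), which in particular keeps it bounded away from $0$. Slutsky's theorem then yields $\tilde g_i\overset{d}{\to}\mathcal{N}(0,I_{2\times2})$, and, tracking all the constants so that they are uniform over $s_0$–sparse $\beta^0$, uniformly over the parameter class; this is precisely the statement I would isolate as Lemma~\ref{C13Re}.

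Finally I would transfer the distributional convergence to the tail probability. Since $\xi\sim\mathcal{N}(0,I_{2\times2})$ puts no mass on the circle $\{\Vert z\Vert_2=c\}$, the complement of the closed ball is a continuity set, so (equivalently, by the dominated convergence theorem applied to the indicator functions) $\mathbb{P}(\Vert\tilde g_i\Vert_2>c)\to\mathbb{P}(\Vert\xi\Vert_2>c)$. As $\Vert\xi\Vert_2^2\sim\chi_2^2$, i.e.\ $\Vert\xi\Vert_2$ is Rayleigh with parameter $1$,
\[
\mathbb{P}(\Vert\xi\Vert_2>c)=\int_c^\infty x\,e^{-x^2/2}\,dx=e^{-c^2/2}=e^{-\log(1/\alpha)}=\alpha,
\]
whence $\lim_{n\to\infty}\mathbb{P}(\beta_i^0\in J_i^\circ(\alpha))=1-\alpha$.

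I expect the genuine work to be hidden in Lemma~\ref{C13Re}: one must pass from the merely \emph{conditional} Gaussianity of $W\mid X$ to an \emph{unconditional} limit law while simultaneously absorbing the random scaling $\hat{\sigma}\hat{\Sigma}_{ii}^{1/2}$ and the remainder $R$, and one must make sure that the exceptional events of Theorem~\ref{thm:mainresult} (failure of the RIP event $\mathcal{G}$, or of the noise event $\mathcal{F}$) have probability $o(1)$ so they cannot pollute the limit. Carrying every estimate through with constants uniform over the $s_0$–sparse parameter class is exactly what makes the resulting confidence region honest in the sense of \cite{li1989honest}.
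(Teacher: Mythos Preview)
Your proposal is correct and follows essentially the same approach as the paper: reduce to the two-dimensional normalized vector $\tilde g_i$, invoke Lemma~\ref{C13Re} for its asymptotic $\mathcal{N}(0,I_{2\times 2})$ law, and finish via the closed-form Rayleigh tail $\mathbb{P}(\Vert\xi\Vert_2>c)=e^{-c^2/2}=\alpha$ with $c=\sqrt{2\log(1/\alpha)}$. Your write-up is in fact a bit more explicit than the paper's in unpacking what Lemma~\ref{C13Re} requires (the $W+R$ decomposition, Slutsky, the lower bound on $\hat{\Sigma}_{ii}$ needed so that $R_i/\hat{\Sigma}_{ii}^{1/2}\to 0$), and your use of the portmanteau/continuity-set argument is a cleaner justification than the paper's appeal to dominated convergence for passing the limit inside the probability.
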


It is important to note that the ground truth $\beta_i^0$ is fixed and the confidence interval is random as $\hat{\beta}^u_i$ is random. Having realized that, we interpret \eqref{eq:def_conf_circle} as follows: We conduct an experiment a certain number of times to retrieve the ground truth which is fixed based on the model \eqref{Cmodel}. From the practitioner's point of view, in every experiment we collect data. This data is different due to a different realization of the noise, which is random. However, the measurement matrix $X$ is fixed primarily at the beginning of the experiment. This is exactly the setting in Theorem \ref{thm:mainresult} where we proved $W$ to be Gaussian distributed under the condition that $X$ is given. With the different data, we construct each time a different confidence interval. Finally, \eqref{Cmodel} assures that in a fraction of $(1-\alpha)$ of the cases we construct the confidence intervals in a way that they contain the ground truth.

\begin{remark}[Confidence intervals]\label{re:1d_conf_int}
In addition to the confidence circle, under the assumptions of Theorem \ref{thm:confidence circle}, we are able to construct $(1-\alpha)$ confidence intervals for the real and imaginary part of $\beta_i^0$ separately, i.e.
$$J^{\Re}_i(\alpha):=[\Re(\hat{\beta}^u_i)-\delta(\alpha),\Re(\hat{\beta}^u_i)+\delta(\alpha)] \
\text{and} \
J^{\Im}_i(\alpha):=[\Im(\hat{\beta}^u_i)-\delta(\alpha),\Im(\hat{\beta}^u_i)+\delta(\alpha)],$$
with radius given by
\begin{equation}\label{eq:conf_int_radius}
    \delta(\alpha):=\frac{\hat{\sigma}\cdot\hat{\Sigma}_{ii}^{1/2}}{\sqrt{2n}}\Phi^{-1}(1-\alpha/2),
\end{equation}
where $\Phi^{-1}$ is the quantile function of the standard normal distribution. The confidence intervals $J_i^{\Re}(\alpha)$ and $J_i^{\Im}(\alpha)$ are asymptotically valid for the real and imaginary part, respectively, i.e.,
\begin{equation}\label{eq:confidence interval}
    \lim\limits_{n\to\infty}\mathbb{P}\left(\Re(\beta^0_i)\in J_i^{\Re}(\alpha)\right)=1-\alpha
\end{equation}
and
$$\lim\limits_{n\to\infty}\mathbb{P}\left(\Im(\beta^0_i)\in J_i^{\Im}(\alpha)\right)=1-\alpha.$$
The proof is based on \cite[Theorem 15]{Javanmard.2014} and presented in Appendix \ref{subsec:proof_conf_int_valid}.
\end{remark}

Theorem \ref{thm:confidence circle} establishes honest confidence intervals of length $\frac{1}{\sqrt{n}}$ and the natural question concerns the optimality of this result. The work in \cite{Cai.2017} established minimax guarantees for confidence intervals for high-dimensional problems, later generalized in \cite{Javanmard.2018}, for the case of Gaussian designs. In order to explain these results, we define the set of all $(1-\alpha)$ confidence intervals for $\beta_i^0$ with fixed $i\in[p]$ as
$$\mathclap{I}_{\alpha}(\Theta)=\{J_i^{\Re}(\alpha,Z)=[l(Z),u(Z)]:\inf\limits_{\beta_i^0\in\Theta}\mathbb{P}(l(Z)\leq\beta_i^0\leq u(Z))\geq 1-\alpha\},$$
where $Z=(\Re(y_1,x_1),...,\Re(y_n,x_n))^T\in\mathbb{R}^{n\times p+1}$ is the given data and $\mathbb{P}$ is the induced probability distribution on $(y, X)$ for the design matrix $X$ and noise realization $\varepsilon$, given the fixed signal $\beta^0$. The maximal expected length of an $(1-\alpha)$-confidence interval is
$L(J_i^{\Re}(\alpha,Z),\Theta)=\sup\limits_{\beta_i^0\in\Theta}\mathbb{E}[L(J_i^{\Re}(\alpha,Z))]$ where the length is given by $L(J_i^{\Re}(\alpha,Z))=u(Z)-l(Z)$. Finally, the minimax expected length is defined as
$$L_{\alpha}^*(\Theta)=\inf\limits_{J_i^{\Re}(\alpha,Z)\in \mathclap{I}_{\alpha}}L(J_i^{\Re}(\alpha,Z),\Theta).$$
For a known sparsity level $s_0$, \cite[Theorem 1]{Cai.2017} states that for Gaussian designs, the minimax expected length of a $(1-\alpha)$ confidence intervals is of order $\frac{1}{\sqrt{n}}+\frac{s_0\log p}{n}$. Later, by considering more structure on the inverse of the covariance matrix, \cite[Proposition 4.2]{Javanmard.2018} generalized the lower bound in the previous result and established that the minimax expected length for $(1 - \alpha)$-confidence intervals of $\beta_i$ has length at least 
$$ L_{\alpha}^*(\Theta(s_0, s_{\Omega}, \rho)) \geq \frac{1}{\sqrt{n}} + \min \left\{ s_0 \frac{\log p}{n}, s_{\Omega} \frac{\log p}{n}, \rho \sqrt{\frac{\log p}{n}}\right\},$$
where $ \Theta(s_0, s_{\Omega}, \rho)$ is the set of parameters $(\beta^0, \Sigma^{-1}, \sigma^2)$ given by

\begin{align*}
   \Theta(s_0, s_{\Omega}, \rho) &= \Big\{ \gamma =(\beta^0, \Sigma^{-1}, \sigma^2): ||\beta^0||_0 \leq s_0, \sigma^2 \in (0, c], (\Sigma)_{ii}\leq 1, \\& \frac{1}{C_{\max}} < \sigma_{\min}(\Sigma^{-1}) < \sigma_{\max}(\Sigma^{-1}) < \frac{1}{C_{\min}},\\
   &|| \Sigma^{-1}||_{\infty} \leq \rho, \max_{i \in [p]} | \{ j \neq i, (\Sigma^{-1})_{ij} \neq 0 \} | \leq s_{\Omega}\Big\} .
\end{align*}

Note that although the scenario considered here in this paper (BOS design matrices and Gaussian noise) is compatible with the set $ \Theta(s_0, s_{\Omega}, \rho)$, the theorems above were proven for Gaussian designs. Still, one should not expect a better lower bound for matrices associated with a BOS as the one compared with Gaussian designs, which indicates the optimality of length of the constructed confidence intervals with radius according to \eqref{eq:conf_int_radius}.

Moreover, in the particular case of subsampled Fourier matrices as defined in Definition \ref{def:partial Fourier} the size of the confidence region's radii does not depend on the component $i\in[p]$. For the entries of the sample covariance we obtain
\begin{align*}
\hat{\Sigma}_{lj}&=\frac{1}{n}\sum\limits_{k=1}^n\overline{(F_{\Omega})}_{kl}(F_{\Omega})_{kj}
=\frac{1}{n}\sum\limits_{k=1}^n\overline{F_{t_kl}}F_{t_kj}=\frac{1}{n}\sum\limits_{k=1}^ne^{\frac{2\pi i(t_k-1)(j-l)}{p}}.
\end{align*}
On the diagonal $\hat{\Sigma}_{ii}^{1/2}=1$ and hence the confidence regions have the same size for every component of $\beta^0$.

In the remainder of this chapter, we turn to the topic of noise estimation. From a mathematical point of view, estimating the error variance for high-dimensional estimators is a non-trivial problem that still attracts significant interest. In the LASSO case, the works \cite{reid2016study, yu2019estimating, giraud2012high} discuss the noise level estimation problem.
The most common method used in the desparsified LASSO literature, e.g., in \cite{Zhang.2014, bellec2022biasing, Javanmard.2018, Li.2020, vandeGeer.2014} to name just a few, for estimating the noise level is the scaled LASSO \cite{sun2012scaled}. Other alternative methods, still to be explored, are for example presented in \cite{dicker2014variance, liu2020estimation, kennedy2020greedy}.

In the MRI setting the noise can be measured directly during MR image acquisition, for instance, during the so-called pre-scan procedure, which is mandatory for every patient, yielding a direct, ground truth estimation of the noise \cite{Leussler.10162015, Biber.1052017}. Alternatively, it may be estimated retrospectively (indirectly) from the final image, in case the directly determined noise estimation is not accessible anymore. For a review of different noise estimation methods in MRI see, e.g., \cite{AjaFernandez.2016}.

Lemma 13 in \cite{Javanmard.2014} shows, that the asymptotic normality still holds when the true noise level is replaced by a consistent noise estimator in the sense of \eqref{eq:consistent_noise}. Here we provide a complex version of this lemma. The proof is presented in Appendix \ref{subsec:proof_of_noise_consistency}.

\begin{lemma}\label{C13Re}
Let $\frac{1}{\sqrt{n}} X$ be a normalized random sampling matrix associated to a BOS with constant $K\geq 1$. Let $\lambda\geq 2\lambda_0$. Let further $\hat{\sigma}=\hat{\sigma}(y,X)$ be a noise estimator with
\begin{equation}\label{eq:consistent_noise}
\lim\limits_{n\to\infty}\sup\limits_{\beta^0\in\mathbb{C}^p:\Vert\beta^0\Vert_0\leq s_0}\mathbb{P}\left(\left\vert\frac{\hat{\sigma}}{\sigma}-1\right\vert\geq \varepsilon\right)=0\quad\forall \varepsilon>0.
\end{equation}
If $n\to\infty$ and $p=p(n)\to\infty$ with $p/n\to\kappa \in (0, \infty)$ such that $\frac{s_0\log^2 p}{n}\to 0$, then, for all $x\in\mathbb{R}$, we have
\begin{equation*}
\lim\limits_{n\to\infty}\sup\limits_{\beta^0\in\mathbb{C}^p:\Vert\beta^0\Vert_0\leq s_0}\left\vert\mathbb{P}\left(\frac{\sqrt{n}\Re(\hat{\beta}_i^u-\beta_i^0)}{\hat{\sigma}/\sqrt{2}\hat{\Sigma}_{ii}^{1/2}}\leq x\right)-\Phi(x)\right\vert=0
\end{equation*}
and
\begin{equation*}
\lim\limits_{n\to\infty}\sup\limits_{\beta^0\in\mathbb{C}^p:\Vert\beta^0\Vert_0\leq s_0}\left\vert\mathbb{P}\left(\frac{\sqrt{n}\Im(\hat{\beta}_i^u-\beta_i^0)}{\hat{\sigma}/\sqrt{2}\hat{\Sigma}_{ii}^{1/2}}\leq x\right)-\Phi(x)\right\vert=0.
\end{equation*}
\end{lemma}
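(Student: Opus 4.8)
The plan is to reduce Lemma \ref{C13Re} to the already-established asymptotic normality of Theorem \ref{thm:mainresult} by a Slutsky-type argument, carefully handling the complex structure. First I would recall the decomposition from Theorem \ref{thm:mainresult}, namely $\sqrt{n}(\hat{\beta}^u-\beta^0) = W + R$ with $W\mid X \sim \mathcal{CN}(0,\sigma^2\hat{\Sigma})$ and $\|R\|_\infty = o_{\mathbb P}(1)$ under the regime $s_0\log^2 p/n \to 0$ (using \eqref{eq:main_result} together with Theorem \ref{thm:rip for BOS} to control the RIP event and Lemma \ref{6.2} for the event $\mathcal F$). Taking the real part componentwise, $\sqrt{n}\,\Re(\hat{\beta}_i^u-\beta_i^0) = \Re(W_i) + \Re(R_i)$, where, conditionally on $X$, $\Re(W_i)\sim\mathcal{N}(0,\tfrac{\sigma^2}{2}\hat{\Sigma}_{ii})$ since a complex Gaussian $\mathcal{CN}(0,\sigma^2\hat\Sigma)$ has independent real and imaginary parts each with variance $\tfrac{\sigma^2}{2}\hat\Sigma_{ii}$ in coordinate $i$. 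Hence $\frac{\Re(W_i)}{\sigma/\sqrt{2}\,\hat\Sigma_{ii}^{1/2}}\sim\mathcal N(0,1)$ exactly, conditionally on $X$, and therefore unconditionally.

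Next I would insert the noise estimator. Write
\[
\frac{\sqrt{n}\,\Re(\hat\beta_i^u-\beta_i^0)}{\hat\sigma/\sqrt{2}\,\hat\Sigma_{ii}^{1/2}}
= \frac{\sigma}{\hat\sigma}\cdot\frac{\Re(W_i)}{\sigma/\sqrt{2}\,\hat\Sigma_{ii}^{1/2}}
+ \frac{\sigma}{\hat\sigma}\cdot\frac{\Re(R_i)}{\sigma/\sqrt{2}\,\hat\Sigma_{ii}^{1/2}}.
\]
The first summand is (standard normal) $\times\,(\sigma/\hat\sigma)$, and by \eqref{eq:consistent_noise} the ratio $\sigma/\hat\sigma \to 1$ in probability uniformly over $\|\beta^0\|_0 \le s_0$; Slutsky's lemma then gives that this term converges in distribution to $\mathcal N(0,1)$. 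For the second summand, I need $\hat\Sigma_{ii}$ bounded below away from zero with high probability: this follows because $\hat\Sigma_{ii} = \frac1n\|X_i\|_2^2$ concentrates around $\mathbb E[\hat\Sigma_{ii}]=1$ by Lemma \ref{le:expectation_sample_cov}, and a Bernstein bound using $|X_{ik}|\le K$ (as in the proof of Theorem \ref{thm:mainresult}) gives $\hat\Sigma_{ii}\ge 1/2$ with probability $1-o(1)$; combined with $\|R\|_\infty = o_{\mathbb P}(1)$ and $\sigma/\hat\sigma = O_{\mathbb P}(1)$, the remainder term is $o_{\mathbb P}(1)$. A final application of Slutsky's lemma yields convergence in distribution of the whole expression to $\mathcal N(0,1)$, which is exactly the claimed pointwise convergence of the CDF to $\Phi$; the identical argument applies to the imaginary part.

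The one genuinely delicate point is the \emph{uniformity} over $\{\beta^0:\|\beta^0\|_0\le s_0\}$ asserted by the supremum in the statement. This requires that every quantitative bound above — the tail bound on $\|R\|_\infty$ from \eqref{eq:main_result}, the RIP failure probability from Theorem \ref{thm:rip for BOS}, the failure probability of the event $\mathcal F$ from Lemma \ref{6.2}, and the lower-tail bound on $\hat\Sigma_{ii}$ — holds with constants independent of which $s_0$-sparse $\beta^0$ is the ground truth. Inspecting those results, the RIP and $\mathcal F$ bounds concern only $X$ and $\varepsilon$ and are manifestly $\beta^0$-free, the oracle bounds $\|\hat\beta-\beta^0\|_1,\|\hat\beta-\beta^0\|_2$ depend on $\beta^0$ only through $s_0$, and the noise-estimator hypothesis \eqref{eq:consistent_noise} is itself stated uniformly over the same class. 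So the uniformity propagates, but this must be tracked rather than waved through, and I expect it to be the main obstacle — more precisely, the need to re-examine the exact Slutsky step in a \emph{triangular-array} form (a sequence of distributions indexed by $n$, each being a supremum over $\beta^0$) rather than the textbook fixed-distribution version, following the template of \cite[Lemma 13]{Javanmard.2014}.
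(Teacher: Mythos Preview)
Your proposal is correct and follows essentially the same route as the paper's proof: both start from the decomposition $\sqrt{n}(\hat\beta^u-\beta^0)=W+R$ of Theorem \ref{thm:mainresult}, extract the real part to obtain an exact $\mathcal N(0,1)$ term after normalizing by $\sigma/\sqrt{2}\,\hat\Sigma_{ii}^{1/2}$, and then replace $\sigma$ by $\hat\sigma$ via a Slutsky-type argument (the paper writes out the $\epsilon$-splitting of the CDF explicitly and then defers the remaining steps to \cite[Lemma 13]{Javanmard.2014}, whereas you invoke Slutsky directly). Your additional care about a lower bound on $\hat\Sigma_{ii}$ and about the uniformity in $\beta^0$ is appropriate and more explicit than the paper, which absorbs these points into the reference to \cite{Javanmard.2014}.
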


\section{Desparsified LASSO for vectors that are sparse under a basis transform}\label{sec:desparsified_haar}

As described in the introduction, one of the goals of UQ for high-dimensional image problems is to obtain confidence intervals for every pixel of an image retrieved as a solution to a high-dimensional problem. In the case where the image is not sparse in the canonical basis, it should be first sparsified via, for example, an orthogonal transform, before the desparsified estimator can be applied. This section extends the desparsified LASSO to the case where the ground truth $\beta^0$ is sparse in the Haar wavelet domain. See \cite[Chapter 6]{walnut2002introduction} for more details about the Haar wavelet transform.

\begin{theorem}\label{thm:mainresult_sparse_transform}
Let $\frac{1}{\sqrt{n}} F_{\Omega}$ be a normalized subsampled Fourier matrix and $H\in\mathbb{C}^{p\times p}$ the orthogonal Haar wavelet transform. Set $\kappa_j=\frac{3\sqrt{2\pi}}{\sqrt{j}}$ for $j\in[p]$. Assume that the transformed ground truth $z^0=H\beta^0$ is $s_0$-sparse. For $\delta\in (0,1)$ set $t:=(36\cdot \frac{1+\delta}{1-\delta}+1)s_0$ with $t\geq \log p$, assume
$$n\geq C\delta^{-2}\Vert \kappa\Vert_2^2 t \log^2 t \log p,$$
and choose $n$ (possibly not distinct) indices $j\in\Omega\subset [p]$ i.i.d. from the probability measure $\nu$ on $[p]$ given by
$$\nu(j)=\frac{\kappa_j^2}{\Vert \kappa\Vert_2^2}.$$
Let $D=\operatorname{diag}(d)\in\mathbb{C}^{p\times p}$ be the diagonal matrix with $d_j=\Vert \kappa\Vert_2/(\sqrt{p}\kappa_j)$. Let $\hat{z}$ be the LASSO solution of
\begin{equation*}\label{eq:LASSO_z}
 \hat{z} = \argmin\limits_{z\in\mathbb{C}^p}\frac{1}{2n}\Vert D(F_{\Omega}H^*z-y)\Vert_2^2+\lambda\Vert \beta\Vert_1
\end{equation*}
and the desparsified LASSO $\hat{\beta}^u$ is defined via
\begin{equation}\label{eq:desparsified_LASSO_sparsetrafo}
    \hat{\beta}^u=H^*\hat{z}+\frac{1}{n}F_{\Omega}^*D^2(y-F_{\Omega}H^*z^0).
\end{equation}
Let $\sigma$ be the noise level from model \eqref{Cmodel}. Then, the following decomposition holds
$$\sqrt{n}(\hat{\beta}^u-\beta^0)=W+R,$$
where 
$W\mid D^2F_{\Omega}\sim\mathcal{CN}\left(0,\frac{\sigma^2}{n}F_{\Omega}^*D^4F_{\Omega}\right)$ with variance $\frac{\sigma^2}{n}(F_{\Omega}^*D^4F_{\Omega})_{ii}=\frac{\sigma^2}{n}\Vert D^2\Vert_F^2$ for every $i\in[p]$ and, if $n\geq C_1 s_0\log p$, then
\begin{equation}\label{eq:main_result_sparse_trafo}
    \mathbb{P}\left(\Vert R\Vert_{\infty}\geq C(\Vert \kappa\Vert_2,\sigma,\delta_t)\frac{\sqrt{s_0}\log p}{\sqrt{n}}\right)\leq 2p^{-2}+p^{-c\log^3 t},
\end{equation}
where $C(\Vert\kappa\Vert_2,\sigma,\delta_t)>0$ is a constant depending on $\Vert\kappa\Vert_2$, $\sigma$ and the RIP constant $\delta_t$.
\end{theorem}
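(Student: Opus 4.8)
The plan is to carry the argument behind Theorem~\ref{thm:mainresult} over to the preconditioned, basis‑transformed coordinates, with $G:=DF_{\Omega}H^*$ playing the role of the BOS design matrix (by Theorem~\ref{thm:sparse_transform_RIP}, $\tfrac{1}{\sqrt n}G$ is RIP of order $t$ with probability at least $1-p^{-c\log^3 t}$), the variable‑density measure $\nu(j)=\kappa_j^2/\Vert\kappa\Vert_2^2$ replacing the uniform sampling measure, and $\Vert\kappa\Vert_2$ playing the role of the BOS constant $K$. Setting $\hat\beta:=H^*\hat z$ (so $\beta^0=H^*z^0$), writing $y-F_{\Omega}H^*\hat z=\varepsilon-F_{\Omega}H^*(\hat z-z^0)$ (reading $\hat z$ for the unknown $z^0$ in the residual of \eqref{eq:desparsified_LASSO_sparsetrafo}, as is necessary for $\hat\beta^u$ to be an estimator), and using $H^*H=I_{p\times p}$, $D=D^*$ and $G^*G=HF_{\Omega}^*D^2F_{\Omega}H^*$, one obtains directly
\[
\sqrt n(\hat\beta^u-\beta^0)=\underbrace{\tfrac1{\sqrt n}F_{\Omega}^*D^2\varepsilon}_{=:W}\;+\;\underbrace{\sqrt n\bigl(I_{p\times p}-\tfrac1n F_{\Omega}^*D^2F_{\Omega}\bigr)(\hat\beta-\beta^0)}_{=:R}.
\]
Conditionally on $D^2F_{\Omega}$ (equivalently, on the sampled indices) $W$ is a fixed linear image of $\varepsilon\sim\mathcal{CN}(0,\sigma^2 I_{n\times n})$, so $W\mid D^2F_{\Omega}\sim\mathcal{CN}(0,\tfrac{\sigma^2}{n}F_{\Omega}^*D^4F_{\Omega})$, and $\vert(F_{\Omega})_{li}\vert=1$ gives diagonal entry $\tfrac{\sigma^2}{n}\sum_l d_{j_l}^4=\tfrac{\sigma^2}{n}\Vert D^2\Vert_F^2$, independent of $i$.

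For the remainder I would, as in the proof of Theorem~\ref{thm:mainresult}, split the samples (one half for $\hat z$, one half for $\hat\beta^u$), so that $u:=\hat z-z^0$ becomes independent of the second‑half sampling. Fixing $j$ and putting $w:=\hat\beta-\beta^0=H^*u$, write $R_j=\tfrac1{\sqrt n}\sum_{l=1}^n Z_l$ with $Z_l=w_j-\overline{(F_{\Omega})_{lj}}\,d_{j_l}^2\,(F_{\Omega}w)_l$; conditionally on $u$ these are i.i.d., and $\mathbb{E}[\tfrac1n F_{\Omega}^*D^2F_{\Omega}]=I_{p\times p}$ (the analogue of Lemma~\ref{le:expectation_sample_cov}, using $\nu(j)=\kappa_j^2/\Vert\kappa\Vert_2^2$ and $F^*F=pI_{p\times p}$) gives $\mathbb{E}[Z_l\mid u]=0$. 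The variance bound follows from $\nu(j)d_j^4=\Vert\kappa\Vert_2^2/(p^2\kappa_j^2)$, $1/\kappa_j^2=j/(18\pi)\le p/(18\pi)$ and $\Vert Fw\Vert_2^2=p\Vert w\Vert_2^2=p\Vert u\Vert_2^2$ (as $H$ is orthogonal): $\mathbb{E}[\vert Z_l\vert^2\mid u]\le\mathbb{E}[d_J^4\vert(Fw)_J\vert^2]\le\tfrac{\Vert\kappa\Vert_2^2}{18\pi}\Vert u\Vert_2^2$. For the sup bound, $\vert Z_l\vert\le\Vert w\Vert_\infty+d_{j_l}^2\vert(F_{\Omega}H^*u)_l\vert$ with $\Vert w\Vert_\infty\le\Vert u\Vert_2$; crucially, rather than passing to $\Vert w\Vert_1=\Vert H^*u\Vert_1$ (which may be $\sqrt p$ times larger than $\Vert u\Vert_1$) I keep $\Vert u\Vert_1$ and use the entrywise Fourier--Haar incoherence $\max_k\vert(FH^*)_{j,k}\vert\le\sqrt p\,\kappa_j$ from \cite{6651836}, which together with $d_{j_l}^2=\Vert\kappa\Vert_2^2/(p\kappa_{j_l}^2)$ and $1/\kappa_{j_l}\le\sqrt p/(3\sqrt{2\pi})$ yields $\vert Z_l\vert\lesssim\Vert\kappa\Vert_2^2\Vert u\Vert_1$.

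To finish, on the RIP event and on the event $\{\max_j\tfrac2n\vert\langle D\varepsilon,G_j\rangle\vert\le\lambda_0\}$ — the latter holding with probability $\ge1-p^{-2}$ by a Gaussian tail bound, with a $\lambda_0$ now calibrated to the weighted noise $D\varepsilon$ (using $\mathbb{E}\Vert D\varepsilon\Vert_2^2=\sigma^2\Vert D\Vert_F^2$ and $\mathbb{E}\Vert D\Vert_F^2=n$) — the RIP oracle inequalities of Appendix~\ref{subsec:oracle}, applied to the weighted model $Dy=Gz^0+D\varepsilon$, give $\Vert u\Vert_2\lesssim\sqrt{s_0\log p}/\sqrt n$ and $\Vert u\Vert_1\lesssim s_0\sqrt{\log p}/\sqrt n$ with constants depending on $\Vert\kappa\Vert_2,\sigma,\delta_t$. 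Substituting these makes $\mathbb{E}[\vert Z_l\vert^2]\lesssim s_0\log p/n$ and $\vert Z_l\vert\lesssim s_0\sqrt{\log p}/\sqrt n$ (the $\Vert\kappa\Vert_2$‑factors being absorbed into the constant), which is precisely the input for the endgame of the proof of Theorem~\ref{thm:mainresult}: decompose $\sum_l Z_l$ into real and imaginary parts, apply Bernstein's inequality (Theorem~\ref{bernstein}), union‑bound over $j\in[p]$ with $\eta=p^{-2}$, and remove the conditioning by the law of total probability, collecting $p^{-2}$ (Bernstein) $+\,p^{-2}$ ($\lambda_0$‑event) $+\,p^{-c\log^3 t}$ (RIP). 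For $n\ge C_1 s_0\log p$ the variance term dominates and gives \eqref{eq:main_result_sparse_trafo}.

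The one genuinely new difficulty is the orthogonal transform $H^*$ sitting inside $R$: a crude estimate $\Vert H^*v\Vert_1\le\sqrt p\,\Vert v\Vert_2$ (the Haar scaling vector has $\ell_1$‑norm $\sqrt p$) would force a useless $n\gtrsim\sqrt p$ sample complexity. The remedy is never to invoke $\ell_1$‑control outside the wavelet domain: the term $(F_{\Omega}H^*u)_l$ appearing in $Z_l$ must be handled through the entrywise incoherence $\vert(FH^*)_{jk}\vert\le\sqrt p\,\kappa_j$ — exactly the estimate underlying the variable‑density sampling of \cite{6651836} — so that the preconditioning weights $d_{j_l}^2$ cancel the $\sqrt p$ and only $\Vert u\Vert_1$, which the RIP oracle bound controls, survives. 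A secondary, purely technical task is re‑deriving the oracle inequalities for the weighted residual $\Vert D(F_{\Omega}H^*z-y)\Vert_2^2$ (effective noise $D\varepsilon$, correspondingly scaled $\lambda_0$) and carrying the resulting $\Vert\kappa\Vert_2\asymp\sqrt{\log p}$ factors through into the final constant $C(\Vert\kappa\Vert_2,\sigma,\delta_t)$.
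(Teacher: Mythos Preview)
Your proposal is correct and follows essentially the same route as the paper: the same $W/R$ decomposition via $A:=DF_\Omega H^*$ (your $G$), the same mean-zero sum over samples with variance $\lesssim\Vert\kappa\Vert_2^2\Vert u\Vert_2^2$ and sup bound $\lesssim\Vert\kappa\Vert_2^2\Vert z^0-\hat z\Vert_1$ obtained through the Fourier--Haar incoherence (the paper phrases this as $\Vert\bar a_i\Vert_\infty\le\Vert\kappa\Vert_2$), the same recalibrated $\lambda_0$ for $D\varepsilon$, and the same Bernstein\,$+$\,union-bound\,$+$\,total-probability endgame. Your observation that the residual in \eqref{eq:desparsified_LASSO_sparsetrafo} must contain $\hat z$ rather than $z^0$ matches what the paper's proof actually uses; the only cosmetic difference is that the paper splits each $Z_l$ into diagonal and off-diagonal parts $Z_l^{(1)}+Z_l^{(2)}$ before bounding, whereas you treat it as a single term.
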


The construction of the confidence regions follows the same procedure as in Section \ref{sec:confidence regions}. The major difference is the change in the sampling pattern. In Theorem \ref{thm:mainresult} the rows of the design matrix are selected uniformly on $[p]$, whereas here the rows are sampled with respect to the non-uniform probability measure $\nu$. The desparsified LASSO $\hat{\beta}^u$, as defined in \eqref{eq:desparsified_LASSO_sparsetrafo}, is asymptotically normal with covariance matrix $\frac{\sigma^2}{n}F_{\Omega}^*D^4F_{\Omega}$. The sample size remains the same as in Theorem \ref{thm:mainresult} except for additional $\sqrt{\log_2 p}$-factors as described in the following remark.

\begin{remark}\label{re:BOS_sparse_trafo}
    One main ingredient for the following proof is that the measurement matrix is associated to a BOS. More precisely, under the stated assumptions, the proof of Theorem \ref{thm:sparse_transform_RIP} (\cite[Theorem 1]{6651836}) assures that the matrix $A:=DF_{\Omega}H^*$ is with high probability associated to a BOS with constant $\Vert \kappa\Vert_2$. Theorem 4 in \cite{6651836} provides an upper bound for this constant $\Vert \kappa \Vert_2\leq 52\sqrt{\log_2 p}$.
\end{remark}

\begin{proof}[Proof of Theorem \ref{thm:mainresult_sparse_transform}]
The underlying model
$$Dy=DF_{\Omega}\beta^0+D\varepsilon=DF_{\Omega}H^*z^0+D\varepsilon,$$
is like the standard inverse problem with the difference that $F_{\Omega}$ is replaced with $DF_{\Omega}$. The desparsified LASSO for $z^0$ reads as
\begin{equation*}
    \hat{z}^u=\hat{z}+\frac{1}{n}(DF_{\Omega}H^*)^*(Dy-DF_{\Omega}H^*\hat{z}),
\end{equation*}
where $\hat{z}$ is the LASSO estimator for $z^0$. Due to Remark \ref{re:BOS_sparse_trafo} the matrix $A:=DF_{\Omega}H^*$ is (with high probability - which is considered at the end of the proof) a random sampling matrix associated to a BOS with constant $\Vert\kappa\Vert_2$, and hence the matrix $M$ in \eqref{eq:desparsified_LASSO} is chosen to be the identity $I_{p\times p}$.
Defining $\hat{\Sigma}_A:=\frac{1}{n}A^*A$ the estimator $H^*\hat{z}^u$ for $H^*z^0=\beta^0$ fulfills the decomposition:
\begin{align*}
    \sqrt{n}(H^*\hat{z}^u-\beta^0)&=\frac{1}{\sqrt{n}}H^*(DF_{\Omega}H^*)^*D\varepsilon+\sqrt{n}H^*(\hat{\Sigma}_AH H^*-I_{p\times p})(z^0-\hat{z})\\
    &=\underbrace{\frac{1}{\sqrt{n}}(D^2F_{\Omega})^*\varepsilon}_{:=W}+\underbrace{\sqrt{n}(H^*\hat{\Sigma}_AH-I_{p\times p})(\beta^0-H^*\hat{z})}_{:=R}
\end{align*}
Conditioned on $(D^2F_{\Omega})$, the term $W$ is Gaussian distributed with covariance matrix $(\sigma^2/n)\cdot F_{\Omega}^*D^4F_{\Omega}$. The matrix entries $(F_{\Omega}^*D^4F_{\Omega})_{ii}$ can be simplified to
\begin{align*}
    (F_{\Omega}^*D^4F_{\Omega})_{ii}&=\frac{1}{n}\langle D^2F_{\Omega}e_i, D^2F_{\Omega} e_i\rangle = \langle D^4 F_{\Omega} e_i,F_{\Omega} e_i\rangle = \sum\limits_{j=1}^n D_{jj}^4 \vert (F_{\Omega})_{ij}\vert^2=\sum\limits_{j=1}^n D_{jj}^4\\
    &= \operatorname{trace}(D^4) = \Vert D^2\Vert^2_F.
\end{align*}
Defining $u:=\beta^0-H^*\hat{z}$, we rewrite the remainder term as
\begin{align*}
    R=\sqrt{n}\left(\frac{(AH)^*(AH)}{n}-I_{p\times p}\right)u.
\end{align*}
For $j\in[p]$ we have
\begin{align*}
    R_j&=\sqrt{n}e_j^H\left(\frac{(AH)^*(AH)}{n}-I_{p\times p}\right)u\\
    &=\sqrt{n}\left(\frac{(AH)^*(AH)}{n}-I_{p\times p}\right)_{jj}u_j+\sqrt{n}e_j^T\left(\frac{(AH)^*(AH)}{n}-I_{p\times p}\right)_{-j}u_{-j}\\
    &=\frac{1}{\sqrt{n}}(e_j^T(AH)^*(AH)_j-1)u_j+\frac{1}{\sqrt{n}}e_j^T(AH)^*(AH)_{-j}u_{-j}\\
    &=\underbrace{\frac{1}{\sqrt{n}}((AH)^*_j(AH)_j-1)u_j}_{:=R_j^{(1)}}+\underbrace{\frac{1}{\sqrt{n}}(AH)_j^*(AH)_{-j}u_{-j}}_{R_j^{(2)}}.
\end{align*}
We obtain for $R_j^{(1)}$
$$R_j^{(1)}=\frac{1}{\sqrt{n}}\sum\limits_{i=1}^n(\overline{(AH)}_{ij}(AH)_{ij}-1)u_j:=\frac{1}{\sqrt{n}}\sum\limits_{i=1}^nZ_i^{(1)}$$
with $Z_i^{(1)}=(e_j^TH^*\overline{a}_ia_i^THe_j-1)u_j$.
Since we can write the transpose of the $i$-th row of $(AH)$ as $(a_i^TH)^T=H^Ta_i$, the mean of $Z_i^{(1)}$ is given by
\begin{align*}
    \mathbb{E}[Z_i^{(1)}]&=(e_j^T\mathbb{E}[H^*\overline{a}_ia_i^TH]e_j-1)u_j=(e_j^TH^*\mathbb{E}[a_ia_i^*]He_j-1)u_j=0.
\end{align*}
In the last step we exploited that $\mathbb{E}[a_ia_i^*]=I_{p\times p}$, because $A$ is a random sampling matrix associated to a BOS.
In order to compute the variance we start with
\begin{align*}
    \vert Z_i^{(1)}\vert^2&=Z_i^{(1)}\overline{Z_i^{(1)}}=\vert u_j\vert^2\left(e_j^TH^*\overline{a}_ia_i^THe_je_j^TH^Ta_ia_i^*\overline{H}e_j-2e_j^TH^*\overline{a}_ia_i^THe_j+1 \right).
\end{align*}
We denote the matrix $DF_{\Omega}$ by $\Tilde{F}_{\Omega}$ and hence the $i$-th row by $\Tilde{f}_{\Omega,i}$. Since the absolute value of Fourier entries are bounded by $1$, i.e. $\vert F_{\Omega,ij}\vert^2\leq 1$ we obtain
\begin{align}\label{eq:plug_in_def_a1}
    e_j^TH^*\overline{a}_ia_i^THe_j &=
    e_j^TH^*\overline{((\Tilde{f}_{\Omega,i})^TH^*)^T}(\Tilde{f}_{\Omega,i})^TH^*)He_j
    =e_j^T\overline{\Tilde{f}_{\Omega,i}}\Tilde{f}_{\Omega,i}^Te_j \\
    &=D_{ii}^2e_j^T\overline{f_{\Omega,i}}f_{\Omega,i}^Te_j 
    \leq  D_{ii}^2 \vert F_{\Omega,ij}\vert^2 
    \leq D_{ii}^2.\label{eq:plug_in_def_a2}
\end{align}
% \begin{align*}
%     e_j^TT^*\overline{a}_ia_i^TTe_j e_j^TT^Ta_ia_i^*\overline{T}e_j&=
%     e_j^TT^*\overline{((\Tilde{f}_{\Omega,i})^TT^*)^T}(\Tilde{f}_{\Omega,i})^TT^*)Te_je_j^TT^Ta_ia_i^*\overline{T}e_j\\
%     &=e_j^T\overline{\Tilde{f}_{\Omega,i}}\Tilde{f}_{\Omega,i}^Te_j e_j^TT^Ta_ia_i^*\overline{T}e_j
%     =D_{ii}^2e_j^T\overline{f_{\Omega,i}}f_{\Omega,i}^Te_j e_j^TT^Ta_ia_i^*\overline{T}e_j\\
%     &\leq  D_{ii}^2 \vert F_{\Omega,ij}\vert^2 e_j^TT^Ta_ia_i^*\overline{T}e_j
%     \leq D_{ii}^2 e_j^TT^Ta_ia_i^*\overline{T}e_j\leq D_{ii}^4,
% \end{align*}
Hence
$$e_j^TH^*\overline{a}_ia_i^THe_j e_j^TH^Ta_ia_i^*\overline{H}e_j\leq D_{ii}^4.$$
The expectation reads as
\begin{align*}
    \mathbb{E}[D_{ii}^4]=\sum\limits_{i=1}^p\nu(i)D_{ii}^4=\sum\limits_{i=1}^p\frac{\kappa_i^2}{\Vert \kappa\Vert_2^2}\left(\frac{\Vert \kappa\Vert_2}{\sqrt{p}\kappa_i}\right)^4=\frac{\Vert \kappa\Vert_2^2}{p^2}\sum\limits_{i=1}^p\frac{1}{\kappa_i^2}
\end{align*}
and with $\kappa_i=\frac{3\sqrt{2\pi}}{\sqrt{i}}$ \cite[Corollary 2]{6651836} it becomes
$$\mathbb{E}[D_{ii}^4]=\frac{\Vert \kappa\Vert_2^2}{18\pi p^2}\sum\limits_{i=1}^p i=\frac{\Vert \kappa\Vert_2^2}{18\pi p^2}\frac{p^2+p}{2}\leq\frac{\Vert\kappa\Vert_2^2}{36\pi}+1.$$
In summary, we obtain
\begin{align*}
    \mathbb{E}[\vert Z_i^{(1)}\vert^2]&=\vert u_j\vert^2\left(-1+\mathbb{E}[e_j^TH^*\overline{a}_ia_i^THe_je_j^TH^Ta_ia_i^*\overline{H}e_j]\right)\\
%    \leq (\max\limits_{i\in[p]} D_{ii}^2-1)\vert u_j\vert^2.
    &\leq \vert u_j\vert^2\left(-1+\mathbb{E}[D_{ii}^4]\right)\leq \frac{\Vert\kappa\Vert_2^2}{36\pi} \vert u_j\vert^2.
\end{align*}
Using the same calculation as in \eqref{eq:plug_in_def_a1} and \eqref{eq:plug_in_def_a2}, and additional, that $\kappa_i\geq \frac{3\sqrt{2\pi}}{\sqrt{p}}$, we bound 
%$$\vert Z_i^{(1)}\vert \leq \vert u_j\vert\cdot \vert D_{ii}^2-1\vert\leq \vert u_j\vert\cdot \left\vert\frac{\Vert \kappa\Vert_2^2}{p \kappa_i^2}-1\right\vert\leq \vert u_j\vert\cdot \left\vert\frac{\Vert \kappa\Vert_2^2}{18\pi}-1\right\vert.$$
\begin{align*}
\vert Z_i^{(1)}\vert &\leq \vert u_j\vert\cdot \vert D_{ii}^2-1\vert\leq \vert (H^*(z^0-\hat{z}))_j\vert \cdot \left\vert\frac{\Vert \kappa\Vert_2^2}{p \kappa_i^2}-1\right\vert\leq \Vert H_j^*\Vert_{\infty}\Vert z^0-\hat{z}\Vert_1\cdot \left\vert\frac{\Vert \kappa\Vert_2^2}{18\pi}-1\right\vert\\
&\leq \Vert z^0-\hat{z}\Vert_1\cdot \left\vert\frac{\Vert \kappa\Vert_2^2}{18\pi}-1\right\vert.
\end{align*}
For the estimate of $R_j^{(2)}$ we introduce the shorthand notation $u^{(-j)}$ to denote the vector $u$ with its $j$-th component set to zero and obtain
\begin{align*}
    R_j^{(2)}=\frac{1}{\sqrt{n}}\sum\limits_{i=1}^n\overline{(AH)}_{ij}\langle e_i^T(\overline{(AH)},u^{(-j)}\rangle=\frac{1}{\sqrt{n}}\sum\limits_{i=1}^n Z_i^{(2)},
\end{align*}
with $Z_i^{(2)}=e_j^TH^*\overline{a}_i\langle H^*\overline{a}_i,u^{(-j)}\rangle$.
The mean of $Z_i^{(2)}$ is
$$\mathbb{E}[Z_i^{(2)}]=e_j^TH^*\mathbb{E}[\overline{a}_i\overline{a}_i^*]Hu^{(-j)}= \langle e_j, u^{(-j)} \rangle=0.$$
We calculate
\begin{align*}
    \vert Z_i^{(2)}\vert^2&=Z_i^{(2)}\overline{Z_i^{(2)}}
    =e_j^TH^*\overline{a}_i\overline{a}_i^*Hu^{(-j)}e_j^TH^Ta_i\langle u^{(-j)}, H^*\overline{a}_i\rangle\\
    &=\langle u^{(-j)}, H^*\overline{a}_i\rangle e_j^TH^*\overline{a}_ie_j^TH^Ta_i\overline{a}_i^*Hu^{(-j)}\\
    &=(u^{(-j)})^*H^*\overline{a}_ie_j^TH^*\overline{((\Tilde{f}_{\Omega,i})^TH^*)^T}e_j^TH^T((\Tilde{f}_{\Omega,i})^TH^*)^T\overline{a}_i^*Hu^{(-j)}\\
    &=(u^{(-j)})^*H^*\overline{a}_ie_j^T\overline{\Tilde{f}_{\Omega,i}}e_j^T\Tilde{f}_{\Omega,i}\overline{a}_i^*Hu^{(-j)}
    =(u^{(-j)})^*H^*\overline{a}_iD_{ii}^2e_j^T\overline{f_{\Omega,i}}e_j^T f_{\Omega,i}\overline{a}_i^*H u^{(-j)}\\
    &=(u^{(-j)})^*H^*\overline{a}_iD_{ii}^2\vert F_{\Omega,ij}\vert^2\overline{a}_i^*Hu^{(-j)}
    \leq \max\limits_{i\in[p]}D_{ii}^2(u^{(-j)})^*H^*\overline{a}_i\overline{a}_i^*Hu^{(-j)}.
%    =(u^{(-j)})^*\overline{f_{\Omega,i}}D_{ii}^4f_{\Omega,i}^Tu^{(-j)}.
\end{align*}
The maximum of $D_{ii}^2$ is attained for $i=p$ with value $\max\limits_{i\in[p]}D_{ii}^2=\max\limits_{i\in[p]}\left(\frac{\Vert \kappa\Vert_2}{\sqrt{p}\kappa_i}\right)^2=\frac{\Vert\kappa\Vert_2^2}{18\pi}$.
Thus, we obtain
$$\mathbb{E}[\vert Z_i^{(2)}\vert^2]\leq \max\limits_{i\in[p]}D_{ii}^2(u^{(-j)})^*H^*\mathbb{E}[\overline{a}_i\overline{a}_i^*]Hu^{(-j)}\leq \frac{\Vert\kappa\Vert_2^2}{18\pi}\cdot \Vert u\Vert_2^2.$$
Finally,
% \begin{align*}
%     \vert Z_i^{(2)}\vert&= \vert e_j^TH^*\overline{a}_i\overline{a}_i^*Hu^{(-j)}\vert
%     \leq D_{ii}^2\vert e_j^T\overline{f}_{\Omega,i}f_{\Omega,i}^Tu^{(-j)}\vert=\frac{\Vert\kappa\Vert_2^2}{18\pi}\Vert f_{\Omega,i}\Vert_{\infty}^2\Vert u\Vert_1\leq \frac{\Vert\kappa\Vert_2^2}{18\pi}\Vert u\Vert_1.
% \end{align*}

\begin{align*}
    \vert Z_i^{(2)}\vert&= \vert e_j^TH^*\overline{a}_i\overline{a}_i^*Hu^{(-j)}\vert
    \leq D_{ii}\vert e_j^T\overline{f}_{\Omega,i}\vert\cdot\vert \overline{a}_i^*Hu^{(-j)}\vert \\
    &=\frac{\Vert\kappa\Vert_2}{\sqrt{18\pi}}\vert F_{\Omega,ij}\vert\cdot\vert \overline{a}_i^*H(H^*z^0-H^*\hat{z})\vert
    \leq \frac{\Vert\kappa\Vert_2}{\sqrt{18\pi}}\vert \overline{a}_i^*(z^0-\hat{z})\vert \\
    &\leq \frac{\Vert\kappa\Vert_2}{\sqrt{18\pi}}\vert\cdot \Vert \overline{a}_i^*\Vert_{\infty}\Vert z^0-\hat{z}\Vert_1
    \leq \frac{\Vert\kappa\Vert_2^2}{\sqrt{18\pi}}\cdot \Vert z^0-\hat{z}\Vert_1.
\end{align*}
In summary we have
\begin{align*}
    &\mathbb{E}[Z_i] = \mathbb{E}[Z_i^{(1)}] + \mathbb{E}[Z_i^{(2)}] = 0,\\
    &\mathbb{E}[\vert Z_i\vert^2] \leq 2 \left(\mathbb{E}[\vert Z_i^{(1)}\vert^2] + \mathbb{E}[\vert Z_i^{(2)}\vert^2]\right) \leq 2\left(\frac{\Vert \kappa \Vert_2^2}{36\pi}\vert u_j\vert^2 + \frac{\Vert \kappa \Vert_2^2}{18\pi} \cdot \Vert u \Vert_2^2\right)\leq \frac{2}{9\pi}\Vert \kappa \Vert_2^2\cdot \Vert u \Vert_2^2, \\
    &\vert Z_i\vert \leq \vert Z_i^{(1)}\vert + \vert Z_i^{(2)}\vert \leq \Vert z^0-\hat{z}\Vert_1\left(\left\vert \frac{\Vert \kappa\Vert_2^2}{18\pi}-1 \right\vert + \frac{\Vert \kappa\Vert_2^2}{\sqrt{18\pi}}\right) \leq \frac{2}{\sqrt{18\pi}}\Vert \kappa\Vert_2^2\cdot \Vert z^0-\hat{z}\Vert_1.
\end{align*}
The orthogonality of $H$ yields
$$\Vert u\Vert_2=\Vert \beta^0-H^*\hat{z}\Vert_2=\Vert H^*z^0-H^*\hat{z}\Vert_2=\Vert z^0-\hat{z}\Vert_2.$$
%$$\Vert \beta^0-T^*\hat{z}\Vert_1\leq \sqrt{p}\Vert T^*(z^0-\hat{z}\Vert_2=\sqrt{p}\Vert z^0-\hat{z}\Vert_2\sqrt{frac{p}{s_0}}\Vert z^0-\hat{z}\Vert_2$$
In contrast to Theorem \ref{thm:mainresult} we now have the model $Dy = Az^0 + D\varepsilon$. Therefore, we need to condition on the event 
$$\mathcal{F}:=\{\varepsilon\in\mathbb{C}^n: \max\limits_{j\in [p]}\frac{2}{n}\vert \langle D\varepsilon, A_j\rangle \vert\} = \{\varepsilon\in\mathbb{C}^n: \max\limits_{j\in [p]}\frac{2}{n}\vert \langle \varepsilon, DA_j\rangle \vert\},$$
where we exploited, that $D$ is self-adjoint. In order to find a proper choice for $\lambda_0$ we need to replace Equation \eqref{eq:bound_column_design} by
$$\Vert DA_j\Vert_2\leq \max\limits_{l\in[n]}\frac{\Vert \kappa \Vert_2}{\sqrt{p}\kappa_l}\cdot \Vert A_j\Vert_2\leq \frac{\Vert \kappa \Vert_2}{\sqrt{18\pi}}\cdot \Vert A_j\Vert_2,$$
where we used, that $\kappa_i\geq \frac{3\sqrt{2\pi}}{\sqrt{p}}$. Thus, the choice of $\lambda_0$ is in this case $\frac{\sigma}{3\sqrt{\pi}}\frac{\Vert \kappa \Vert_2^{3/2}}{\sqrt{n}}(\sqrt{2}+\sqrt{10 \log p})$.
Therefore the tail bound for the remainder term $R$ is derived by Bernstein's inequality in the same way as in Theorem \ref{thm:mainresult} using Lemma \ref{6.2}, Lemma \ref{le:l2oracleinequality}, Theorem \ref{thm:LASSOforRIP} and Theorem \ref{thm:sparse_transform_RIP}:
\begin{equation*}
    \mathbb{P}\left(\Vert R\Vert_{\infty}\geq C(\Vert\kappa\Vert_2,\sigma,\delta_t)\frac{\sqrt{s_0}\log p}{\sqrt{n}}\right)\leq 5p^{-2}+p^{-c\log^3 t},
\end{equation*}
where $C(\Vert\kappa\Vert_2,\sigma,\delta_t)>0$ is a constant depending on $\Vert\kappa\Vert_2$, $\sigma$ and the RIP constant $\delta_t$.
\end{proof}

For the sake of completeness, we provide a formula for the radius of the confidence circles $J^{\circ}_i(\alpha):=\{z\in\mathbb{C}:\vert\hat{\beta}_i^u-z\vert\leq\delta^{\circ}(\alpha)\}$ for a non-sparse ground truth $\beta^0$, estimated via the desparsified LASSO \eqref{eq:desparsified_LASSO_sparsetrafo} with significance level $\alpha\in(0,1)$, in the setting of Theorem \ref{thm:mainresult_sparse_transform}. The construction is straightforward to the one in Theorem \ref{thm:confidence circle} and the radius is given by
$$\delta^{\circ}(\alpha):=\frac{\hat{\sigma}\Vert D^2\Vert_F}{n}\sqrt{\log\left(\frac{1}{\alpha}\right)},$$
where $\hat{\sigma}$ is a noise estimator satisfying \eqref{eq:consistent_noise}.

\section{Numerics and application to MRI}\label{sec:numerics}

In this section, we illustrate our theoretical findings with numerical experiments. We use TFOCS \cite{becker2011templates} for the c-LASSO estimator. We conduct our experiments by choosing a subsampled Fourier matrix as a design matrix. In this setting, we exploit the FFT for calculating the desparsified LASSO in a fast and scalable way. Throughout our experiments, we assume, for simplicity, that the noise level $\sigma$ is known. The numerical results are still consistent if the methods for estimating the noise level from data that were discussed at the end of Section \ref{sec:confidence regions} are employed. Furthermore, we set the confidence level $\alpha=0.05$.

\subsection{Synthetic data}\label{subsec:synthetic_data}
We consider the linear model \eqref{Cmodel} with dimensions $p=100000,\, n=0.4p$ and $s_0=0.01 p$. Although the theory requires a subsampled Fourier matrix with rows that are independently and uniformly selected from $[p]$, in practice we sample without replacement. This sampling pattern leads to only slightly different recovery results as discussed in \cite[Section 12.6]{Foucart.2013}.  Also for the $s_0$-sparse ground truth $\beta^0\in\mathbb{C}^{p}$ we select a subset $S$ of $s_0$ different indices uniformly at random from $[p]$. The entries of $\beta^0$ in the support $S_0$ are complex standard Gaussian distributed and the vector is normalized, i.e., $\Vert\beta^0\Vert_2=1$. The noise is complex Gaussian distributed with noise level $\sigma=0.15$, which corresponds to a relation $\frac{\Vert\varepsilon\Vert}{\Vert F_{\Omega}\beta^0\Vert}\approx 0.15$. The regularization parameter for c-LASSO is set as $\lambda=25\lambda_0$. Note, that $\lambda_0$ depends on the dimensions. This specific choice differs from the sufficient condition $\lambda\geq 2\lambda_0$ and is explained in Section \ref{subsec:choice of lambda}.

The first experiment aims to confirm the Gaussian distribution of $\sqrt{n}(\hat{\beta}^u-\beta^0)$ numerically, as stated in the main result Theorem \ref{thm:mainresult}. In a Q-Q plot we compare the quantiles of $\frac{\sqrt{2n}}{\hat{\sigma}}\Re(\hat{\beta}^u-\beta^0)$ versus the quantiles of a standard normal distribution. Note that in the partial Fourier case $\hat{\Sigma}_{ii}=1$ for all $i\in[p]$. Figure \ref{fig:qqplot} shows that the quantiles basically fall onto the identity line, which is a strong hint, that both have the same distribution. The same behavior holds for the imaginary part.

\begin{figure}
    \centering
    \includegraphics[width=0.5\textwidth]{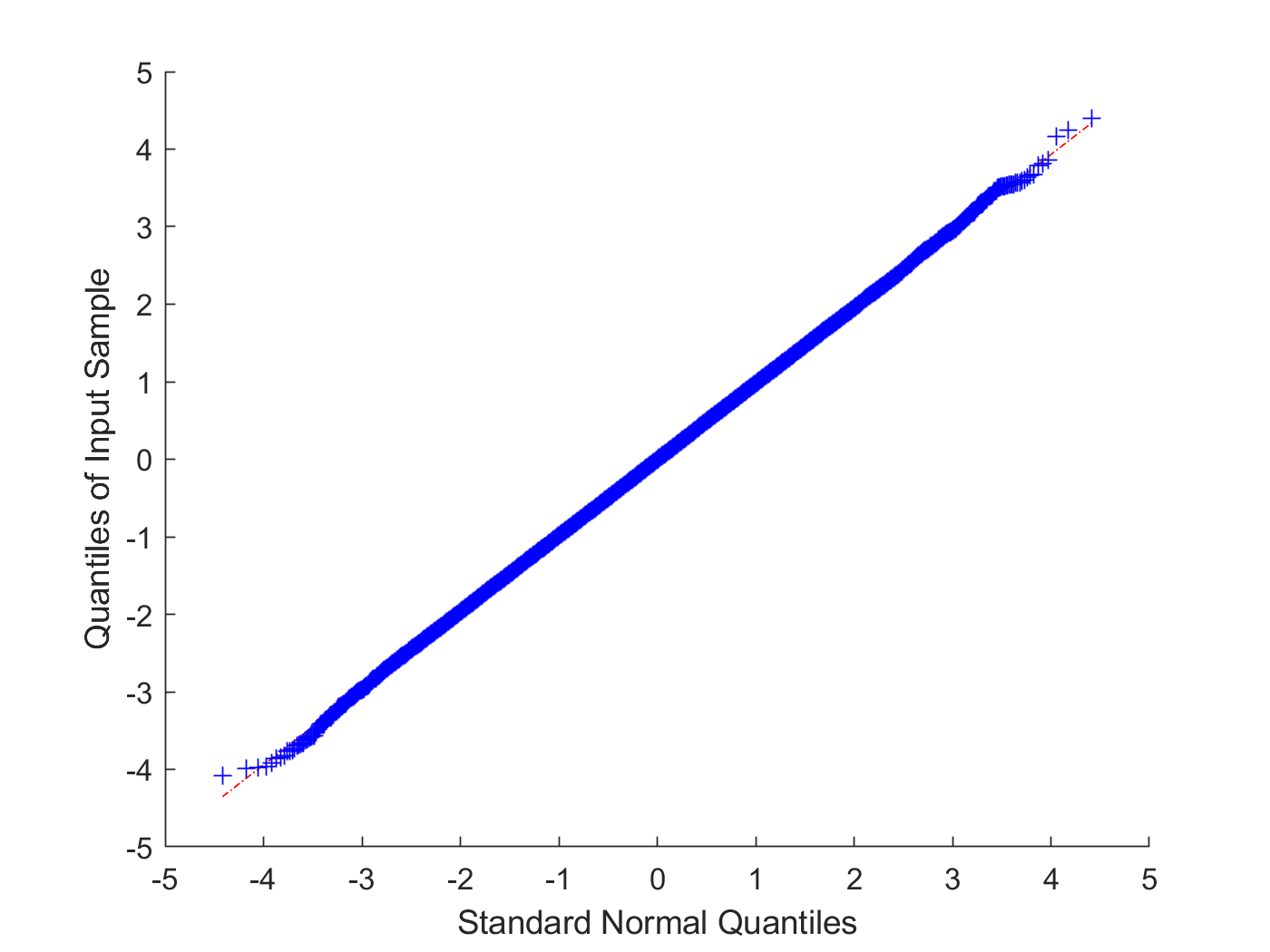}
    \caption{Q-Q plot for one realization of $\frac{\sqrt{2n}}{\hat{\sigma}}\Re(\hat{\beta}^u-\beta^0)$ as an input sample.}
    \label{fig:qqplot}
\end{figure}

The second experiment is conducted to confirm Theorem \ref{thm:confidence circle}. We use the same setting as above with $100$ independent realizations of the noise. We estimate the probability $\mathbb{P}(\beta^0_i\in J_i^{\circ}(\alpha))$ with
$$\widehat{\mathbb{P}}_i(\alpha)=\frac{1}{100}\sum\limits_{i=1}^{100}\mathbbm{1}_{\{\beta^0_i\in J_i^{\circ}(\alpha)\}}.$$

In the next step, we average this value while distinguishing between $i\in S_0$ and $i\in [p]$. We define the hitrate and the hitrate on the support, respectively, as $$h(\alpha)=\frac{1}{p}\sum\limits_{i=1}^p\mathbb{P}(\beta^0_i\in J_i^{\circ}(\alpha)),\qquad h_{S_0}(\alpha)=\frac{1}{s_0}\sum\limits_{i\in S_0}\mathbb{P}(\beta^0_i\in J_i^{\circ}(\alpha)).$$

We achieve the hitrates $h(0.05)=0.9554$ and $h_{S_0}(0.05)=0.9366$, which are very close to our theoretical prediction of a $95\%$ confidence.

\subsection{Non-sparse synthetic data}

In order to confirm Theorem \ref{thm:mainresult_sparse_transform} we set up the following experiment. We set $p=2^{15}=32768$ with the model $y=F_{\Omega}\beta^0+\varepsilon$, where $F_{\Omega}\in\mathbb{C}^{n \times 32768}$ is a subsampled Fourier matrix ($n$ is explained below), where the rows are sampled according to the probability measure from Theorem \ref{thm:mainresult_sparse_transform}, with $\kappa_j = \min\left(\frac{3\sqrt{2\pi}}{\sqrt{j}},1\right)$. We choose $\beta^0$ such that $z^0=H\beta^0$, where $H$ is the Haar transform and $z^0_i=1$ for $i\in[100]$. The noise is distributed according to a complex Gaussian with relative noise level in the Haar domain $\frac{\Vert \varepsilon\Vert_2}{\Vert F_{\Omega}H^*z^0\Vert_2}\approx 3.3\%$.

The $n=0.95p$ rows of $F_{\Omega}$ are sampled with respect to $\nu$ with replacement, which leads to around $n_{eff}=0.34p$ pairwise distinct rows. The remaining rows are sampled more than once. Although the sampling is done with replacement we use reweighted sampling without replacement in order to solve the LASSO. The sampling scheme is described in detail in \cite{sampling2023}.
The regularization parameter for the LASSO is set as $\lambda=85\lambda_0$, $\lambda_0=\frac{\sigma \Vert \kappa\Vert^{3/2}}{3\sqrt{\pi}\sqrt{n}}(2+\sqrt{10\log p})$, cf. Lemma \ref{6.2}.
This experiment aims to numerically confirm the Gaussianity of $\sqrt{n}(\hat{\beta}^u-\beta^0)$, as (asymptotically) proven in Theorem \ref{thm:mainresult_sparse_transform}. As above, we compare in a Q-Q plot the quantiles of $\frac{\sqrt{2n}}{\hat{\sigma}\hat{\Sigma}_{11}}\Re(\hat{\beta}^u-\beta^0)$ versus the quantiles of a standard normal distribution, where $\hat{\Sigma}_{ii} = \frac{1}{n}(F_{\Omega}^*D^4F_{\Omega})_{ii}$. Note that $\hat{\Sigma}_{ii}=\hat{\Sigma}_{11}=\frac{1}{n}\Vert D^2\Vert_F^2$ for all $i\in[p]$. Figure \ref{fig:qqplot_sparse_trafo} confirms that the desparsified LASSO has an (asymptotic) normal distribution.

\begin{figure}
    \centering
    \includegraphics[width=1\textwidth]{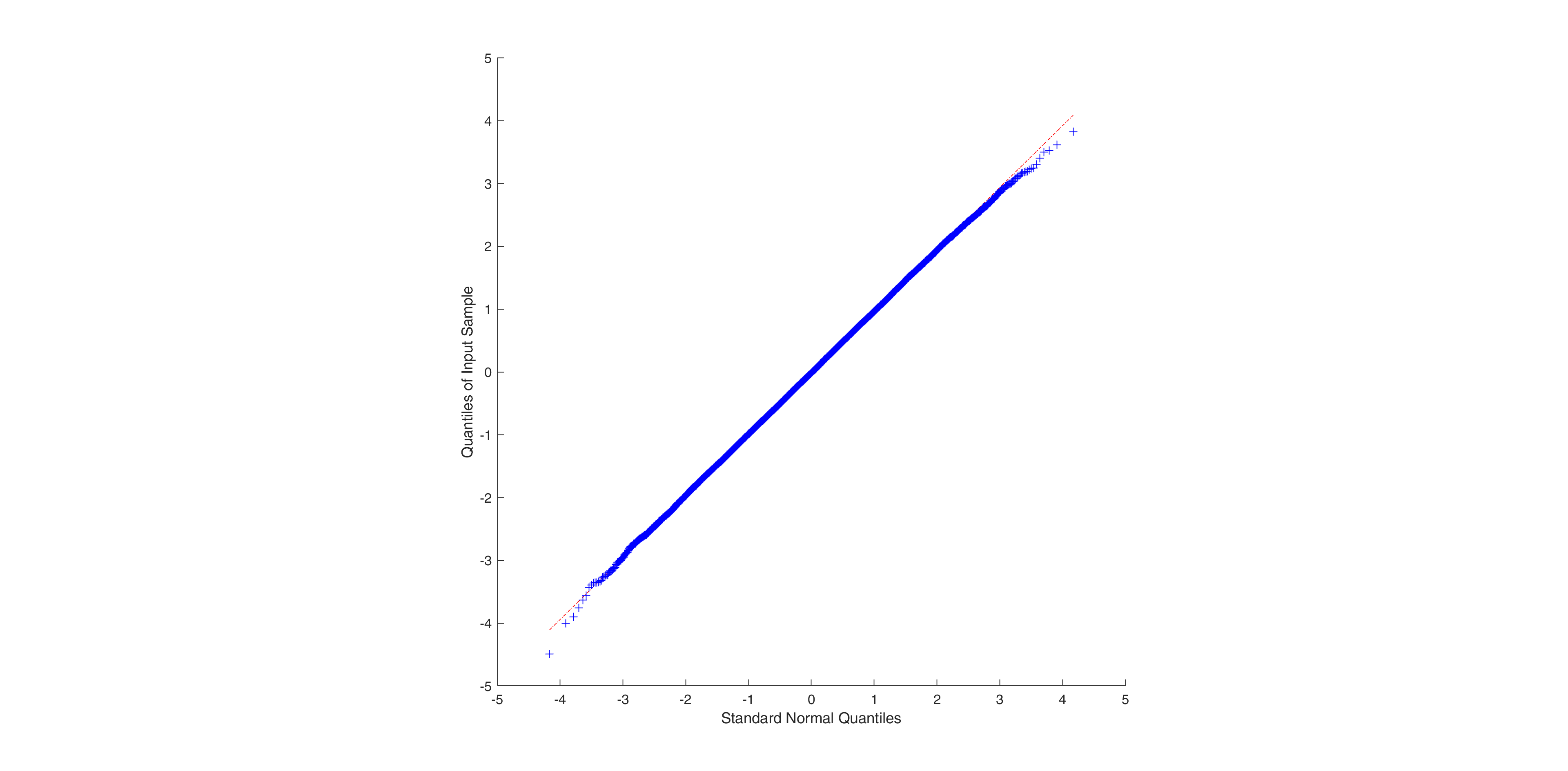}
    \caption{Q-Q plot for one realization of $\frac{\sqrt{2n}}{\hat{\sigma}\hat{\Sigma}_{11}}\Re(\hat{\beta}^u-\beta^0)$ as an input sample.}
    \label{fig:qqplot_sparse_trafo}
\end{figure}

\subsection{Choice of \texorpdfstring{$\lambda$}{lambda} via cross-validation}

One of the numerical issues in the LASSO solution is the choice of the regularization parameter $\lambda$. As it is common in the literature, throughout this paper we assume that $\lambda<\Vert X^Ty\Vert_{\infty}$. Otherwise, the KKT conditions of the LASSO lead only to the trivial solution $\hat{\beta}=0$ \cite{giraud2021introduction}. Arguably, the most common practical method for tuning the LASSO parameter is cross-validation. 

Recently, \cite{Chetverikov.07.05.2016} provided strong guarantees for the LASSO solution when $\lambda$ is chosen via cross-validation. In this procedure the data $(X,y)\in\mathbb{C}^{n\times p}\times\mathbb{C}^n$ is divided into $K$ parts. From these $K$ parts, $K-1$ parts serve as training data and one part serves as test data for $\lambda\in\Lambda_n$, where $\Lambda_n$ is the set of test parameters $\lambda$. Then, the roles of training and test data are exchanged, such that every part of the data will be part of the test dataset at some stage. The prediction errors for the test sets are added. The procedure is repeated for different values of $\lambda$. Finally, the $\lambda$ corresponding to the smallest prediction error is chosen. More formally, let $(I_k)_{k=1}^K$ be a partition of the index set $[n]$. The LASSO observed with $K-1$ data parts is defined as
\begin{equation*}
    \hat{\beta}_{-k}(\lambda)=\argmin\limits_{b\in\mathbb{C}^{p}}\frac{1}{n-\vert I_k\vert}\sum\limits_{i\notin I_k}(y_i-x_i^*b)^2+\lambda\Vert b\Vert_1.
\end{equation*}
With this, the regularization parameter is chosen as
\begin{equation*}\label{eq:cross validation}
    \hat{\lambda}=\argmin\limits_{\lambda\in\Lambda_n}\sum\limits_{k=1}^K\sum\limits_{i\in I_k}\left(y_i-x_i^*\hat{\beta}_{-k}(\lambda)\right)^2.
\end{equation*}
This regularization parameter is optimal in the sense, that it leads to a near optimal prediction norm bound \cite[Thm. 4.1]{Chetverikov.07.05.2016}. 

\subsection{Choice of regularization parameter}\label{subsec:choice of lambda}
The choice of the regularization parameter $\lambda$ has a huge impact on the quality of the confidence intervals. In order to measure this quality precisely we define the hitrate and the hitrate on the support $S_0$, respectively, as 
\begin{equation}
    h=\frac{1}{p}\sum\limits_{i=1}^p\mathbbm{1}_{\{\beta^0_i\in J_i^{\circ}\}},\quad h_{S_0}=\frac{1}{s_0}\sum\limits_{i\in S}\mathbbm{1}_{\{\beta^0_i\in J_i^{\circ}\}}.
\end{equation}
The theoretical analysis in Theorem \ref{thm:LASSOforRIP} of the LASSO requires $\lambda\geq 2\lambda_0$. Thus, $\lambda=2\lambda_0$ would be a sufficient choice for having a small bound in the difference between the LASSO solution and the ground truth, but it is a priori not clear whether this choice is optimal for UQ, in the sense that it is not clear if we would obtain the highest hitrates on the support $h_{S_0}$. Therefore, we examine the hitrates of different values of $\lambda$. 

We calculate for every $\lambda = \frac{k\cdot\lambda_0}{4}, \,k\in[80]$, the hitrates $h(0.05)$ and $h_{S_0}(0.05)$. We keep the setting described in Section \ref{subsec:synthetic_data} but instead of $p=100000$ we set $p=1000$. We conduct the experiment for a partial Fourier matrix and for a standard complex Gaussian matrix. In Figure \ref{fig:gaussian_lambda_hitrates} and \ref{fig:fourier_lambda_hitrates} we plot the hitrates along $\lambda$. It is not a surprise that the hitrates in the Gaussian case are slightly better. Interestingly, in both cases, the choice $\lambda\approx 2\lambda_0$ leads to high hitrates for $h$ and $h_{S_0}$. But the hitrate on the support $h_{S_0}$ increases to a maximum around $\lambda=3\lambda_0$ before it decreases for large $\lambda$. In both cases, in the Gaussian and partial Fourier case the hitrates $h$ and $h_{S_0}$ seem to coincide around $\lambda=3\lambda_0$. In addition to the hitrates we plot the error of cross validation based on \eqref{eq:cross validation} with $K=5$. The optimal $\lambda$ based on cross validation closely coincides with the regularization parameter that provides the highest rate $h_{S_0}$.

\begin{figure}
    \centering
    \begin{minipage}{0.45\textwidth}
        \centering
        \includegraphics[width=1\textwidth]{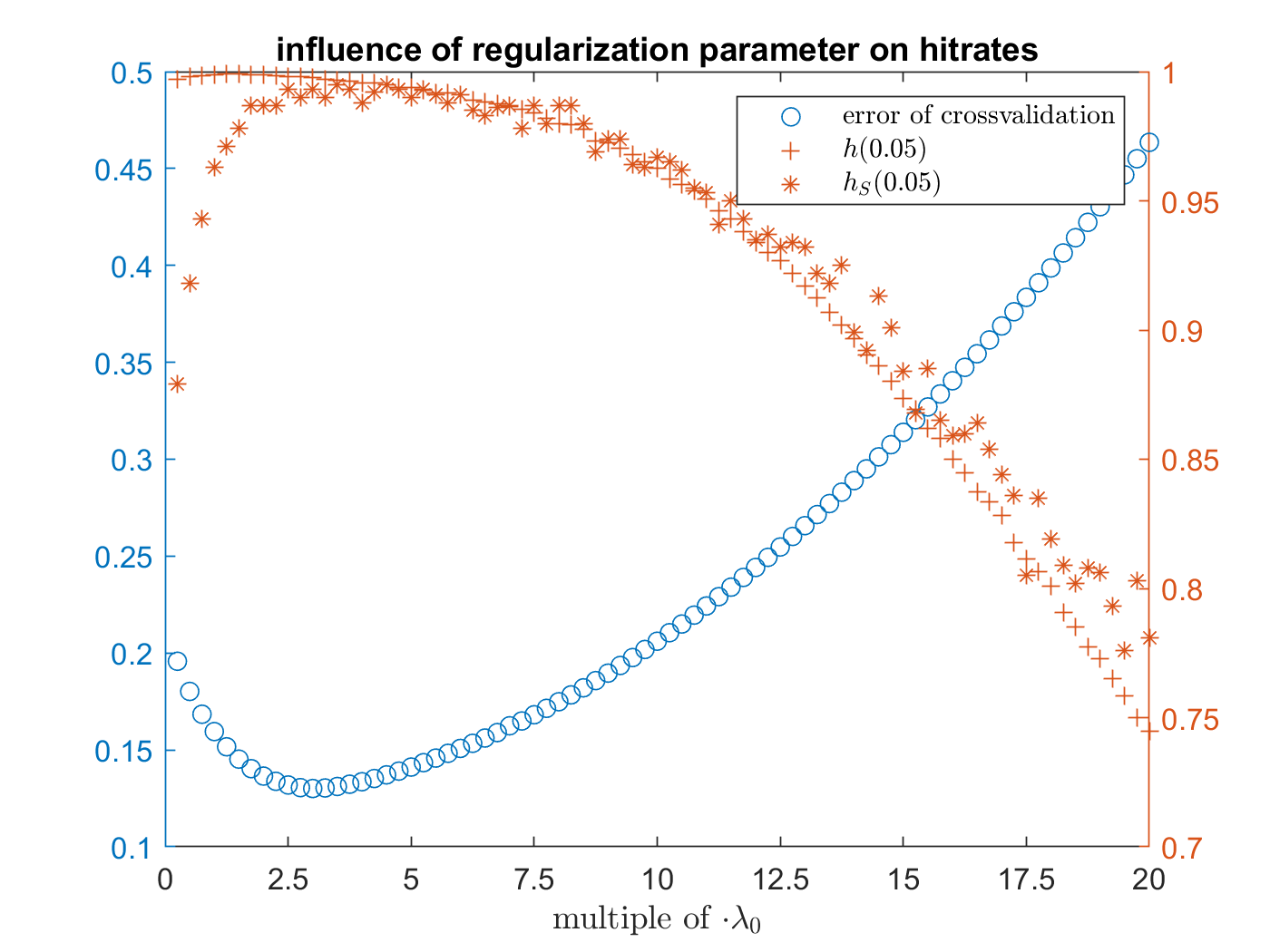} 
        \caption{Hitrates and error of cross validation in relation to the regularization parameter for Gaussian design with $p=1000$.}\label{fig:gaussian_lambda_hitrates}
    \end{minipage}\hfill
    \begin{minipage}{0.45\textwidth}
        \centering
        \includegraphics[width=1\textwidth]{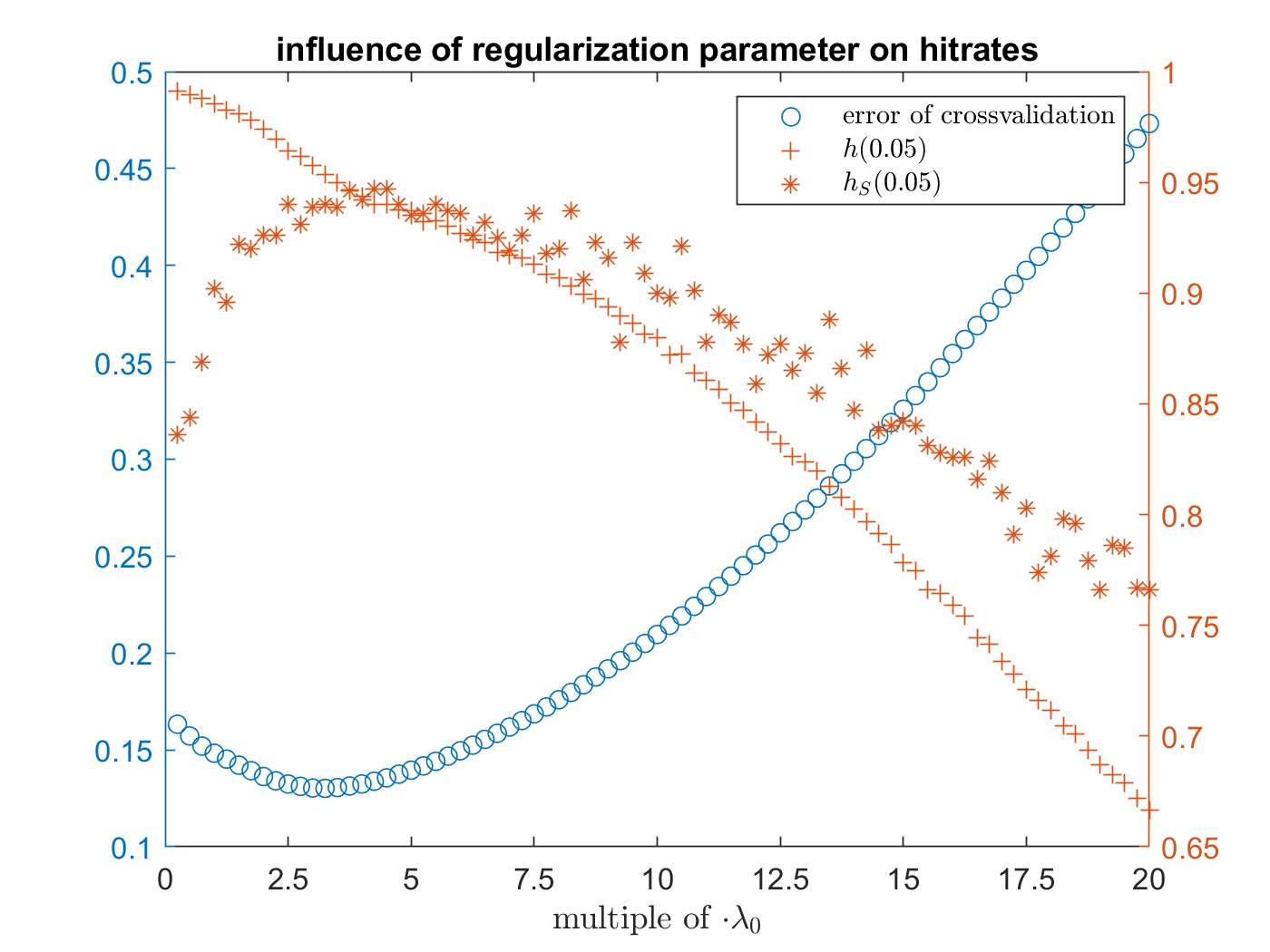}
        \caption{Hitrates and error of cross validation in relation to the regularization parameter for subsampled Fourier design with $p=1000$.}\label{fig:fourier_lambda_hitrates}
    \end{minipage}
\end{figure}

The same observation can be made for higher dimensions, i.e., in the regime $p=10000$ and $p=100000$, respectively. Figure \ref{fig:10k_fourier_lambda_hitrates} and \ref{fig:100k_fourier_lambda_hitrates} show, that the optimal regularization parameter for which the highest hitrate $h_{S_0}$ is achieved is around $\lambda=10\lambda_0$ and $\lambda=25\lambda_0$, respectively. Currently, we do not have an explanation, why these factors are larger for larger $p$. For a smaller $\lambda$ we have again a similar trade-off as before. We achieve a high hitrate over all entries, but a low hitrate for the entries within the support. The rates seem to coincide above the threshold $\lambda=28\lambda_0,$ which is in the region of the optimal regularization parameter regarding cross-validation.

\begin{figure}
    \centering
    \begin{minipage}{0.45\textwidth}
        \centering
        \includegraphics[width=1\textwidth]{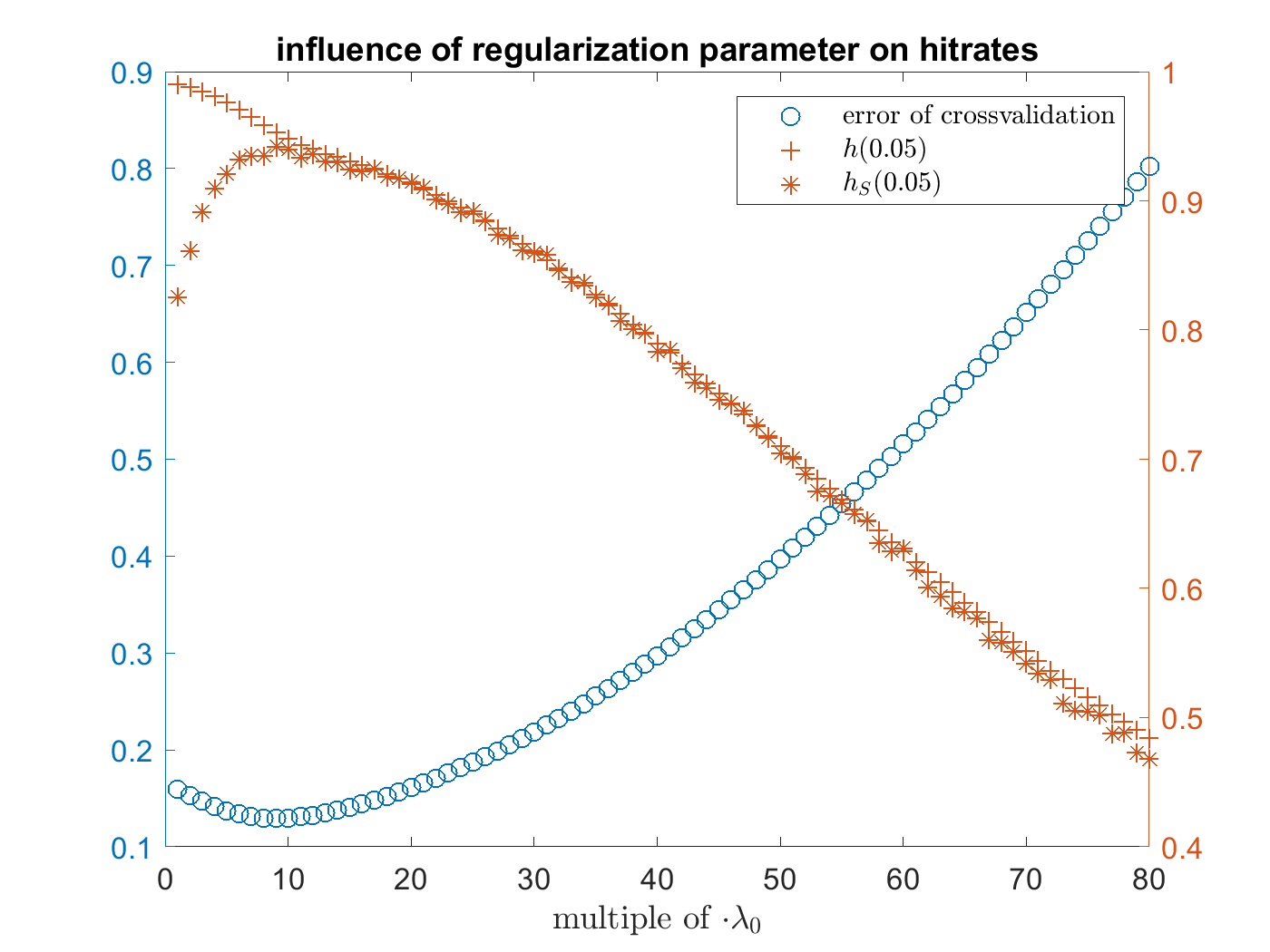}
        \caption{Hitrates and error of cross validation in relation to the regularization parameter for subsampled Fourier design with $p=10000$.}\label{fig:10k_fourier_lambda_hitrates}
    \end{minipage}\hfill
    \begin{minipage}{0.45\textwidth}
        \centering
        \includegraphics[width=1\textwidth]{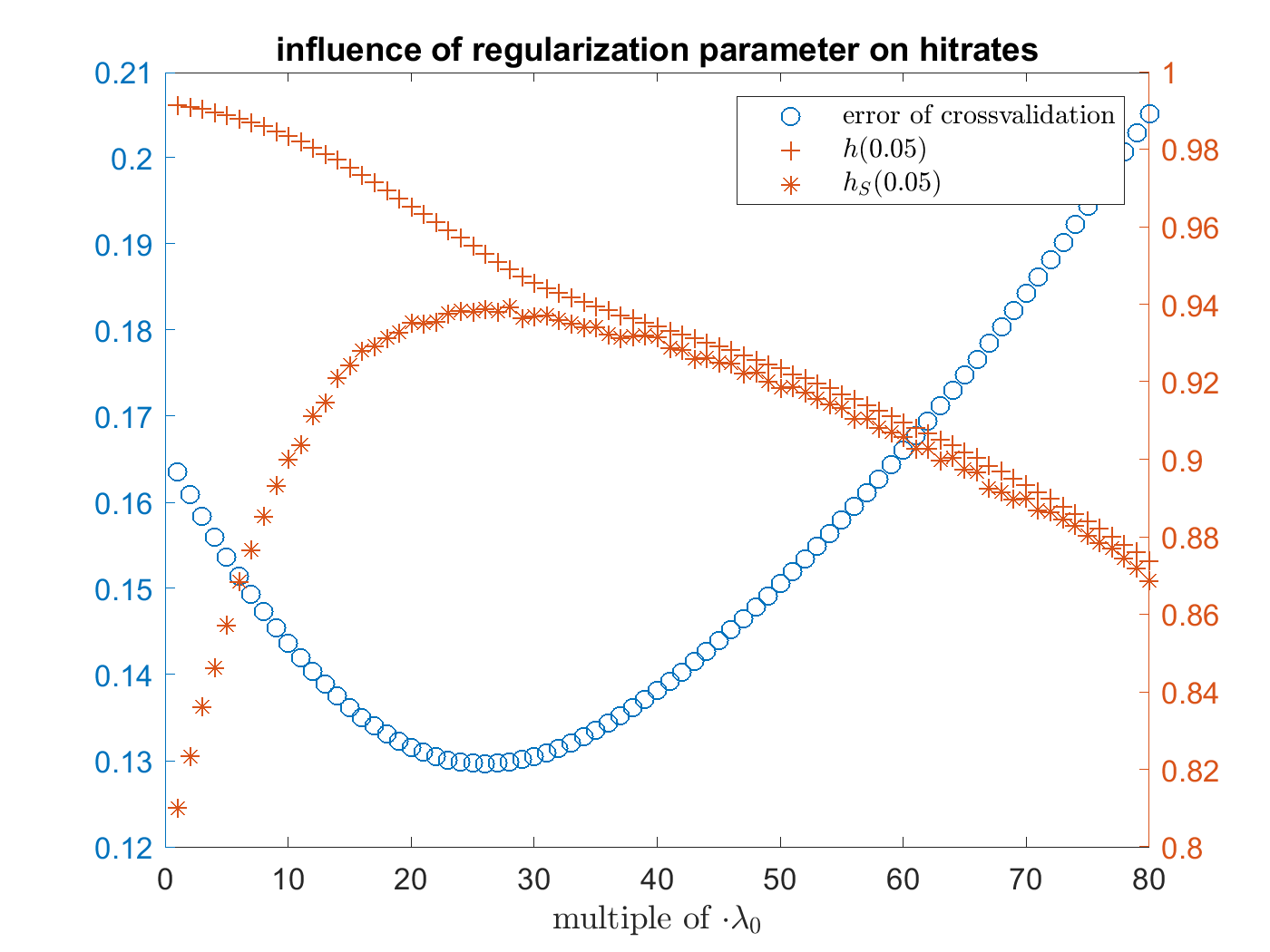}
        \caption{Hitrates and error of cross validation in relation to the regularization parameter for subsampled Fourier design with $p=100000$.}\label{fig:100k_fourier_lambda_hitrates}
    \end{minipage}
\end{figure}

\subsection{Real data}
In order to conclude our experiments, we use real-world data for the validation of our contribution. To this end, we first apply our concept to naturally sparse MR angiography brain imaging data, which only have a few image pixels with significant, non-zero entries, corresponding to the brain vessels that light up \cite{lustig}. Figure \ref{fig:non_sparse_angio} depicts an exemplary MR angiography image of these brain vessels taken from the Brain Vasculature (BraVa) database \cite{Wright.2013}, which serves as input for our simulated data to test the performance and limits of our contribution. Experimentally, we artificially increase the sparsity of the input image - since we are only interested in the vessels - by setting a manual threshold removing the brain background image information and noise floor. This means that every pixel with a magnitude lower than this threshold is set to $0$ as presented in Figure \ref{fig:sparse_angio}. And it corresponds to an image intensity threshold of 200 [a.u.] of the input angiogram that results in a $s_0=1282$-sparse image, which serves as the (unknown) ground truth $\beta^0\in\mathbb{R}^{92160}$. Note that the theory as well as the algorithm allow complex images, but due to better illustration, we illustrate the procedure with real images where we simply set the imaginary part equal to $0$.

\begin{figure}
     \centering
     \begin{subfigure}[b]{0.31\textwidth}
         \centering
         \includegraphics[width=\textwidth]{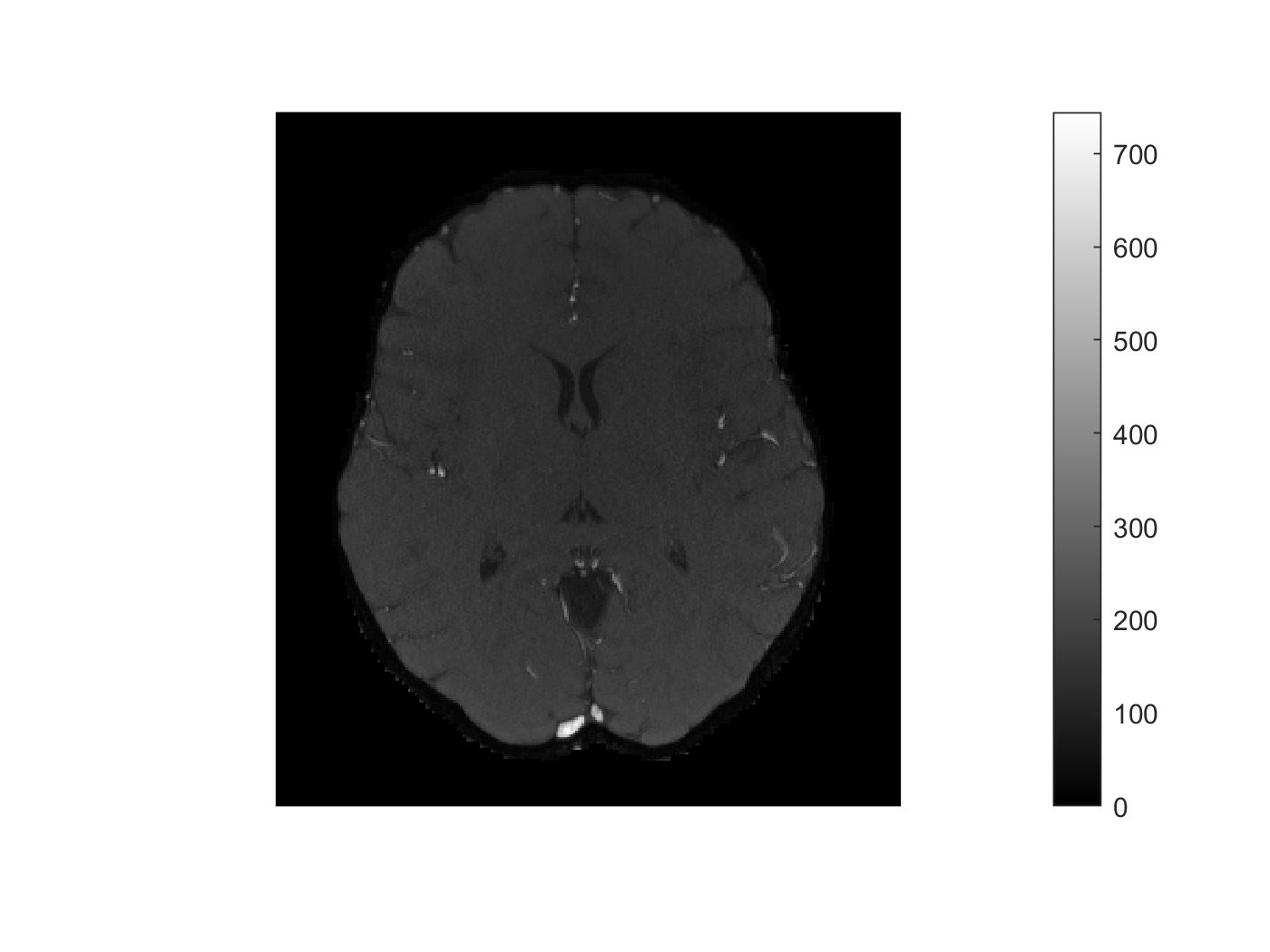}
         \caption{Original non-sparse angiography data of single brain slice with vessels lighting up; image intensities in arbitrary units [a.u.].}
         \label{fig:non_sparse_angio}
     \end{subfigure}
     \hfill
     \begin{subfigure}[b]{0.31\textwidth}
         \centering
         \includegraphics[width=\textwidth]{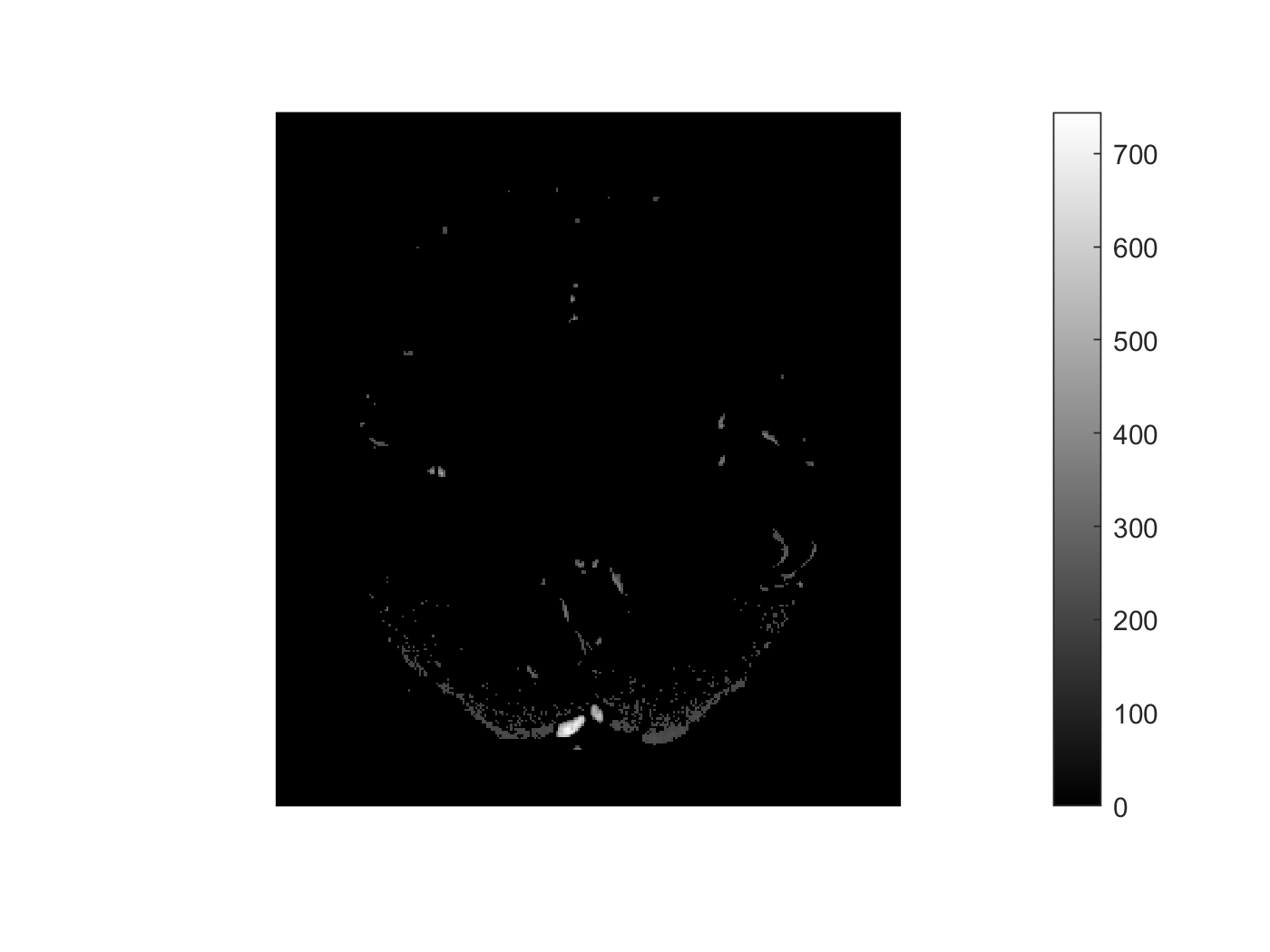}
         \caption{Sparsified angiography ($s_0=1282$) data after thresholding with $200$ [a.u.]; image intensities in arbitrary units [a.u.].}
         \label{fig:sparse_angio}
     \end{subfigure}
     \hfill
     \begin{subfigure}[b]{0.31\textwidth}
         \centering
         \includegraphics[width=\textwidth]{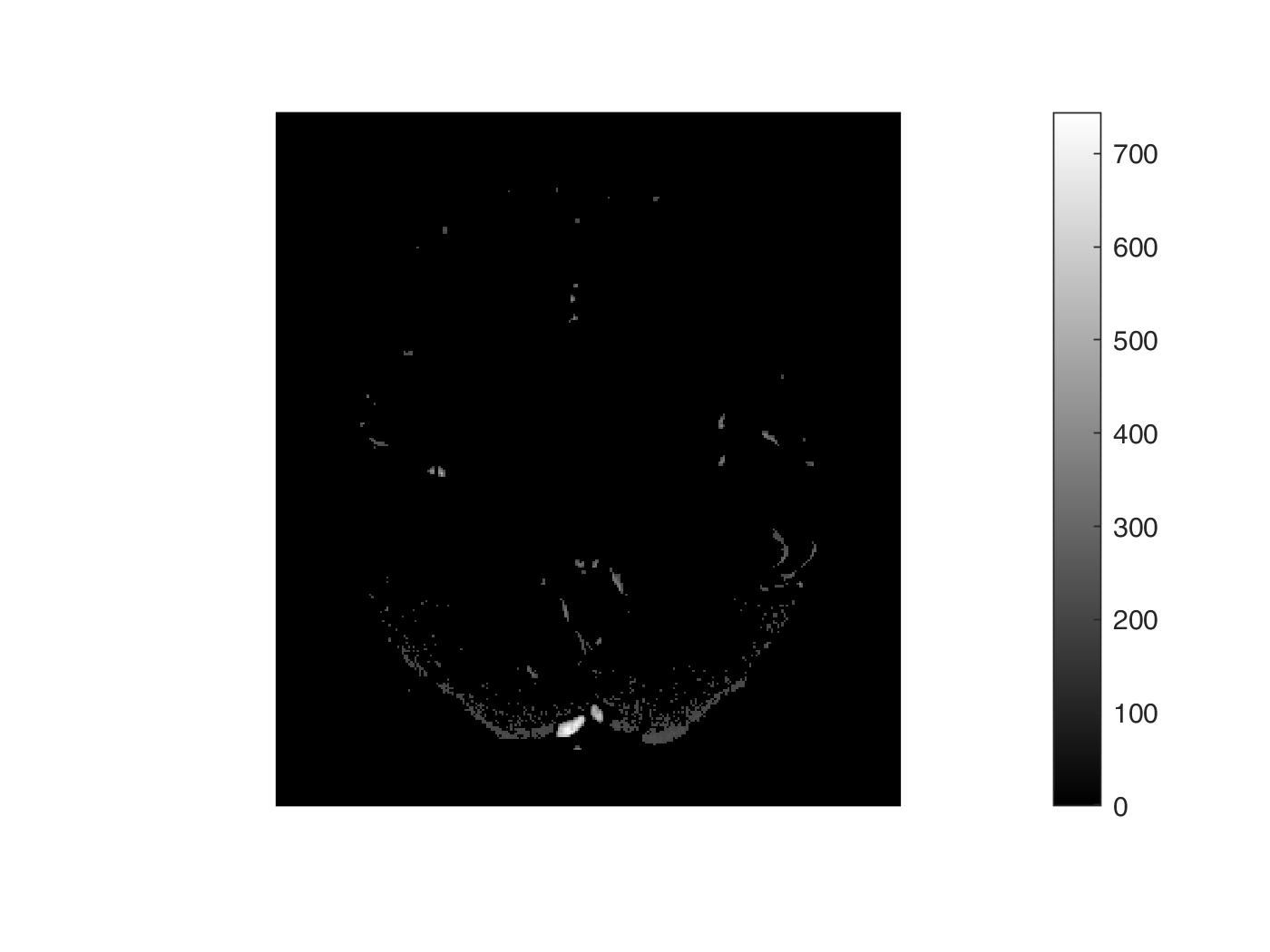}
         \caption{Recovered image via desparsified LASSO of sparse angiography ($s_0=1282$); image intensities in arbitrary units [a.u.].}
         \label{fig:recov_sparse_angio}
     \end{subfigure}
        \caption{Angiography data}
        \label{fig:three graphs}
\end{figure}

In the following experiment, we simulate the image acquisition in MRI by subsampling $n=0.4p=36864$ different rows of a full Fourier matrix. Then, we add complex noise with $\sigma=1000$ leading to $\frac{\Vert\varepsilon\Vert_2}{\Vert F_{\Omega}\beta^0\Vert_2}=0.106$. From the model $y=F_{\Omega}\beta^0+\varepsilon$ we know $F_{\Omega}$, the truncated (subsampled) data $y\in\mathbb{C}^{36864}$ and the noise level $\sigma$. The goal is to reconstruct the image $\beta^0$ and to provide confidence intervals for every component of the ground truth. After solving the LASSO, where we chose again $\lambda=25\lambda_0=25\frac{\sigma}{\sqrt{n}}(2+\sqrt{10\log p})$ we derive the desparsified LASSO (displayed in Figure \ref{fig:recov_sparse_angio}) and the confidence regions. Since only the real part is relevant, we obtain intervals, and not regions.

\begin{figure}
\begin{minipage}[b]{1.0\linewidth}
  \centering
  \centerline{\includegraphics[width=8.5cm]{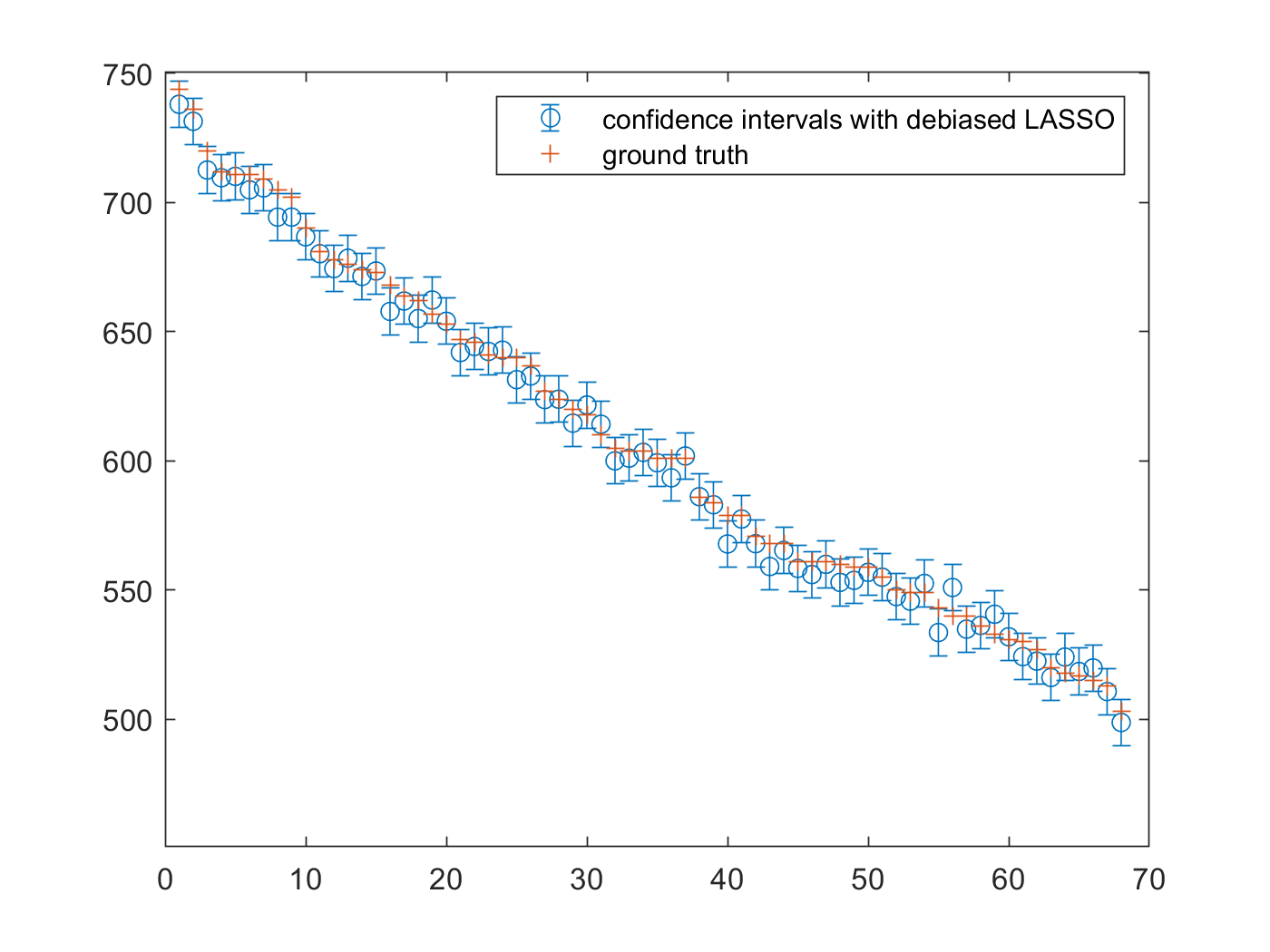}}
\end{minipage}
\caption{Confidence intervals based on the desparsified LASSO for the pixels with largest magnitude sorted in a descending order.}
\label{fig:conf_int_largest_pixels}
\end{figure}

We plot the confidence intervals based on the desparsified LASSO estimator and the ground truth for the 68 pixels with the largest magnitude in Figure \ref{fig:conf_int_largest_pixels}. In almost every case we construct the confidence intervals such, that they contain the ground truth. In order to quantify this behavior we calculate the hitrates for 100 realizations of the subsampled Fourier matrix and the noise. The results are presented in Table \ref{tab:hitrates}. Additionally, we change the threshold [in a.u.] leading to different sparsity levels. Besides the hitrates we provide SSIM coefficients that measure the similarity of the ground truth image $\beta^0$ and the image of the estimator $\hat{\beta}^u$. There is no concrete threshold where the method fails, but the method is rather robust and works better if $s$ is small or if $n$ is large. Even though the (sufficient) condition $n\gtrsim s\log^2 p$ is not fulfilled for any threshold, the practical results show that the UQ procedure still works very well. Note that the hitrates do not depend on the noise level, since the radius of confidence regions in Theorem \ref{thm:confidence circle} scales with the noise level. For example, a threshold of 200 and a noise level of $\sigma=2000$ lead to hitrates of $h_{S_0}=0.932$ and $h=0.951$.

\begin{table}[t]
    \centering
    \begin{tabular}{c|c|c|c|c}
         threshold & $s_0$ & $h_{S_0}$ & $h$ & SSIM \\
         \hline 
         210 & 648 & 0.942 & 0.955 & 0.967 \\
         200 & 1282 & 0.931 & 0.951 & 0.964 \\
         190 & 2789 & 0.901 & 0.941 & 0.954 \\
         180 & 5510 & 0.823 & 0.916 & 0.889
    \end{tabular}
    \caption{Values of $h_{S_0}$, $h$ and SSIM for different sparsity levels $s_0$ and constant $n=0.4 p$. The values are averaged over 100 realizations of $F_{\Omega}$ and $\varepsilon$. The sparsity levels are based on the different image intensity thresholds [a.u.].}
    \label{tab:hitrates}
\end{table}

\section{Conclusion}

In this paper, we derived honest confidence intervals for bounded orthonormal systems. The length of the confidence intervals decreases by the optimal rate $\frac{1}{\sqrt{n}}$. We proved that the amount of data that is sufficient for the uncertainty quantification process is $n \gtrsim \max\{ s\log^2 s\log p, s \log^2 p \}$. For this purpose we desparsified the LASSO estimator paying the price of loosing sparsity.
Our established theory for bounded orthonormal systems includes measurement operators usually employed in MRI, which can be represented by a subsampled Fourier matrix. This highly relevant application motivates us to extend the intervals to two-dimensional confidence regions. In addition we showed that the desparsified LASSO can be extended to the case when the underlying ground truth is not sparse in the canonical basis but sparse with respect to the Haar wavelet basis.
%Furthermore, we generalized our uncertainty quantification theory to the case where parts of the sparse signal to be retrieved are correlated and can be grouped together, leading to guarantees for the desparsified group LASSO.

\appendix

\section{Acknowledgment}

We gratefully acknowledge financial support with funds provided by the German Federal Ministry of Education and Research in the grant \ldq SparseMRI3D+: Compressive Sensing und Quantifizierung von Unsicherheiten f\"ur die beschleunigte multiparametrische quantitative Magnetresonanztomografie (FZK 05M20WOA)\rdq.

\bibliography{main}

\section{Appendix}\label{appendix}

In this supplement to the paper, we recall a concentration inequality in Section \ref{subsec:bernstein} and the proof of three theorems in Section \ref{subsec:oracle} that allow us to obtain our main result, namely, Theorem \ref{thm:mainresult}. These are an RIP-based proof of an $\ell_1$ oracle bound, a noise-estimate inequality for BOS, and an RIP-based $\ell_2$ oracle bound. Finally, Section \ref{subsec:proofs_main_body} contains detailed proofs that were skipped in the main body of this paper.

\subsection{Bernstein's Inequality}\label{subsec:bernstein}

Bernstein's inequality for bounded random variables is one of the tools used in our proof.

\begin{theorem}\cite[Theorem 2.8.4.]{Vershynin.2018}\label{bernstein}
Let $Z_1,\hdots,Z_n$ be independent, mean zero random variables, such that $\vert Z_i\vert\leq K$ a.s. for all $i\in[n]$. Then, for every $t\geq 0$, we have
$$\mathbb{P}\left(\left\vert\sum\limits_{i=1}^nZ_i\right\vert\geq t\right)\leq 2\exp\left(-\frac{t^2/2}{\sum\limits_{i=1}^n\mathbb{E}[Z_i^2]+Kt/3}\right).$$
\end{theorem}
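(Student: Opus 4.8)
The plan is to follow the classical Chernoff-bounding argument. First I would apply the exponential Markov inequality: for any $\lambda>0$,
\[
\mathbb{P}\Big(\sum_{i=1}^n Z_i \geq t\Big) \;\leq\; e^{-\lambda t}\,\mathbb{E}\Big[e^{\lambda\sum_{i=1}^n Z_i}\Big] \;=\; e^{-\lambda t}\prod_{i=1}^n \mathbb{E}\big[e^{\lambda Z_i}\big],
\]
where the factorization over $i$ uses independence. The problem then reduces to controlling the moment generating function of each $Z_i$.

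The key step, which I expect to be the main obstacle, is the moment generating function bound: for a mean-zero variable $Z$ with $|Z|\leq K$ and $0<\lambda<3/K$,
\[
\mathbb{E}\big[e^{\lambda Z}\big] \;\leq\; \exp\!\Big(\frac{\lambda^2\,\mathbb{E}[Z^2]}{2\,(1-\lambda K/3)}\Big).
\]
To establish this I would expand $e^{\lambda Z}=1+\lambda Z+\sum_{k\geq 2}\frac{\lambda^k Z^k}{k!}$, take expectations so that the linear term drops out by mean-zero-ness, bound $\mathbb{E}|Z|^k\leq K^{k-2}\,\mathbb{E}[Z^2]$ for $k\geq 2$, and use the elementary inequality $k!\geq 2\cdot 3^{k-2}$ to sum the resulting geometric series, obtaining
\[
\mathbb{E}\big[e^{\lambda Z}\big]\;\leq\;1+\frac{\lambda^2\mathbb{E}[Z^2]}{2(1-\lambda K/3)}\;\leq\;\exp\!\Big(\frac{\lambda^2\mathbb{E}[Z^2]}{2(1-\lambda K/3)}\Big)
\]
via $1+x\leq e^x$. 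Obtaining the sharp constant $3$ in the denominator (rather than a larger numeric constant) is the delicate point and hinges on the precise factorial inequality; this is essentially all the genuine work in the proof.

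Writing $\sigma^2:=\sum_{i=1}^n\mathbb{E}[Z_i^2]$ and combining the two displays gives, for all $0<\lambda<3/K$,
\[
\mathbb{P}\Big(\sum_{i=1}^n Z_i \geq t\Big) \;\leq\; \exp\!\Big(-\lambda t + \frac{\lambda^2\sigma^2}{2(1-\lambda K/3)}\Big).
\]
I would then optimize the exponent over $\lambda$; the choice $\lambda=t/(\sigma^2+Kt/3)$, which indeed lies in $(0,3/K)$, makes $1-\lambda K/3=\sigma^2/(\sigma^2+Kt/3)$ and collapses the exponent to $-\tfrac{t^2/2}{\sigma^2+Kt/3}$. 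Finally, applying the identical argument to the variables $-Z_i$ (again mean-zero, bounded by $K$, with the same second moments) controls the lower tail by the same quantity, and a union bound over the two tails supplies the factor $2$, yielding the stated two-sided inequality. Everything apart from the moment generating function estimate is bookkeeping and a one-variable optimization.
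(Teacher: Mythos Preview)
Your argument is the standard and correct proof of Bernstein's inequality; the MGF bound via the factorial estimate $k!\geq 2\cdot 3^{k-2}$, the choice $\lambda=t/(\sigma^2+Kt/3)$, and the union bound over the two tails all work exactly as you describe. Note, however, that the paper does not actually prove this statement: it is quoted as \cite[Theorem~2.8.4]{Vershynin.2018} and used as a black-box tool, so there is no in-paper proof to compare against---your derivation simply supplies what the cited reference contains.
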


\subsection{Oracle Inequalities}\label{subsec:oracle}

We start with an estimate from \cite{FOUCART2023441} of the support size of the LASSO solution $\hat{\beta}$, defined as the minimizer of \eqref{eq:LASSO}. For this purpose we define $\hat{S}=\{i\in[p]\mid \hat{\beta}_i\neq 0\}$. For some constant $\delta\in(0,1)$ and $s_0<p$ we assume that the matrix $\frac{1}{\sqrt{n}}X$ satisfies the RIP of order $t:=36\cdot \frac{1+\delta}{1-\delta}\cdot s_0+1$ and constant $\delta_t<\delta$. It is shown in \cite{FOUCART2023441} that 
\begin{equation}\label{eq:support_LASSO_cardinality}
    \vert\hat{S}\vert\leq 36\cdot \frac{1+\delta}{1-\delta}\cdot s_0:=C_{\delta}\cdot s_0.
\end{equation}
Similar bounds can be found in \cite[Chapter 6]{Buhlmann.2011} or \cite[Chapter 7]{wainwright2019high} when the restricted eigenvalue condition or the compatibility condition is assumed instead of the RIP.

\begin{theorem}\label{thm:LASSOforRIP}
Suppose that $\frac{1}{\sqrt{n}}X$ satisfies the RIP with order $t:=(36\cdot \frac{1+\delta}{1-\delta}+1) s_0$ and constant $\delta_t<\delta<1$. Let further $\max\limits_{1\leq j\leq p}\frac{2}{n}\vert\langle\varepsilon, x_j\rangle\vert\leq\lambda_0$ for some $\lambda_0$. Then, for $\lambda\geq 2\lambda_0$, we have
$$\frac{1}{n}\Vert X(\hat{\beta}-\beta^0)\Vert_2^2+\lambda\Vert\hat{\beta}-\beta^0\Vert_1\leq\frac{4\lambda^2 s_0}{1-\delta_{t}}.$$
In particular, it holds that $\Vert\hat{\beta}-\beta^0\Vert_1\leq\frac{4\lambda s_0}{1-\delta_{t}}$.
\begin{proof}
By \eqref{eq:support_LASSO_cardinality} the difference vector $\hat{\beta}-\beta^0$ is $t=(1+ C_{\delta})s_0$-sparse. The inequality
$$\frac{2}{n}\Vert X(\hat{\beta}-\beta^0)\Vert_2^2+\lambda\Vert\hat{\beta}_{S_0^c}\Vert_1\leq 3\lambda\Vert\hat{\beta}_{S_0}-\beta_{S_0}^0\Vert_1$$
is derived in \cite[Lemma 6.3]{Buhlmann.2011}. Together with the definition of the RIP, we deduce that
\begin{align*}
\frac{2}{n}\Vert X(\hat{\beta}-\beta^0)\Vert_2^2+\lambda\Vert\hat{\beta}-\beta^0\Vert_1&=\frac{2}{n}\Vert X(\hat{\beta}-\beta^0)\Vert_2^2+\lambda\Vert\hat{\beta}_{S_0}-\beta^0_{S_0}\Vert_1+\lambda\Vert\hat{\beta}_{S_0^c}\Vert_1\\
&\leq 4\lambda\Vert\hat{\beta}_{S_0}-\beta^0_{S_0}\Vert_1
\leq 4\lambda\sqrt{s_0}\Vert\hat{\beta}_{S_0}-\beta^0_{S_0}\Vert_2\\
&\leq 4\lambda\sqrt{s_0}\Vert\hat{\beta}-\beta^0\Vert_2
\leq 4\lambda\sqrt{s_0}\frac{1}{\sqrt{1-\delta_{t}}\sqrt{n}}\Vert X(\hat{\beta}-\beta^0)\Vert_2\\
&\leq\frac{1}{n}\Vert X(\hat{\beta}-\beta^0)\Vert_2^2+\frac{4\lambda^2 s_0}{(1-\delta_{t})},
\end{align*}
where we used the inequality $4ab\leq a^2+4b^2$ in the last step. By subtracting $\frac{1}{n}\Vert X(\hat{\beta}-\beta^0)\Vert_2^2$ from both sides, the result follows.
\end{proof}
\end{theorem}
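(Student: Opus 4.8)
The plan is to combine three standard ingredients: the support-size bound for the LASSO, the basic (cone) inequality coming from optimality of $\hat\beta$, and the RIP lower bound of $\frac{1}{\sqrt n}\Vert X\,\cdot\,\Vert_2$ on sparse vectors. Throughout write $u:=\hat\beta-\beta^0$. First I would invoke \eqref{eq:support_LASSO_cardinality}: under the RIP of order $t=(1+C_\delta)s_0$ with constant $\delta_t<\delta$ it gives $\vert\hat S\vert\le C_\delta s_0$, and since $\beta^0$ is $s_0$-sparse the vector $u$ is supported on $S_0\cup\hat S$, hence $t$-sparse. This is exactly what makes the RIP of order $t$ applicable to $u$ in the final step, and one should check that the order $t$ used here matches the $t$ for which the RIP is hypothesized (it does, by the definition $t=(36\cdot\tfrac{1+\delta}{1-\delta}+1)s_0$).

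Second, I would use optimality of $\hat\beta$ for \eqref{eq:LASSO} together with the noise control $\max_{j}\frac{2}{n}\vert\langle\varepsilon,x_j\rangle\vert\le\lambda_0$ and the choice $\lambda\ge 2\lambda_0$ to obtain the basic inequality
$$\frac{2}{n}\Vert Xu\Vert_2^2+\lambda\Vert\hat\beta_{S_0^c}\Vert_1\le 3\lambda\Vert\hat\beta_{S_0}-\beta^0_{S_0}\Vert_1,$$
which is \cite[Lemma 6.3]{Buhlmann.2011} transcribed to the complex-valued setting. Adding $\lambda\Vert\hat\beta_{S_0}-\beta^0_{S_0}\Vert_1$ to both sides and using $\Vert u\Vert_1=\Vert\hat\beta_{S_0}-\beta^0_{S_0}\Vert_1+\Vert\hat\beta_{S_0^c}\Vert_1$ (the latter because $\beta^0$ is supported on $S_0$) gives $\frac{2}{n}\Vert Xu\Vert_2^2+\lambda\Vert u\Vert_1\le 4\lambda\Vert\hat\beta_{S_0}-\beta^0_{S_0}\Vert_1$.

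Third, I would bound the right-hand side. Cauchy--Schwarz on the $s_0$-element index set $S_0$ gives $\Vert\hat\beta_{S_0}-\beta^0_{S_0}\Vert_1\le\sqrt{s_0}\,\Vert u\Vert_2$, and the RIP of order $t$ applied to the $t$-sparse vector $u$ gives $\Vert u\Vert_2\le\frac{1}{\sqrt{(1-\delta_t)n}}\,\Vert Xu\Vert_2$; chaining these, the right-hand side is at most $\frac{4\lambda\sqrt{s_0}}{\sqrt{(1-\delta_t)n}}\Vert Xu\Vert_2$. Finally I would absorb this term into the left-hand side via Young's inequality in the form $4ab\le a^2+4b^2$ with $a=\frac{1}{\sqrt n}\Vert Xu\Vert_2$ and $b=\frac{\lambda\sqrt{s_0}}{\sqrt{1-\delta_t}}$, which yields the bound $\frac{1}{n}\Vert Xu\Vert_2^2+\frac{4\lambda^2 s_0}{1-\delta_t}$. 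Subtracting $\frac{1}{n}\Vert Xu\Vert_2^2$ from both sides of $\frac{2}{n}\Vert Xu\Vert_2^2+\lambda\Vert u\Vert_1\le\frac{1}{n}\Vert Xu\Vert_2^2+\frac{4\lambda^2 s_0}{1-\delta_t}$ gives the claimed oracle bound; the ``in particular'' statement follows by dropping the nonnegative prediction term and dividing by $\lambda$.

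The only genuinely delicate point is the basic inequality of the second step: it is where both hypotheses — $\lambda\ge 2\lambda_0$ and the event $\{\max_j\frac{2}{n}\vert\langle\varepsilon,x_j\rangle\vert\le\lambda_0\}$ — are consumed, and one must verify that the complex $\ell_1$-subdifferential argument and the splitting over $S_0$ carry over verbatim from the real case. Everything after that is a mechanical RIP-plus-Cauchy--Schwarz-plus-Young computation.
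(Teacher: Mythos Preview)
Your proposal is correct and follows essentially the same approach as the paper's proof: both invoke \eqref{eq:support_LASSO_cardinality} for $t$-sparsity of $u$, then use the basic inequality from \cite[Lemma~6.3]{Buhlmann.2011}, Cauchy--Schwarz on $S_0$, the RIP lower bound, and the elementary inequality $4ab\le a^2+4b^2$ in the same order. Your remark about checking that the complex $\ell_1$ argument carries over is a fair caveat that the paper leaves implicit.
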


The following lemma, which can be found in \cite[Lemma 6.2]{Buhlmann.2011}, provides an upper bound for the impact of the noise $\varepsilon$ on a deterministic measurement matrix $X$. More precisely, the event $\{\varepsilon\in\mathbb{R}^n:\max\limits_{j\in[p]}\frac{2}{n}\vert\langle\varepsilon,X_j\rangle\vert\leq\lambda_0\}$ holds with high probability. For the sake of completeness, we state and prove a complex version of this result. Please note that a slightly different condition on $\lambda_0$ is obtained as in the real case from \cite[Lemma 6.2]{Buhlmann.2011}.

\begin{lemma}\label{6.2}
Let $\max\limits_{1\leq j\leq p}\hat{\Sigma}_{jj}\leq K$ for some $K>0$. Let further $\varepsilon\sim\mathcal{CN}(0,\sigma^2I_{n\times n})$ be the random noise and $X\in\mathbb{C}^{n\times p}$ a fixed random sampled matrix associated to a BOS. Then for any $t>0$ and for $\lambda_0=\frac{\sigma\sqrt{K}}{\sqrt{n}}(2+\sqrt{t^2+2\log p})$ we have
$$\Pr\left(\max\limits_{j\in[p]}\frac{2}{n}\vert\langle\varepsilon,X_j\rangle\vert\leq\lambda_0\right)\geq 1-2e^{-t^2/4}.$$
In particular, the choice $t^2:=8\log p$ leads to $\lambda_0=\frac{\sigma\sqrt{K}}{\sqrt{n}}(2+\sqrt{10\log p})$ and to the tail bound $\Pr(\max\limits_{j\in[p]}\frac{2}{n}\vert\langle\varepsilon,X_j\rangle\vert\leq\lambda_0)\geq 1-p^{-2}.$
\begin{proof}
We start by noting that $\vert\langle\varepsilon,X_j\rangle\vert=\sqrt{(\Re\langle\varepsilon,X_j\rangle)^2+(\Im\langle\varepsilon,X_j\rangle)^2}=\Vert g_j\Vert_2,$ where $g_j=(\Re\langle \varepsilon,X_j\rangle,\Im\langle\varepsilon,X_j\rangle)^T$ is a two-dimensional normal distributed real random vector. The mean of its components is given by
$$\mathbb{E}[\Re\langle\varepsilon,X_j\rangle]=\langle\mathbb{E}\Re\varepsilon,X_j\rangle=0=\mathbb{E}[\Im\langle\varepsilon,X_j\rangle]$$
and the variance by
\begin{align*}
\var(\Re\langle \varepsilon,X_j\rangle)&
%=\var(\langle\Re \varepsilon,\Re X_j\rangle+\langle\Im \varepsilon,\Im X_j\rangle)\\
=\var\langle\Re \varepsilon,\Re X_j\rangle+\var\langle\Im \varepsilon,\Im X_j\rangle\\
&=\var\left(\sum\limits_{i=1}^n\Re\varepsilon_i\Re (X_j)_i\right)+\var\left(\sum\limits_{i=1}^n\Im\varepsilon_i\Im (X_j)_i\right)\\
&=\sum\limits_{i=1}^n(\Re X_j)_i^2\var(\Re\varepsilon_i)+\sum\limits_{i=1}^n(\Im X_j)_i^2\var(\Im\varepsilon_i)\\
&=\frac{\sigma^2}{2}\sum\limits_{i=1}^n(\Re X_j)_i^2+(\Im X_j)_i^2=\frac{\sigma^2}{2}\Vert X_j\Vert_2^2.
\end{align*}
The normalized vector with entries $\tilde{g}_j:=\frac{\sqrt{2}}{\sigma\Vert X_j\Vert_2}\cdot g_j$, $j\in[n]$ is standard normal distributed. Since $X$ is a matrix associated to a BOS, we obtain
\begin{equation}\label{eq:bound_column_design}
    \Vert X_j\Vert_2\leq\sqrt{K}\cdot \sqrt{n}
\end{equation}
for every $j\in[p]$. It follows, that
\begin{align*}
\Pr\left(\max\limits_{j\in[p]}\frac{2}{n}\vert\langle\varepsilon,X_j\rangle\vert\geq \lambda_0\right)
%&\leq\Pr\left(\frac{2}{n}\vert\langle\varepsilon,X_1\rangle\vert,...,\vert\langle\varepsilon,X_n\rangle\vert\geq\lambda_0\right)
&\leq \sum\limits_{j=1}^p\mathbb{P}\left(\frac{2}{n}\vert \langle \varepsilon, X_j\rangle \vert\geq \lambda_0\right)
\leq p\cdot\Pr\left(\frac{2}{n}\Vert g_1\Vert_2\geq\lambda_0\right)\\
&\leq p\cdot\Pr\left(\frac{\sqrt{2}\sigma\sqrt{K}}{\sqrt{n}}\Vert\tilde{g}\Vert_2\geq \lambda_0\right)\\
&\leq p\cdot\Pr\left(\Vert\tilde{g}\Vert_2\geq\sqrt{2}+\sqrt{t^2/2+\log p}\right) \leq pe^{-t^2/4-\log p}=e^{-t^2/4},
\end{align*}
where we applied \cite[Equation 8.89]{Foucart.2013} in the third line.
\end{proof}
\end{lemma}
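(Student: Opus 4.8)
The plan is to fix a single column $j\in[p]$, reduce the bad event $\{\tfrac2n\vert\langle\varepsilon,X_j\rangle\vert\ge\lambda_0\}$ to a tail bound for the Euclidean norm of a standard two-dimensional Gaussian vector, and then union bound over the $p$ columns; this is the complex analogue of the real-case argument in \cite{Buhlmann.2011}, the only novelty being a factor $\tfrac12$ that enters through the complex noise. Since $X$ is a fixed matrix here, the only randomness is $\varepsilon\sim\mathcal{CN}(0,\sigma^2 I_{n\times n})$.

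First I would observe that $\langle\varepsilon,X_j\rangle=\sum_{i=1}^n\overline{\varepsilon_i}(X_j)_i$ is a linear functional of the complex Gaussian $\varepsilon$, hence a centered complex Gaussian, and that $\vert\langle\varepsilon,X_j\rangle\vert=\Vert g_j\Vert_2$ for the real vector $g_j:=(\Re\langle\varepsilon,X_j\rangle,\Im\langle\varepsilon,X_j\rangle)^T\in\mathbb{R}^2$. A direct moment computation, using $\mathbb{E}\varepsilon=0$ and that the real and imaginary parts of a $\mathcal{CN}(0,\sigma^2)$ variable are independent $\mathcal{N}(0,\sigma^2/2)$, shows that $g_j$ is centered with covariance $\tfrac{\sigma^2}{2}\Vert X_j\Vert_2^2\, I_{2\times 2}$, so that $\tilde g_j:=\tfrac{\sqrt2}{\sigma\Vert X_j\Vert_2}\,g_j$ is standard normal on $\mathbb{R}^2$.

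Next, the hypothesis $\max_j\hat\Sigma_{jj}\le K$ is exactly $\Vert X_j\Vert_2^2/n\le K$, i.e.\ $\Vert X_j\Vert_2\le\sqrt{Kn}$, which turns the identity $\tfrac2n\vert\langle\varepsilon,X_j\rangle\vert=\tfrac{\sqrt2\,\sigma\Vert X_j\Vert_2}{n}\Vert\tilde g_j\Vert_2$ into the bound $\tfrac2n\vert\langle\varepsilon,X_j\rangle\vert\le\tfrac{\sqrt2\,\sigma\sqrt K}{\sqrt n}\Vert\tilde g_j\Vert_2$. With the prescribed $\lambda_0=\tfrac{\sigma\sqrt K}{\sqrt n}\bigl(2+\sqrt{t^2+2\log p}\bigr)$ and the elementary identity $\bigl(2+\sqrt{t^2+2\log p}\bigr)/\sqrt2=\sqrt2+\sqrt{t^2/2+\log p}$, this yields the inclusion $\{\tfrac2n\vert\langle\varepsilon,X_j\rangle\vert\ge\lambda_0\}\subseteq\{\Vert\tilde g_j\Vert_2\ge\sqrt2+\sqrt{t^2/2+\log p}\}$. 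A standard concentration inequality for the norm of a standard Gaussian vector in $\mathbb{R}^2$ (see, e.g., the Gaussian tail estimates in \cite{Foucart.2013}), applied with deviation parameter $r=\sqrt{t^2/2+\log p}$, bounds the probability of this event, and a union bound over $j\in[p]$ followed by passing to the complement gives $\mathbb{P}\bigl(\max_{j\in[p]}\tfrac2n\vert\langle\varepsilon,X_j\rangle\vert\le\lambda_0\bigr)\ge1-2e^{-t^2/4}$. Finally, substituting $t^2=8\log p$ gives $\sqrt{t^2+2\log p}=\sqrt{10\log p}$ and $2e^{-t^2/4}=2p^{-2}$, which is the asserted special case.

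The argument has no genuine obstacle; it is a routine Gaussian-tail estimate. The only points requiring care are (i) getting the variance of the complex Gaussian right, where the factor $\tfrac12$ from the real and imaginary parts enters the normalization of $\tilde g_j$; (ii) using $\Vert X_j\Vert_2\le\sqrt{Kn}$ in the direction that enlarges the event before applying the tail bound; and (iii) invoking a version of the Gaussian-norm concentration inequality whose constants are compatible with the stated form of $\lambda_0$.
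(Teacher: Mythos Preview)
Your proposal is correct and follows essentially the same route as the paper's proof: identify $\vert\langle\varepsilon,X_j\rangle\vert$ with the Euclidean norm of the two-dimensional real Gaussian $g_j=(\Re\langle\varepsilon,X_j\rangle,\Im\langle\varepsilon,X_j\rangle)^T$, normalize to a standard Gaussian $\tilde g_j$ using the variance computation $\var(\Re\langle\varepsilon,X_j\rangle)=\tfrac{\sigma^2}{2}\Vert X_j\Vert_2^2$, invoke $\Vert X_j\Vert_2\le\sqrt{Kn}$, apply the Gaussian-norm tail bound from \cite{Foucart.2013}, and union bound over $j$. The only cosmetic discrepancy is that your final substitution yields $2p^{-2}$ while the stated special case is $p^{-2}$; the paper's own proof in fact obtains the sharper $e^{-t^2/4}$ (without the factor~$2$), which gives exactly $p^{-2}$.
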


Now, we are able to state the RIP-based result of the $\ell_2$ oracle inequality.

\begin{lemma}\label{le:l2oracleinequality}
Suppose that the RIP of order $t:=(36\cdot \frac{1+\delta}{1-\delta}+1) s_0$ holds for $\frac{1}{\sqrt{n}}X$ with constant $\delta_t<\delta<1$. Assume further that $\max\limits_{1\leq j\leq p}\frac{2}{n}\vert\langle\varepsilon, x_j\rangle\vert\leq\lambda_0=\frac{\sigma\sqrt{K}}{\sqrt{n}}(2+\sqrt{10\log p})$. Then, for $\lambda\geq 2\lambda_0$ and $p\geq 3$, it holds that
$$\Vert \hat{\beta}-\beta^0\Vert_2\leq C_{\delta}^{\sigma}\sqrt{K}\cdot \frac{\sqrt{s_0\log p}}{\sqrt{n}}$$
with $C_{\delta}^{\sigma}=\frac{\sqrt{640}\sigma}{1-\delta_{(1+C_{\delta})s_0}}$.
\begin{proof}
By \eqref{eq:support_LASSO_cardinality} the difference vector $\hat{\beta}-\beta^0$ is $t=(1+ C_{\delta})s_0$-sparse. From the RIP and Theorem \ref{thm:LASSOforRIP}, it follows that
\begin{align*}
\Vert \hat{\beta}-\beta^0\Vert_2^2\leq \frac{\Vert X(\hat{\beta}-\beta^0)\Vert_2^2}{(1-\delta_t)n}\leq \frac{16\lambda_0^2s_0}{(1-\delta_t)^2}=\frac{16\sigma^2K(2+\sqrt{10\log p})^2s_0}{(1-\delta_t)^2n}.
\end{align*}
For $p\geq 3$, we obtain
$$\frac{16\sigma^2K(2+\sqrt{10\log p})^2s_0}{(1-\delta_t)^2n}\leq\frac{16\sigma^2K(2\cdot\sqrt{10\log p})^2s_0}{(1-\delta_t)^2n}=\frac{640\sigma^2Ks_0\log p}{(1-\delta_t)^2n}.$$
\end{proof}
\end{lemma}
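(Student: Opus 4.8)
The plan is to assemble three facts that are all in place by the time this lemma is stated: the support-size bound \eqref{eq:support_LASSO_cardinality} for the LASSO minimizer, the lower bound in the RIP (Definition \ref{defi:RIP}) for $\tfrac{1}{\sqrt n}X$, and the prediction/$\ell_1$ oracle inequality of Theorem \ref{thm:LASSOforRIP}. The observation that makes the RIP usable is that the error vector is itself sparse: since $\supp(\hat\beta-\beta^0)\subseteq S_0\cup\hat S$ and \eqref{eq:support_LASSO_cardinality} gives $|\hat S|\le C_\delta s_0$, the vector $\hat\beta-\beta^0$ is $t$-sparse with $t=(1+C_\delta)s_0$, which is precisely the order for which $\tfrac1{\sqrt n}X$ is assumed to be an RIP matrix with constant $\delta_t<\delta$.

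First I would apply the RIP lower bound to the $t$-sparse vector $\hat\beta-\beta^0$ to obtain
\[
\Vert\hat\beta-\beta^0\Vert_2^2\;\le\;\frac{1}{1-\delta_t}\cdot\frac1n\Vert X(\hat\beta-\beta^0)\Vert_2^2 .
\]
Next I would bound the prediction error on the right by Theorem \ref{thm:LASSOforRIP}, whose hypotheses are exactly the ones assumed here, namely $\tfrac1n\Vert X(\hat\beta-\beta^0)\Vert_2^2\le\tfrac{4\lambda^2 s_0}{1-\delta_t}$, and then take the admissible choice $\lambda=2\lambda_0$, which yields $\Vert\hat\beta-\beta^0\Vert_2^2\le\tfrac{16\lambda_0^2 s_0}{(1-\delta_t)^2}$. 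Finally I would substitute $\lambda_0=\tfrac{\sigma\sqrt K}{\sqrt n}(2+\sqrt{10\log p})$ and simplify: for $p\ge 3$ one has $2\le\sqrt{10\log p}$, hence $(2+\sqrt{10\log p})^2\le 40\log p$, giving $\Vert\hat\beta-\beta^0\Vert_2^2\le\tfrac{640\,\sigma^2 K s_0\log p}{(1-\delta_t)^2 n}$; taking square roots produces the stated bound with $C_\delta^\sigma=\tfrac{\sqrt{640}\,\sigma}{1-\delta_t}$ and $\delta_t=\delta_{(1+C_\delta)s_0}$.

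This argument is essentially a routine chaining, so I do not expect a genuine obstacle; the one place where care is needed is the sparsity bookkeeping, since the RIP controls $\Vert\cdot\Vert_2$ in terms of $\Vert X\cdot\Vert_2$ only on sparse vectors, which is exactly why the proof must route through \eqref{eq:support_LASSO_cardinality} and why the RIP order is inflated by the factor $1+C_\delta$ relative to $s_0$. The remaining subtlety is purely arithmetic: tracking that $(1-\delta_t)$ enters squared (one factor from converting $\Vert X(\hat\beta-\beta^0)\Vert_2$ back to $\Vert\hat\beta-\beta^0\Vert_2$ via the RIP, one from Theorem \ref{thm:LASSOforRIP}) and that the hypothesis $p\ge 3$ is precisely what allows the additive constant $2$ to be absorbed into the $\sqrt{\log p}$ term.
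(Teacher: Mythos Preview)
Your proposal is correct and follows the paper's proof essentially step for step: both invoke \eqref{eq:support_LASSO_cardinality} to make $\hat\beta-\beta^0$ into a $t$-sparse vector, apply the RIP lower bound, feed in the prediction bound of Theorem \ref{thm:LASSOforRIP} with $\lambda=2\lambda_0$, and then use $p\ge 3$ to absorb the additive $2$ into $\sqrt{10\log p}$. Your commentary on why two factors of $(1-\delta_t)$ appear and why the RIP order is inflated to $(1+C_\delta)s_0$ is also exactly right.
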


\subsection{Proofs of statements in the main body}\label{subsec:proofs_main_body}
In the following we provide the proofs that are omitted in the main body of this paper. 
%First, we prove the extension of our main result to group LASSO, Theorem \ref{thm:mainresult_group_LASSO}. Then, 
We provide proofs of two statements of Section \ref{sec:confidence regions}, i.e. \eqref{eq:confidence interval} and Lemma \ref{C13Re}.
% \subsubsection{Proof of Theorem \ref{thm:mainresult_group_LASSO}}\label{subsec:proof_main_group_LASSO}
% \begin{proof}
% It remains to show the bound on the bias term $R_g$. Similar to the proof of Theorem \ref{thm:mainresult} Bernstein's inequality yields
% $$\mathbb{P}\left(\frac{1}{\sqrt{n}}\left\vert\sum\limits_{i=1}^nZ_i\right\vert\geq t\right)\leq 4\exp\left(-\frac{t^2n/4}{4K^2\Vert u\Vert_2^2n+\sqrt{2}t\sqrt{n}K^2/3\Vert u\Vert_1}\right),$$
% where $u:=\hat{\beta}_g-\beta^0$ is the difference of the group LASSO solution and the ground truth. In view of the oracle inequalities in \eqref{eq:bound_group_lasso_1_norm} and \eqref{eq:bound_group_lasso_2_norm} and the same calculations as in the proof of Theorem \ref{thm:mainresult} we are able to choose for $n\geq C_1 s_0d_{\max}\log p^2$
% $$t:=C(\sigma, e_n)\frac{\sqrt{s_gd_{\max}}\log p}{\sqrt{n}},$$
% where $C(\sigma, e_n)$>0 depends only on $\sigma, e_n, M$ and $K$.
% %$$t:=\frac{2\sqrt{2}}{3}\left(1+cM\left(K\sigma\right)^2+\frac{\sqrt{2}}{3}K\sqrt{c e_n\sqrt{M}\sigma}\right)\frac{\sqrt{s_gd_{\max}}\log(p)}{\sqrt{n}}.$$
% \end{proof}

\subsubsection{Proof of Equation \eqref{eq:confidence interval}}\label{subsec:proof_conf_int_valid}
\begin{proof}
We start by estimating the probability that the real part of $\beta_i^0$ belongs to the confidence interval. Recalling that $\delta(\alpha):=\frac{\hat{\sigma}\cdot\hat{\Sigma}_{ii}^{1/2}}{\sqrt{2n}}\Phi^{-1}(1-\alpha/2)$, we obtain
\begin{align*}
&\mathbb{P}\left(\Re(\beta_i^0)\in J_i^{\Re}(\alpha)\right)\\
&=1-\left(\mathbb{P}\left(\Re(\beta_i^0)\leq\Re(\hat{\beta}_i^u)-\delta(\alpha)\right)+\mathbb{P}\left(\Re(\beta_i^0)\geq\Re(\hat{\beta}_i^u)+\delta(\alpha)\right)\right)\\
&=1-\left(1-\mathbb{P}\left(\Re(\hat{\beta}^u_i)-\Re(\beta_i^0)\leq\delta(\alpha)\right)\right)-\mathbb{P}\left(\Re(\hat{\beta}_i^u)-\Re(\beta_i^0)\leq -\delta(\alpha)\right)\\
&=\mathbb{P}\left(\frac{\sqrt{n}(\Re(\hat{\beta}_i^u)-\Re(\beta_i^0))}{\hat{\sigma}/\sqrt{2}\hat{\Sigma}_{ii}^{1/2}}\leq\Phi^{-1}(1-\alpha/2)\right)\\
&-\mathbb{P}\left(\frac{\sqrt{n}(\Re(\hat{\beta}_i^u)-\Re(\beta_i^0))}{\hat{\sigma}/\sqrt{2}\hat{\Sigma}_{ii}^{1/2}}\leq-\Phi^{-1}(1-\alpha/2)\right).
\end{align*}
By taking the limit $n \rightarrow \infty$, Lemma \ref{C13Re} gives
\begin{align*}
\lim\limits_{n\to\infty}\mathbb{P}\left(\Re(\beta_i^0)\in J_i^{\Re}(\alpha)\right)&=\lim\limits_{n\to\infty}\mathbb{P}\left(\frac{\sqrt{n}(\Re(\hat{\beta}_i^u)-\Re(\beta_i^0))}{\hat{\sigma}/\sqrt{2}\hat{\Sigma}_{ii}^{1/2}}\leq\Phi^{-1}(1-\alpha/2)\right)\\
&-\lim\limits_{n\to\infty}\mathbb{P}\left(\frac{\sqrt{n}(\Re(\hat{\beta}_i^u)-\Re(\beta_i^0))}{\hat{\sigma}/\sqrt{2}\hat{\Sigma}_{ii}^{1/2}}\leq-\Phi^{-1}(1-\alpha/2)\right)\\
&=1-(\alpha/2)-(1-(1-\alpha/2))=1-\alpha.
\end{align*}
The argument for the imaginary part follows in an analogous way.
\end{proof}

\subsubsection{Proof of Lemma \ref{C13Re}}\label{subsec:proof_of_noise_consistency}
\begin{proof}
We use the decomposition given by \eqref{eq:key_decomposition} in order to obtain
$$\frac{\sqrt{n}(\hat{\beta}_i^u-\beta_i^0)}{\sigma\hat{\Sigma}_{ii}^{1/2}}=\frac{X_i^*\varepsilon}{\sqrt{n}\sigma\hat{\Sigma}_{ii}^{1/2}}+\frac{R_i}{\sigma\hat{\Sigma}_{ii}^{1/2}}$$
Due to Theorem \ref{thm:mainresult} the random variable 
$$W_i^N:=\frac{X_i^*\varepsilon}{\sqrt{n}\sigma\hat{\Sigma}_{ii}^{1/2}}=\frac{W_i}{\sigma\hat{\Sigma}_{ii}^{1/2}}\sim\mathcal{CN}(0,1).$$
is Gaussian with mean $\mathbb{E}[W_i^N]=\frac{1}{\sigma\hat{\Sigma}_{ii}^{1/2}}\mathbb{E}[W_i]=0$ and variance $\var(W_i^N)=\frac{\var(W_i)}{\sigma^2\hat{\Sigma}_{ii}}=1$. By definition of a complex random vector, $\Re(W_i^N)\sim\mathcal{N}(0,1/2)$ and therefore
$$\Re\left(\sqrt{2}W_i^N\right)\sim\mathcal{N}(0,1).$$
Moreover,
\begin{align*}
\mathbb{P}\left(\frac{\sqrt{n}\Re(\hat{\beta}_i^u-\beta_i^0)}{\hat{\sigma}/\sqrt{2}\hat{\Sigma}_{ii}^{1/2}}\leq x\right)&=\mathbb{P}\left(\frac{\sigma}{\hat{\sigma}}\sqrt{2}\Re\left(W_i^N\right)+\frac{\sqrt{2}\Re(R_i)}{\hat{\sigma}\hat{\Sigma}_{ii}^{1/2}}\leq x\right)\\
&\leq \mathbb{P}\left(\frac{\sigma}{\hat{\sigma}}\sqrt{2}\Re\left(W_i^N\right)+\frac{\sqrt{2}\Re(R_i)}{\hat{\sigma}\hat{\Sigma}_{ii}^{1/2}}\leq x + \epsilon\right) \\
&+ \mathbb{P}\left(\frac{\sqrt{2}\Re(R_i)}{\hat{\sigma}\hat{\Sigma}_{ii}^{1/2}} \leq -\epsilon\right)\\
&\leq\mathbb{P}\left(\frac{\sigma}{\hat{\sigma}}\sqrt{2}\Re\left(W_i^N\right)\leq x+\epsilon\right)+\mathbb{P}\left(\frac{\sqrt{2}\vert R_i\vert}{\hat{\sigma}\hat{\Sigma}_{ii}^{1/2}}\geq \epsilon\right).
\end{align*}

The remainder of the proof follows the same arguments as the real case described in \cite[Lemma 13]{Javanmard.2014}.
\end{proof}

% \clearpage

% \listoffigures
% \listoftables

\end{document}